\newcommand{\remove}[1]{}
\newtheorem{definition}{Definition}[section]
\newtheorem{theorem}{Theorem}[section]
\newtheorem{lemma}[theorem]{Lemma}
\DeclareMathOperator{\ddim}{ddim}
\DeclareMathOperator{\argmin}{argmin}
\DeclareMathOperator{\diam}{diam}
\DeclareMathOperator{\roott}{root}
\DeclareMathOperator{\levell}{level}
\DeclareMathOperator{\wt}{\mathsf{wt}}
\DeclareMathOperator{\median}{med}
\DeclareMathOperator{\cntr}{cntr}
\def\topp{\mathrm{top}}
\def\summ{\mathsf{sum}}
\newcommand{\aset}[1]{\{#1\}}
\newcommand{\ceil}[1]{\lceil{#1}\rceil}
\def\iroot{\ensuremath{{i_{\mathrm{root}}}}}
\def\ifin{\ensuremath{{i^*}}}
\def\OPT{\ensuremath{{\mathrm{OPT}}}}
\def\ALG{\ensuremath{{\mathrm{ALG}}}}
\def\veps{\ensuremath{\varepsilon}}
\newcommand{\comment}[1]{}
\newcommand{\junk}[1]{}
\begin{document}
{{
\title{Faster Clustering via Preprocessing%
\thanks{This work was supported in part by The Israel Science Foundation
(grant \#452/08), by a US-Israel BSF grant \#2010418,
and by the Citi Foundation.}
}
\author{
Tsvi Kopelowitz \qquad Robert Krauthgamer
\\ Weizmann Institute of Science
\\ \texttt{\{tsvi.kopelowitz,robert.krauthgamer\}@weizmann.ac.il}
}


\maketitle

\thispagestyle{empty}
\setcounter{page}{0}

\begin{abstract}
We examine the efficiency of clustering a set of points,
when the encompassing metric space may be preprocessed in advance.
In computational problems of this genre,
there is a first stage of preprocessing, whose input is a collection of points $M$;
the next stage receives as input a query set $Q\subset M$,
and should report a clustering of $Q$ according to some objective,
such as $1$-median, in which case the answer is
a point $a\in M$ minimizing $\sum_{q\in Q} d_M(a,q)$.

We design fast algorithms that approximately solve such problems
under standard clustering objectives like $p$-center and $p$-median,
when the metric $M$ has low doubling dimension.
By leveraging the preprocessing stage, our algorithms achieve
query time that is near-linear in the query size $n=|Q|$,
and is (almost) independent of the total number of points $m=|M|$.
\end{abstract}

\newpage
\section{Introduction}
Clustering is a ubiquitous computational task of prime importance
in numerous applications and domains, including
machine learning, image processing, and bioinformatics.
While the clustering problem has several variations,
it often falls within the following framework of \emph{metric clustering}:
given a set of points $Q$ in a metric space $(M,d)$,
choose a set of centers $C$ (in that same metric space)
so as to minimize some objective function of $Q$ and the centers $C$.
For example, in the \emph{$p$-median} problem,
the goal is to find a set of $p$ centers $C\subseteq M$ that minimizes the objective
$$\textstyle
  \median(Q,C):= \sum_{q\in Q} d(q,C),
$$
where we define $d(q,C):=\min_{c\in C} d(q,c)$.

Our focus here is on understanding whether an initial preprocessing stage
can speed up the process of (metric) clustering.
Concretely, we are interested in algorithms for efficient clustering of $Q$
when the metric $M$ can be preprocessed in advance.
Throughout, we denote the number of \emph{center candidates} by $m=|M|$,
and the number of \emph{query points} by $n=|Q|$.
The goal is to answer queries with time close to linear in $n$
and (almost) independent of $m$.
To our knowledge, no previous research on (metric) clustering problems
has addressed the issue of preprocessing.
Past work has largely focused on the offline problem,
where the entire input is given at once,
either because $M$ is implicit (e.g., a Euclidean space)
or because $M$ is given together with $Q$ (called discrete centers).
Other past work studied the online version,
where points arrive one by one (the data-stream model).

Clustering with preprocessing can model, for example, the following scenario.
Consider a huge corpus of documents ($M$)
with distances between the documents ($d$) defining a metric space.
Given a relatively small subset of the documents ($Q$),
we may wish to quickly cluster them using centers from the corpus.
Since preprocessing needs to be done only once,
it has the benefit that even a huge corpus can be processed,
by pooling together many machines or by running it for several days.

The first problem we consider is the $p$-median problem defined above.
A second problem of interest, called \emph{$p$-center}, is to find a set of $p$ centers $C\subseteq M$ that minimizes the objective
$$ \textstyle
  \cntr(Q,C):= \max_{q\in Q}\{d(q,C)\}.
$$
Observe that when $n=1$ and $p=1$, both the $p$-median and $p$-center problems
receive a single input point $q$
and seek the point of $M$ that is closest to $q$,
which is precisely the famous nearest neighbor search (NNS) problem.
Even for this special case of NNS (i.e., $n=p=1$), Krauthgamer and Lee~\cite{KL04b} have shown
that achieving approximation factor better than $\frac{7}{5}$ in a general metric
requires query time that depends on the doubling dimension of the metric,
regardless of the preprocessing.
(Throughout, we denote the doubling dimension by $\ddim=\ddim(M)$;
see Section~\ref{sec:prelims} for a formal definition.)
It thus follows that for general $n$ and $p$, one must consider metrics $M$
whose doubling dimension is bounded, and we indeed assume as such.
We also assume that computing the distance between two points takes $O(1)$ time.
This whole approach follows an established line of research that covers
a host of problems
including nearest neighbor search \cite{KL04,HM06,BKL06,CG06}
as well as routing, distance estimation, the traveling salesman problem
and classification
see e.g. \cite{AGGM06,KRX08}, \cite{KSW09,Slivkins07},
\cite{Talwar04,BGK12}, and \cite{BLL09,GKK10}.

\subsection{Results}
\label{sec:results}

We provide the first clustering algorithms that leverage a preprocessing stage
to obtain improved query time.
Specifically, we design algorithms that compute $(1+\veps)$--approximation
for the $p$-median and $p$-center problems;
the precise time and space bounds are presented in Table~\ref{tab:results}.
Observe that the query time is near-linear in $n$
and is (almost) independent of $|M|$,
assuming the other parameters ($\veps^{-1}$, $p$ and $\ddim$) are small.
For sake of simplicity, we let our results depend on the aspect ratio of $M$,
denoted $\Delta=\Delta(M)$.
Such bounds can usually be refined,
replacing e.g.\ $\log\Delta$ terms with $\log n$,
by adapting our algorithms using known techniques and data structures,
but it would clutter the presentation of our main ideas.
Interestingly, we use essentially the same data structure for all problems solved.
All our space bounds are expressed in terms of machine words, which as usual
can accommodate a pointer to a data point or a single distance value.

\begin{table}[htb]
\begin{center}
\begin{tabular}{|l|l|l|l|}
\hline
Problem
  & \multicolumn{1}{l|}{\ Preprocessing time}
  & \multicolumn{1}{l|}{\ Space}
  & \multicolumn{1}{l|}{\ Query Time}
  \\
\hline
\hline
$1$-median
  & \multirow{2}{*}{ $2^{O(\ddim)} m\log \Delta \log\log \Delta$ }
  & \multirow{2}{*}{ $2^{O(\ddim)} m$ }
  & \multirow{2}{*}{ $O(n\log n + 2^{O(\ddim)}\log \Delta + \veps^{-O(\ddim)})$ }
\\
  \, Theorem \ref{thm:eff1med}
  & & & \\
\hline
$p$-median
  & \multirow{2}{*}{ $2^{O(\ddim)} m\log \Delta\log\log \Delta$ }
  & \multirow{2}{*}{ $2^{O(\ddim)} m$ }
  & { $O(n\log n + {\veps ^{-O(p\cdot \ddim)}} (p\log n)^{O(p)} $ }
\\
  \, Theorem \ref{thm:p_median}
  &
  &
  & \,$\mbox{} + \veps^{-O(\ddim)}(p\log n)^{O(1)} \cdot \log \log \log \Delta) $ \\
\hline
$p$-center
  & \multirow{2}{*}{ $2^{O(\ddim)} m\log \Delta\log\log \Delta$ }
  & \multirow{2}{*}{ $2^{O(\ddim)} m$ }
  & $O(n\log n + p\log \log \log \Delta $
\\
  \, Theorem \ref{thm:p_center}
  &
  &
  & \,$\mbox{} + p^{p+1}\cdot \veps^{-O(p\cdot \ddim)})$\\
\hline
\end{tabular}
\end{center}
  \caption{Our algorithms for $(1+\veps)$--approximation of clustering problems.}
  \label{tab:results}
\end{table}

We point two possible extensions of our results.
First, one may ask about updates to $M$, i.e., inserting and deleting points.
Our data structure is similar to previous work on NNS,
and thus we expect the methods known there (see e.g. \cite{KL04})
to apply also in our case, although we did not check all the details.
Second, we assume throughout that $Q\subset M$.
One may remove this restriction, possibly adapting the definition of
$\ddim$ and $\Delta$ to refer to $M\cup Q$.
Again, we have not checked the details, but we expect this is possible
by roughly applying the procedure of inserting $Q$ to $M$
before executing the query $Q$,
except that now we cannot use points of $Q\setminus M$ as centers.

Our bounds for clustering with preprocessing
are in a new model that was not studied before,
and thus cannot be compared directly with previous work.
But of course, all of our results immediately imply also
algorithms for the respective offline problems,
where the input includes both $Q$ and $M$.
Since our preprocessing time is near linear in $m$,
these are pretty efficient as well.
Even for the offline problems, our results are new,
as we are not aware of previous work on clustering ($p$-median and $p$-center)
in metric spaces of bounded doubling dimension.
Notice that a naive algorithm,
which exhaustively tries all possible sets of centers (with no preprocessing)
finds an optimal solution but takes runtime ${m \choose p}np$,
which is significantly higher even for $p=1$.
Another possible comparison is with the respective Euclidean problems;
this is only for the sake of analogy and is discussed in Section \ref{sec:related}.
We also point out that our 1-median algorithm is deterministic,
while previous algorithms achieving $(1+\veps)$--approximation for 1-median,
even in Euclidean metrics, are randomized \cite{Indyk99,BHI02,KSS10}.

\subsection{Techniques}
\label{sec:techniques}

Our algorithms build on several techniques from prior work.
The common algorithmic paradigm for the NNS problem
in metrics with low doubling dimension~\cite{KL04,HM06,BKL06,CG06}
(which in our context is just the special case $p=n=1$),
is to look for an answer (center point)
by restricting attention to a sequence of search balls,
whose radii are decreasing, usually by a constant factor.
When the ball's radius becomes small enough,
the algorithms revert to exhaustively trying
a small set of candidates inside the search ball,
with the property that at least one candidate in the set
must be a good enough approximation to an optimal answer.
Such a set of candidates is sometimes called a \emph{centroid set} \cite{HM04}.
We follow this paradigm, but extend and modify it for our needs.

We further borrow a technique of constructing a \emph{coreset} \cite{BHI02},
which essentially assigns points in $Q$ to a small set of ``representatives'' $R$,
so that solving the clustering problem on the weighted set $R$
provides a good approximation for clustering $Q$.
The weight of a representative $r\in R$ is simply
the number of query points $q\in Q$ assigned to it.
In contrast to previous work on coresets and on centroid sets,
we have the leverage of preprocessing $M$,
and our challenge is to quickly construct such sets for $Q$ during query time.

We also devise a new technique
(new at least in the context of our clustering problems)
of ``projecting'' the data structure
constructed for $M$ (during preprocessing) onto the query set $Q\subset M$.
While a data structure for $Q$ can be constructed from scratch
in time $2^{O(\ddim)}n\log n$,
the projection can be constructed even faster, in time $O(n\log n)$.
But even more importantly,
the projected data structure inherently provides hooks into the larger set $M$,
and these hooks are crucial for our goal of locating centers in $M$,
which is (generally) a much richer point set than $Q$.

\subsection{Related Work}
\label{sec:related}

Metrics with bounded doubling dimension are known to generalize
Euclidean metrics of fixed-dimension.
Below we briefly mention known algorithms achieving $(1+\veps)$--approximation
for the $p$-median and $p$-center problems in Euclidean spaces
of fixed dimension $D$.
These are only intended to be a crude analogy to our results,
possibly providing yet another perspective.
Often, different tradeoffs are possible
between the number of centers $p$ and the dimension $D$.
We do not discuss approximation algorithms for general metrics,
as these do not achieve $(1+\veps)$--approximation.

We start with the $p$-median problem.
Arora, Raghavan and Rao~\cite{ARR98} were the first to obtain
$(1+\veps)$--approximation,
via a divide-and-conquer approach based on quadtrees and dynamic programming.
This approach was later improved by Kolliopoulos and Rao~\cite{KR07}
and by Badoiu, Har-Peled, and Indyk~\cite{BHI02}.
Har-Peled and Mazumdar~\cite{HM04}
added another technique of finding coresets, and obtained running time
$O(n+p^{p+2}\veps^{-(2D+1)p}\log^{p+1}n\log^p\frac{1}{\veps})$.
Kumar, Sabharwal, and Sen~\cite{KSS10} showed a different approach,
based on finding centroid sets, that runs in time $2^{(p/\veps)^{O(1)}}n$.
These approaches were later combined by Chen \cite{Chen06},
who obtains improved runtime when the dimension $D$ is large.

For the $p$-center problem, Agarwal and Procopiuc~\cite{AP02}
obtain $(1+\veps)$--approximation in time
$O(n\log p) + (p/\veps)^{O(D p^{1-1/D})}$.
Badoiu, Har-Peled, and Indyk~\cite{BHI02}
show an algorithm that runs in time $p^{O(p/\veps^2)}Dn$.

\subsection{Preliminaries}
\label{sec:prelims}

Let $(M,d)$ be a finite metric space.
The \emph{doubling dimension} of $M$, denoted $\ddim=\ddim(M)$,
is the smallest $k>0$ such that every ball (in $M$) can be covered
by $2^k$ balls of half the radius.
We denote the \emph{diameter} of the metric by $\diam(M):=\max_{x,y\in M} d(x,y)$,
and its \emph{aspect ratio} (or \emph{spread}) by
$ \Delta(M):=\frac{\max\aset{d(x,y):\ x,y\in M} }{ \min\aset{d(x,y):\ x\neq y\in M} }$.

Let $r>0$. An \emph{$r$-net} of a point set $S\subset M$
is a subset $N\subseteq S$ satisfying:
(a) packing property: for all ${x,y\in N}$ we have $d(x,y)\ge r$; and
(b) covering property: for all ${x\in S}$ we have $d(x,N)< r$.
\footnote{Another common definition has a strict inequality in condition (a)
rather than in (b).
Our analysis can be adapted to this definition by changing constants.
}
Such a net always exists, and can be constructed greedily
by considering the points one by one in an arbitrary order.

\section{Our Data Structure}\label{sec:DS}

\paragraph{The Net Hierarchy.}

Our data structure is based on a lot of previous work on algorithms
and data structures for doubling metrics,
in particular for nearest neighbor search \cite{KL04,HM06,BKL06,CG06}.
But despite the overall similarity, some technical details
differ slightly from each of those papers.
Let $i_\topp:= \ceil{\log_2\diam(M)}$, and assume for simplicity
that the minimum interpoint distance in $M$ is $\min_{x\neq y\in M} d(x,y) = 1$
(otherwise we need to introduce $i_{\mathrm{bot}}$ as its logarithm).

Let $Y_0=M$, and for $i=1,\ldots,i_\topp$ let $Y_{i}$ be a $2^i$-net of $Y_{i-1}$.
Note that it is not necessarily a $2^i$-net of $M$,
but it does cover $M$ indirectly via the nets at lower levels.
We sometimes refer to $Y_i$ as the \emph{level} $i$ net.
By definition, $Y_i \subseteq Y_{i-1}$,
so when we refer to $y\in Y_i$ we mean the copy of $y$ which is in $Y_i$.
These nets form a natural hierarchy, with $Y_0$ being on the bottom,
and a singleton $Y_{i_\topp}=\aset{y_\topp}$ at the top of the hierarchy.
This hierarchy may be represented by a directed acyclic graph $G_M$,
whose vertex set is the union of all the nets $Y_i$
(so a point $y\in M$ may have multiple copies in this graph),
and with an arc from every $y_i\in Y_i$ to every $y_{i-1}\in Y_{i-1}$ for which $d(y_i,y_{i-1})\leq 2^{i}$.
We prefer not to maintain the graph $G_M$ explicitly;
instead, our data structure has two main components,
a tree $T$ and a collection of $c$-lists, which are defined below.

\paragraph{The Tree $T$.}

The hierarchy is represented by a tree that is defined as follows.
First construct $G_M$ as explained above. Next, every node in $G_M$ keeps only one of its incoming arcs
that is chosen arbitrarily except for giving higher priority
to the arc coming from another copy of the same point of $M$ (if it exists).
The surviving arcs define (when ignoring the edge orientations)
a tree, denoted $T=T_M$, which is rooted at $y_\topp$.
Because of the prioritization rule,
whenever a point $y\in Y_{i}$ has only one child in the tree $T$,
this child must correspond to the same point $y$ but in $Y_{i-1}$.
Thus, every non-branching path in $T$ consists of
copies of the same point in $M$ in consecutive nets.
By contracting each such path while recording
the range of nets in which it participates,
we can store the tree $T$ more compactly, using only $O(m)$ space
(recall the tree has $m$ leaves).
However, as explained a bit later,
we actually employ a more limited compaction,
that results with a weaker space bound $2^{O(\ddim)}m$.

We supplement $T$ with a data structure that supports constant-time
lowest common ancestor (LCA) queries using an additional $2^{O(\ddim)}m$ words
\cite{HT84} (see also~\cite{BF00} for a simplified version).
For the $1$-center and $p$-center algorithms, we supplement $T$ also
with a data structure for weighted level ancestor queries \cite{FM96,KL07},
which locate an ancestor of $q\in M$ at level $i$ (i.e., in $Y_i$) in $O(\log \log \log \Delta)$ time. The preprocessing for the weighted level ancestor queries requires $2^{O(\ddim)}m\log \log \log \Delta$ time.

\paragraph{The $c$-Lists.}
For some constant $c\ge1$ that will be determined later,
we maintain for every net point $y\in Y_i$ a so-called \emph{$c$-list}
$$L_{y,i,c}:=\{z\in Y_{i-1}: d(y,z)\leq c\cdot 2^i\}.$$
The $c$-lists allow us to traverse the ball of radius $c2^i$ in the next level of the hierarchy. If $c=1$, this list can be viewed as the set of arcs leaving $y\in Y_i$ in $G_M$.
When $c\ge 1$, these lists can be used (via straightforward filtering)
to recover the arcs of $G_M$.
Since $c$ is an absolute constant,
the size of each $c$-list is at most $c^{O(\ddim)}\le 2^{O(\ddim)}$
(see e.g.~\cite{GKL03,KL04}).
We do not store the $c$-list explicitly for every point in every net,
as this might require too much space.
We say that a $c$-list of a point $y\in Y_i$ is trivial if it has size $1$,
in which case the only point in this list must be the copy of $y$ in $Y_{i-1}$.
We store only nontrivial $c$-lists,
the number of which is at most $2^{O(\ddim)}m$~\cite[Theorem 2.1]{KL04}.
It follows that the total space usage for the $c$-lists is $2^{O(\ddim)}m$.

The nontrivial $c$-lists also limit the compaction of the tree $T$ as follows.
We compact $T$ only along paths whose nodes
are both non-branching and have trivial $c$-lists.
By the above bound on the number of nontrivial $c$-lists,
our limited compaction of $T$ uses at most $2^{O(\ddim)}m$ space.

\paragraph{Preprocessing time.}
The preprocessing stage first employs the data structure of~\cite{KL04}
to construct the $c$-lists in $2^{O(\ddim)}m\log \Delta \log\log \Delta$ time,
by simply inserting the data points one after the other.
We then scan this structure, from top to bottom,
to introduce direct pointers as dictated by the $c$-lists
(i.e., from $L_{y,i,c}$ to relevant $L_{z,i-1,c}$),
and also construct the tree $T$ in its compacted version.
This entire process takes $2^{O(\ddim)}m\log \Delta \log\log \Delta$ time.

\paragraph{Projected Tree.}

A key tool in getting faster runtime is a projection of the tree $T=T_M$
onto a subset of points $Q\subseteq M$.
The idea is to consider the subtree of $T$ induced by the leaves that are points in $Q$. We will denote this projected tree $T|_Q$.
Observe that this projected tree might be very different from the tree $T_Q$
that would be constructed for $Q$ independently of $M$;
in particular, the latter cannot contain points from $M\setminus Q$.
In the interest of runtime, we maintain the projected tree somewhat implicitly;
what the data structure stores explicitly is a compacted version,
in which all non-branching paths of $T|_Q$ are contracted,
and this clearly uses only $O(n)$ space.
Notice that such a contracted path of $T|_Q$
might contain nodes that \emph{are} branching in $T$,
which possibly correspond to distinct data points in $M$.
Although we have only the compacted version of $T|_Q$ at hand,
we can implement a traversal down the \emph{un-compacted} tree $T|_Q$,
as described in Lemma~\ref{lemma:traversal of uncompacted_projection}.

To aid in the construction of the projected tree, we number the leaves of $T$ in depth-first search (DFS) order.
In addition, for every node $u\in T|_Q$ we denote by $\wt(u)$ the number of leaves in its subtree. We can compute the weight of all the nodes in $T|_Q$ in time $O(n)$ by a simple scan.

\begin{lemma}\label{lem:TQ}
When a query $Q$ is given, the compacted version of $T|_Q$ can be computed in time $O(n\log n)$.
\end{lemma}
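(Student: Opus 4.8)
The plan is to build the compacted $T|_Q$ by reducing the problem to a small number of LCA queries on $T$, exploiting the DFS leaf-numbering that is maintained during preprocessing. First I would identify the leaves of $T$ corresponding to the $n$ query points and sort them by their DFS number; this costs $O(n\log n)$ and is the only step that is not linear, which matches the claimed bound. The key structural observation is the standard one about induced subtrees: the set of nodes of the compacted $T|_Q$ consists exactly of the $n$ query leaves together with the pairwise LCAs of DFS-consecutive query leaves (in sorted order). Thus after sorting I would compute, for each consecutive pair in the sorted list, their LCA in $T$ using the constant-time LCA structure from \cite{HT84}; this yields at most $2n-1$ candidate nodes in total.

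Next I would assemble these $O(n)$ nodes into the tree. Each candidate node from $T$ comes with a depth (or level) in $T$, so I would sort the combined list of query leaves and LCA nodes by DFS number, breaking ties so that ancestors precede descendants, and then run a single left-to-right scan maintaining a stack of ``currently open'' ancestors, exactly as in the classical Cartesian-tree / RMQ-to-LCA construction: when processing a new node, pop from the stack all nodes that are not its ancestors (using depth comparisons together with the LCA of the stack top and the current node), attach the new node as a child of the resulting stack top, and push it. This scan is $O(n)$. Since consecutive DFS leaves differing only along a non-branching path of $T|_Q$ produce an LCA that coincides with one of the two leaves or with an already-present branching node, the contracted paths are handled implicitly, and what remains is precisely the compacted $T|_Q$. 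Finally I would record, on each contracted edge, the range of net levels it spans, which is read off from the level labels of its two endpoints in $T$, and compute $\wt(\cdot)$ by the $O(n)$ bottom-up scan already described in the text.

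The main obstacle, and the part deserving the most care in the write-up, is justifying that this construction is correct when $T$ has been stored only in its \emph{limited} compacted form (paths that are non-branching \emph{and} have trivial $c$-lists were contracted, but other non-branching paths were not), so that ``nodes of $T$'' along a contracted edge may correspond to several distinct points of $M$. One must check that DFS numbers, depths/levels, and LCAs are still well-defined and queryable on this representation — which they are, since LCA and level-ancestor structures were built on top of exactly this tree — and that the induced-subtree characterization via consecutive-leaf LCAs is unaffected by how $T$ itself is internally compacted. A secondary point to verify is the tie-breaking and stack-maintenance invariant guaranteeing that each produced node gets the correct parent in a single pass; this is routine but should be stated precisely. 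No step requires more than sorting plus linear scans and $O(n)$ constant-time LCA queries, giving the overall $O(n\log n)$ bound.
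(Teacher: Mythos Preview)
Your proposal is correct and is essentially the same construction as the paper's: both sort the query leaves by DFS order, use constant-time LCA queries on consecutive pairs to identify the branching nodes of $T|_Q$, and assemble the tree with an upward/stack traversal that is linear after sorting. The only difference is presentational---the paper phrases it as incremental insertion of $q_i$ with an upward walk from $q_{i-1}$ (which is exactly your stack-popping), whereas you precompute all consecutive LCAs and then do a single stack scan.
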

\begin{proof}[Proof (Sketch)]
To create $T|_Q$, first sort $Q$ according to the DFS numbering. Notice that the order in which points from $Q$ are encountered when performing a DFS on $T$ is exactly the order in which they would be encountered had we performed a DFS on $T|_Q$. Hence the sorted $Q$ gives us this order. We now use LCA queries to simulate the DFS on $T|_Q$, in order to construct $T|_Q$. This is done as follows. Denote by $Q_i\subseteq Q$ the first $i-1$ points in the ordered Q. We scan $Q$ by the DFS order, and when we reach the $i^{th}$ point, say $q_i\in Q$, we assume we have already constructed $T|_{Q_{i-1}}$ on the first $i-1$ points of $Q$. We now wish to insert $q_i$ to this tree to obtain $T|_{Q_i}$. To do this, compute $u=\text{LCA}(q_{i-1},q_i)$. This node $u$, which has to be part of $T|_{Q_i}$, is either on the path from the leaf corresponding to $q_i$ to the root of $T|_{Q_{i-1}}$, or is an ancestor of the root of $T|_{Q_{i-1}}$.
To locate its exact position, we traverse $T|_{Q_{i-1}}$
from the leaf corresponding to $q_{i-1}$ upwards towards the root,
testing at each node $v$ if this is the location into which $u$ should be inserted. The testing at $v$ is performed via an LCA query between $v$ and $u$. If the LCA query returns $u$, then the traversal needs to continue. If not, then $u$ is inserted as a child of $v$, either breaking an edge or inserting a new leaf. The entire process simulates the DFS search on $T|_Q$ and hence takes $O(n)$ time.
\end{proof}

The next lemma is used to traverse the un-compacted tree $T|_Q$
while using the data structure of its compacted version.

\begin{lemma}\label{lemma:traversal of uncompacted_projection}
Given the compacted $T|_Q$, a node $v$ in the un-compacted version of $T|_Q$ together with its weight $\wt(v)$, and node $w$ which is the closest descendant of $v$ in the compacted $T|_Q$ (and could possibly be $v$ itself),
it is possible to locate the children of $v$ in the un-compacted $T|_Q$, together with their weights,
in time $2^{O(\ddim)}$.
\end{lemma}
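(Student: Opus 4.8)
The plan is to reconstruct the local structure of the un-compacted tree $T|_Q$ around $v$ by a short ``mini-DFS'' that only touches net points in a bounded ball of $M$. First I would recall what the data we are given means: $v$ is a node of the un-compacted $T|_Q$, meaning it corresponds to some net point $y\in Y_i$ at some level $i$, and $w$ is the nearest proper descendant of $v$ that actually survives in the compacted $T|_Q$ (i.e., the next branching or leaf node of $T|_Q$ below $v$). Between $v$ and $w$ the tree $T|_Q$ is a path (in the un-compacted version), so $v$ has exactly one child in the un-compacted $T|_Q$ \emph{unless} $v=w$, in which case $v$ may have several children, which are precisely the children of $w$ in the compacted $T|_Q$. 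The non-trivial case is $v\neq w$: then we must find the single child $v'$ of $v$, which is a net point at level $i-1$, lying on the path from $v$ down to $w$, and we must compute $\wt(v')$.

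The key observation is that moving from level $i$ to level $i-1$ in the full tree $T_M$ only involves net points within distance $c\cdot 2^i$ of $y$, i.e., the members of the $c$-list $L_{y,i,c}$, which has size $2^{O(\ddim)}$. Starting from $y\in Y_i$, I would look at its children in $T_M$, obtained by filtering $L_{y,i,c}$ down to those $z\in Y_{i-1}$ with $d(y,z)\le 2^i$ whose surviving tree-arc points to $y$ (this is exactly the $c$-list-to-tree recovery described in the data-structure section); there are at most $2^{O(\ddim)}$ of them. Among these $T_M$-children, exactly one — call it the ``active'' one — has a subtree that still contains a query point of $Q$, because $v$ has a unique child in $T|_Q$; the active child is the $T_M$-ancestor of $w$ at level $i-1$. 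To identify it I would use the LCA/ancestor structure: for each candidate $T_M$-child $z$, test whether $z$ is an ancestor of $w$ in $T_M$ (equivalently, whether $\mathrm{LCA}(z,w)=z$), which takes $O(1)$ each; this picks out $v'$. Its weight is inherited: $\wt(v')=\wt(v)$, since all query leaves under $v$ pass through the unique active child. So a single step down costs $2^{O(\ddim)}$ time, and since the lemma asks only for the children of $v$ (one step), we are done in the case $v\neq w$.

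In the remaining case $v=w$, the node $v$ is branching (or a leaf) in $T|_Q$, so its children in the un-compacted $T|_Q$ are exactly its children in the \emph{compacted} $T|_Q$, which are stored explicitly together with their weights; we simply read them off in $O(1)$ per child, and the number of children is at most the degree in $T_M$, which is $2^{O(\ddim)}$. Combining the two cases, locating the children of $v$ in the un-compacted $T|_Q$ with their weights takes $2^{O(\ddim)}$ time, as claimed. The main obstacle I anticipate is the bookkeeping to correctly recover $T_M$-children from the stored (possibly compacted) representation of $T$ and the $c$-lists, and to argue cleanly that there is exactly one ``active'' $T_M$-child when $v\neq w$ and hence that its weight equals $\wt(v)$; this uses precisely the facts that $w$ is the closest descendant of $v$ in the compacted $T|_Q$ and that along a contracted path $T|_Q$ is a simple path. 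Everything else (the $2^{O(\ddim)}$ bound on $c$-list sizes and tree degrees, $O(1)$ LCA queries) is immediate from the data structure set up earlier in Section~\ref{sec:DS}.
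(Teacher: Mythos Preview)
Your approach matches the paper's: in both proofs one enumerates the children of $v$ in the full tree $T$ (there are at most $2^{O(\ddim)}$ of them) and uses constant-time LCA queries to pick out which ones lie on a path to a known node of the compacted $T|_Q$. Your treatment of the non-branching case $v\neq w$ is essentially identical to the paper's.

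However, your branching case $v=w$ contains a genuine slip. You write that ``its children in the un-compacted $T|_Q$ are exactly its children in the compacted $T|_Q$, which are stored explicitly \ldots; we simply read them off.'' This is false: a child $u$ of $v$ in the \emph{compacted} $T|_Q$ is the next branching (or leaf) node below $v$, and can sit many levels below $v$; the lemma asks for the children in the \emph{un-compacted} $T|_Q$, which must lie in $Y_{i-1}$. The fix is immediate and is exactly what the paper does (and what you already do in the other case): for each compacted child $u$ of $v$, run LCA queries between $u$ and each $T$-child of $v$; all return $v$ except the one identifying the level-$(i{-}1)$ node on the $v$--$u$ path, which is the desired un-compacted child. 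Its weight equals $\wt(u)$, since between that level-$(i{-}1)$ node and $u$ the tree $T|_Q$ is a path. With this correction your argument is complete and coincides with the paper's.
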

\begin{proof}
Suppose first that $v$ is a branching node in $T|_Q$.
For each child $u$ of $v$ in the compacted tree $T|_Q$,
we find the respective child of $v$ in the un-compacted $T|_Q$ as follows:
Run an LCA query between $u$ and every child of $v$ in $T$.
All of those queries will return $v$, except for one query that will return
the required child of $v$ in the un-compacted tree
(the one that is also an ancestor of $u$).
The total time for all such queries is $2^{O(\ddim)}$.

Suppose next that $v$ is a non-branching node,
and hence is not a part of the compacted version of $T|_Q$.
We find the child of $v$ that is an ancestor of $w$ in the un-compacted tree
as follows:
Perform an LCA query between $w$ and each of $v$'s children in $T$. All of those queries will return $v$, except for one query that will not return $v$,
but rather the child of $v$ that is also an ancestor of $w$, denoted by $w'$.
Notice that in this case, $w$ is also the closest descendant of $w'$ in the compacted $T|_Q$, which is needed to continue our traversal and proceed to $w'$.
\end{proof}

\paragraph{Standard Operations on the Net Hierarchy.}

A basic operation in a net hierarchy is a \emph{recursive scan},
where given a point $y_i\in Y_i$,
we scan its $c$-lists and apply the same procedure recursively on these points.
During this process, we discard any duplicates we find
(e.g., if we reach the same point in $Y_{i-2}$ via different points in $Y_{i-1}$).

\begin{definition}\label{def:c_descendant}
Let $y\in Y_i$.
A point $x\in M$ is called a \emph{$c$-list-descendant} of $y$
if it can be reached from $y$ using a recursive scan of the $c$-lists.
We then also say that $y$ is a \emph{$c$-list-ancestor} of $x$.
\end{definition}

\begin{lemma}\label{lemma:net_c_descendant_distance}
Let $y\in Y_i$, and let $x\in Y_j$ be a $c$-list-descendant of $y$.
Then $d(x,y)\leq c2^{i+1} - c2^{j+1}$.
\end{lemma}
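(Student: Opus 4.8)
The plan is to induct on the number of hops in the recursive scan that leads from $y$ to $x$, using the triangle inequality at each step and summing a geometric series. Concretely, suppose $x$ is reached from $y$ via a chain $y=z_i\in Y_i$, $z_{i-1}\in Y_{i-1}$, \ldots, $z_j=x\in Y_j$, where each $z_{k-1}$ lies in the $c$-list $L_{z_k,k,c}$, i.e.\ $d(z_k,z_{k-1})\le c\cdot 2^{k}$. The base case $i=j$ is trivial since then $x=y$ and the claimed bound is $0$. For the inductive step, apply the triangle inequality: $d(x,y)\le d(y,z_{i-1})+d(z_{i-1},x)$. The first term is at most $c\cdot 2^{i}$ by the definition of the $c$-list, and the second term is bounded by the induction hypothesis applied to $z_{i-1}\in Y_{i-1}$ and its $c$-list-descendant $x\in Y_j$, giving $d(z_{i-1},x)\le c2^{(i-1)+1}-c2^{j+1}=c2^{i}-c2^{j+1}$. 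Adding the two bounds yields $d(x,y)\le c2^{i}+c2^{i}-c2^{j+1}=c2^{i+1}-c2^{j+1}$, which is exactly the claimed inequality.

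One subtlety to address is that a recursive scan may reach $x$ through several different chains (this is precisely why duplicates are discarded during the scan), and also that the net indices along a chain need not decrease by exactly one at each arc in general — but here, by construction, a $c$-list of a point in $Y_k$ contains only points of $Y_{k-1}$, so every arc of a scan drops the level by exactly one, and the telescoping above is clean. For the multiple-paths issue, the statement only asserts the existence of \emph{some} witnessing chain (any chain realizing "$x$ is a $c$-list-descendant of $y$"), so it suffices to fix one such chain and run the induction along it; the bound then holds a fortiori for the point $x$ regardless of which chain the scan actually used.

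I do not anticipate a real obstacle here — the argument is a routine telescoping of the triangle inequality — so the only thing worth being careful about is bookkeeping the geometric sum correctly: the total travelled distance is $\sum_{k=j+1}^{i} d(z_k,z_{k-1})\le \sum_{k=j+1}^{i} c2^{k}=c(2^{i+1}-2^{j+1})$, which one could also present directly without induction. I would likely write the induction for cleanliness, but either presentation gives the stated bound $d(x,y)\le c2^{i+1}-c2^{j+1}$.
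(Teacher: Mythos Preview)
Your proof is correct and essentially identical to the paper's: both induct on $i-j$, peel off the first hop $d(y,z_{i-1})\le c\cdot 2^i$, apply the induction hypothesis to the remaining chain, and sum to $c2^{i+1}-c2^{j+1}$. Your alternative direct telescoping $\sum_{k=j+1}^{i} c2^k$ is a fine equivalent presentation.
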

\begin{proof}
The proof is by induction on $i-j$. The base case is trivial. For the inductive step, for every $y\in Y_i$, the distance between $y$ and any of the points in its $c$-list is at most $c2^i$. For every $x\in Y_j$ which is a $c$-list-descendant of $y$, there exists a $y_{i-1}\in L_{y,i,c}$ such that $x$ is a $c$-list-descendant of $y_{i-1}$. Therefore, $d(x,y)\leq d(x,y_{i-1}) + d(y_{i-1},y)\leq c2^i - c2^{j+1} + c2^i = c2^{i+1}  - c2^{j+1}$.
\end{proof}

Notice that a point $x$ can be a $c$-list-descendant of $y$
even if in the tree $T$ it is not a descendant of $y$.
However, ancestors and descendants in $T$ also have bounds on the distance between them.
\begin{lemma}\label{lemma:ancestor_descendant_distance_T}
Let $y$ be an ancestor of $x$ in $T$, such that $y\in Y_i$ and $x\in Y_j$, where $i>j$. Then
$$d(x,y)\leq 2^{i+1} - 2^{j+1} < 2^{i+1}.$$
\end{lemma}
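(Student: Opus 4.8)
The plan is to mimic the inductive argument already used for Lemma~\ref{lemma:net_c_descendant_distance}, but now restricted to the parent-child relation in the tree $T$ rather than the full $c$-list recursion. The key observation is that every arc of $T$ is also an arc of $G_M$, and arcs of $G_M$ go from $y_i\in Y_i$ to $y_{i-1}\in Y_{i-1}$ only when $d(y_i,y_{i-1})\le 2^i$. So a single tree-edge between consecutive levels has length at most $2^i$.

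Concretely, I would argue as follows. Let $y=z_i\in Y_i$ be an ancestor of $x=z_j\in Y_j$ in $T$. Then there is a path in $T$ of the form $z_i,z_{i-1},\dots,z_{j+1},z_j$, where each $z_\ell\in Y_\ell$ and each consecutive pair is a tree-edge, hence a $G_M$-arc, so $d(z_\ell,z_{\ell-1})\le 2^\ell$. (Here I am using that $T$ is built from $G_M$ by deleting all but one incoming arc per node, so parents live exactly one level up; strictly speaking, with the limited path-compaction a tree-edge may span several levels, but along such a contracted path all the nodes are copies of the same point of $M$, so the distance contributed is zero and the telescoping bound is unaffected.) Summing the triangle inequality along this path gives
\[
 d(x,y)\;\le\;\sum_{\ell=j+1}^{i} d(z_\ell,z_{\ell-1})\;\le\;\sum_{\ell=j+1}^{i} 2^\ell \;=\; 2^{i+1}-2^{j+1}\;<\;2^{i+1},
\]
which is exactly the claimed bound. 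Equivalently, one can phrase this as induction on $i-j$: the base case $i=j$ is trivial, and for the step, let $y'\in Y_{i-1}$ be the child of $y$ on the path to $x$; then $d(y,y')\le 2^i$ by the $G_M$-arc condition, and $d(y',x)\le 2^i-2^{j+1}$ by the inductive hypothesis, so $d(x,y)\le 2^i+2^i-2^{j+1}=2^{i+1}-2^{j+1}$.

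I do not anticipate a real obstacle here; this is essentially a specialization of Lemma~\ref{lemma:net_c_descendant_distance} with $c=1$ together with the remark that $T$-edges are $G_M$-arcs. The only point requiring a small amount of care is the interaction with the limited compaction of $T$: one must note that a compacted (non-branching) path consists of copies of a single point of $M$, contributing zero distance, so that the level-by-level telescoping still yields the geometric sum and hence the stated bound.
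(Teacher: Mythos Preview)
Your proposal is correct and follows exactly the same approach as the paper: observe that each parent--child edge in $T$ between levels $\ell$ and $\ell-1$ has length at most $2^\ell$ (being a $G_M$-arc), and then sum along the path from $y$ to $x$ using the triangle inequality to get $\sum_{k=j+1}^{i}2^k = 2^{i+1}-2^{j+1}$. The paper's proof is essentially a one-line version of what you wrote; your extra remark about the compaction is fine but unnecessary, since the lemma is stated for the un-compacted tree.
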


\begin{proof}
The distance between a parent from $Y_i$ and its child in $T$ is at most $2^i$. Therefore, by summation on the path from $y$ to $x$ in $T$, and the triangle inequality $d(x,y)\leq \sum_{k=j+1}^{i} 2^k = 2^{i+1} - 2^{j+1}$.
\end{proof}

The following lemma is crucial to searching the vicinity of a given point
with some refinement factor $\veps>0$,
by executing a recursive scan with limited depth.
This process will be used several times in our various algorithms.
\begin{lemma}[Descendents Search with Refinement $\veps$]
\label{lemma:descendant_search}
Let $y\in Y_i$ and $x\in M$ be such that $d(x,y)\leq 2^{i}$,
and suppose $c\ge3$.
Then for every refinement constant $0<\veps\le1/2$,
a recursive scan of $c$-lists that stops at level ${i-\log({1}/{\veps})}$ will traverse
a point $x'\in Y_{i-\log({1}/{\veps})}$ for which $d(x,x')\leq \veps 2^i$.
In addition, the number of points traversed in such a scan is at most
$\veps^{-O(\ddim)}$.
\end{lemma}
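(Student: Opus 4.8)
The statement splits into two claims — (i) the scan reaches some $x'$ in the target net with $d(x,x')\le\veps 2^i$, and (ii) the scan visits at most $\veps^{-O(\ddim)}$ points — and I would handle them separately.

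For (i), write $\ell:=\log(1/\veps)$, so the scan passes through levels $i,i-1,\dots,i-\ell$ (note $\ell\ge 1$ since $\veps\le 1/2$). The plan is to name, in advance, a point at level $i-\ell$ that the scan is forced to traverse, and to pick it as a \emph{greedy ancestor} of $x$ in the hierarchy: set $a_0:=x$ and, for $k\ge 1$, let $a_k$ be a nearest point of $Y_k$ to $a_{k-1}$. The covering property of $Y_k$ (a $2^k$-net of $Y_{k-1}$) gives $d(a_{k-1},a_k)<2^k$, hence $d(x,a_k)<\sum_{j=1}^k 2^j<2^{k+1}$. I claim $a_{i-1},a_{i-2},\dots,a_{i-\ell}$ are all $c$-list-descendants of $y$: since $d(y,a_{i-1})\le d(y,x)+d(x,a_{i-1})<2^i+2^i=2^{i+1}\le c\cdot 2^i$ (here $c\ge 3\ge 2$ is used), we have $a_{i-1}\in L_{y,i,c}$; and $d(a_k,a_{k-1})<2^k\le c\cdot 2^k$ gives $a_{k-1}\in L_{a_k,k,c}$ for every $k$ in range, so inductively each $a_k$ down to $a_{i-\ell}$ is reached by the recursive scan. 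Setting $x':=a_{i-\ell}\in Y_{i-\ell}$ yields $d(x,x')<2^{i-\ell+1}=2\veps\cdot 2^i$. This matches the claimed bound $\veps 2^i$ up to a factor of $2$, which is absorbed by reading ``level $i-\log(1/\veps)$'' as ``up to an additive constant'' (equivalently, letting the scan descend one further level); this does not affect any of the running-time bounds, so I would simply fold it into the definition of $\ell$.

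For (ii), note that the points the scan visits are precisely the $c$-list-descendants of $y$ in $Y_i,Y_{i-1},\dots,Y_{i-\ell}$. By Lemma~\ref{lemma:net_c_descendant_distance}, a $c$-list-descendant of $y$ at level $k$ is within $c\cdot 2^{i+1}-c\cdot 2^{k+1}<c\cdot 2^{i+1}$ of $y$, so every visited point of level $k$ lies in the ball of radius $c\cdot 2^{i+1}$ around $y$. Distinct points of $Y_k$ are $\ge 2^k$ apart (packing property), so the doubling dimension bounds the number of them inside that ball by $\left(O(c)\cdot 2^{i+1}/2^k\right)^{O(\ddim)}$; using $k\ge i-\ell$, i.e.\ $2^{i-k}\le 2^\ell=1/\veps$, and $c=O(1)$, this is $\veps^{-O(\ddim)}$. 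Summing over the $\ell+1=O(\log(1/\veps))$ levels still gives $\veps^{-O(\ddim)}$.

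The content is light: both parts reduce to earlier facts (the net covering and packing properties, and Lemma~\ref{lemma:net_c_descendant_distance}) together with the standard doubling-dimension packing bound. The only points needing care are clerical — verifying that the greedy ancestors really do pass the $c\cdot 2^k$ distance threshold of the $c$-lists (which needs only $c\ge 2$, so the hypothesis $c\ge 3$ is comfortably enough), reconciling the stopping level with the target refinement $\veps 2^i$ (an off-by-a-constant in the level that is harmless for the applications), and the trivial boundary case where $i-\log(1/\veps)$ would be nonpositive, in which the scan bottoms out at $Y_0=M$ and, since the minimum interpoint distance is $1$, traverses $x$ itself.
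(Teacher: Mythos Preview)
Your argument is correct, and it differs from the paper's in both parts in ways worth noting.

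For part (i), the paper targets $x_\veps$, the ancestor of $x$ \emph{in the tree $T$} at level $i-\log(1/\veps)$, and proves by induction on $j$ (from $i$ down) the stronger statement that \emph{every} $y_j\in Y_j$ with $d(x,y_j)\le 2^{j+1}$ has a recursive scan reaching $x_\veps$. At each step they bound $d(x_{j-1},y_j)\le d(x_{j-1},x)+d(x,y_j)\le 2^j+2^{j+1}=3\cdot 2^j$, which is exactly where the hypothesis $c\ge 3$ is used. Your approach instead exhibits a single explicit chain $y\to a_{i-1}\to\cdots\to a_{i-\ell}$ using greedy net ancestors rather than tree ancestors; because consecutive greedy ancestors are within $2^k$ of each other (not just within $3\cdot 2^k$), you get by with $c\ge 2$. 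Your route is a bit more elementary (it does not invoke the tree $T$ at all) and sharpens the constant, while the paper's inductive formulation is slightly more general since it covers all nearby $y_j$ simultaneously. Both end up with $d(x,x')<2\veps\cdot 2^i$, so the factor-of-$2$ discrepancy with the stated bound $\veps 2^i$ is present in the paper's own proof as well; your remark that it is harmless and absorbed by one extra level is exactly right.

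For part (ii), the paper simply bounds the branching factor of the scan by $2^{O(\ddim)}$ per level and sums a geometric series over $\log(1/\veps)$ levels. Your argument instead bounds the number of \emph{distinct} points at each level via Lemma~\ref{lemma:net_c_descendant_distance} and the packing property. Both are standard and give the same $\veps^{-O(\ddim)}$; yours has the minor advantage of not overcounting points reached along multiple paths.
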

\begin{proof}
Let $x_{\veps}$ be the ancestor of $x$ in $T$ who is in the $Y_{i-\log({1}/{\veps})}$ net, and so by Lemma~\ref{lemma:ancestor_descendant_distance_T} we have that $d(x,x_{\veps})\leq 2^{i-\log \frac{1}{\veps}+1}$. We prove by induction that for every $ i-\log \frac{1}{\veps}+1 \leq j \leq i$ and every $y_{j} \in Y_{j}$ such that $d(x,y_j)\leq 2^{j+1}$, the recursive scan of $c$-lists from $y_{j}$ will reach $x_\veps$. This will suffice as for every $x$ such that $d(x,y)\leq 2^i$, we also have $d(x,y)\leq 2^{i+1}$. For the base case, $j = i-\log \frac{1}{\veps} +1$, and so $d(y_j,x_{\veps})\leq d(y_j,x) + d(x,x_{\veps}) \leq 2^{j+1} +2^j \leq c2^j$, and so $x_{\veps}$ is in the $c$-list for $y_j$.

For the induction step, assume that the claim is correct for $j-1$. Consider $x_{j-1}\in Y_{j-1}$ which is the ancestor of $x_{\veps}$ in $T$, and therefore is also an ancestor of $x$. Then by Lemma~\ref{lemma:ancestor_descendant_distance_T}, $d(x,x_{j-1})\leq 2^{j}$ and by the induction hypothesis, a recursive scan on the $c$-lists starting from $x_{j-1}$ will reach $x_{\veps}$. Then $d(x_{j-1},y_{j})\leq d(x_{j-1},x)+d(x,y_{j}) \leq 2^{j} + 2^{j+1}\leq c2^{j}$, and so a recursive scan on the $c$-lists starting from $y_{j}$ must go through $x_{j-1}$ and eventually reach $x_{\veps}$.

The number of points traversed can be bounded as follows. Each point $x$ not in $Y_{i-\log({1}/{\veps})}$ that is encountered needs to scan its $c$-list which is of size $2^{O(\ddim)}$. So at $k$ levels beneath $y$ we scan at most $2^{O(\ddim\cdot k)}$ points. The last level scanned is when $k=\log \frac{1}{\veps}$, so using a geometric series we obtain that the total number of points scanned is $2^{O(\ddim \cdot \log({1}/{\veps}))} = \veps^{-O(\ddim)}$.
\end{proof}

\section{A Simple Algorithm for 1-median}
\label{sec:simple1med}

In this section we provide a simple iterative algorithm for 1-median,
for the purpose of explaining the basic approach used by
our main result for 1-median in Section \ref{sec:eff1med}.
This basic approach is similar, but not identical, to the known
algorithms for NNS \cite{KL04,HM06,BKL06,CG06}.
We remark that there is a well-known randomized algorithm that achieves
an (expected) $2$--approximation for 1-median
by picking a random point from $Q$ to be the center.
Below, we present a \emph{deterministic} 6--approximation algorithm,
which has the advantage that it is then easily refined
to achieve $(1+\veps)$--approximation.
Unlike that randomized algorithm, ours can probably be adapted
to the case where $Q$ need not be a subset of $M$
(or alternatively, when the center must come from $M\setminus Q$).

\begin{theorem}
There is an algorithm that preprocesses a finite metric $M$
in $2^{O(\ddim)}m\log \Delta(M)\log\log \Delta(M)$ time using $2^{O(\ddim)}m$ space,
so that subsequent $1$-median queries on a set $Q\subseteq M$,
can be answered within $(1+\veps)$--approximation (for any desired $0<\veps\le \tfrac12$) in time $n(2^{O(\ddim)}\log \Delta + \veps^{-O(\ddim)})$.
\end{theorem}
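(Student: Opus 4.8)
The plan is to use the data structure of Section~\ref{sec:DS} verbatim for preprocessing: the net hierarchy, the compacted tree $T$ equipped with constant-time LCA queries, and the nontrivial $c$-lists, with the constant $c$ fixed to be a sufficiently large absolute constant; the time and space bounds of this structure are exactly those claimed. For a query on $Q$, write $f(z):=\sum_{q\in Q}d(z,q)$, so the answer sought is $\afin:=\argmin_{z\in M}f(z)$ with $\OPT=f(\afin)$; I will use freely that $f$ is evaluable at a single point in $O(n)$ time, that $|f(z)-f(z')|\le n\,d(z,z')$, and hence $d(z,\afin)\le\frac1n\big(f(z)+\OPT\big)$ for every $z\in M$.

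\textbf{Step 1: a constant-factor approximation by greedy descent.} Starting from the root $y_\topp\in Y_{i_\topp}$, I would maintain a candidate $y_i\in Y_i$ and pass to the next level by setting $y_{i-1}:=\argmin_{z\in L_{y_i,i,c}}f(z)$, stopping at the first level $i_0$ with $f(y_{i_0})\ge 4n\,2^{i_0}$, or upon reaching level $0$. Since the nets are nested, $y_i$ lies in its own $c$-list, so $f(y_i)$ is non-increasing along the descent. The key invariant, established by downward induction from the top, is that $f(y_i)\le\OPT+n\,2^{i+1}$ at every level with $2^i\ge\OPT/(2n)$: given it at level $i+1$ one gets $d(y_{i+1},\afin)=O(2^{i+1})$, so by Lemma~\ref{lemma:ancestor_descendant_distance_T} the $T$-ancestor of $\afin$ in $Y_i$ is within distance $c\,2^{i+1}$ of $y_{i+1}$ — this is exactly where a large enough $c$ is needed — hence lies in $L_{y_{i+1},i+1,c}$, and evaluating $f$ there yields the invariant at level $i$. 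Combining the invariant with the monotonicity of $f$ and the stopping rule, I would argue that the stopping level $i_0$ satisfies both $f(y_{i_0})=O(\OPT)$ and $2^{i_0}=O(\OPT/n)$, and consequently $d(y_{i_0},\afin)=O(2^{i_0})$. This step visits $O(\log\Delta)$ levels, scans a $c$-list of size $2^{O(\ddim)}$ at each, and spends $O(n)$ per $f$-evaluation, for a total of $n\cdot 2^{O(\ddim)}\log\Delta$.

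\textbf{Step 2: refinement to $(1+\veps)$.} From $y_{i_0}$ I would run a recursive scan of $c$-lists down to level $i_0-\lceil\log(1/\veps)\rceil$ (truncated at level $0$ if that is higher). Using a straightforward strengthening of Lemma~\ref{lemma:descendant_search} — its proof tolerates a starting net point within any fixed constant multiple of the target's natural radius, which is why the scan may start at $y_{i_0}$ itself and not at some coarser ancestor — this scan traverses a net point $x'$ with $d(\afin,x')\le\veps\,2^{i_0}=O(\veps\,\OPT/n)$, whence $f(x')\le(1+O(\veps))\OPT$; I then evaluate $f$ on all $\veps^{-O(\ddim)}$ points traversed (and on $y_{i_0}$) and return the minimizer, obtaining $(1+\veps)$-approximation after rescaling $\veps$. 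This step costs $n\cdot\veps^{-O(\ddim)}$, so the two steps together give the claimed query time.

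\textbf{Main obstacle.} I expect the delicate point to be the regime where $\OPT/n$ drops below the finest net scale $2^0=1$, together with making the constants in the invariant and the stopping rule mesh: one must stop near scale $\OPT/n$ without knowing $\OPT$, yet guarantee the surviving candidate is both a good approximation and close, at scale $2^{i_0}$, to $\afin$; monotonicity of $f$ plus ``the invariant held a few levels up'' is what I would rely on. For the degenerate scale: if Step~1 reaches level $0$ without stopping then $f(y_0)<4n$, so $d(y_0,\afin)<8$, and since $M$ has minimum interpoint distance $1$ there are only $2^{O(\ddim)}$ points of $M$ within this constant radius of $y_0$ — evaluating $f$ on all of them recovers $\afin$ \emph{exactly}. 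Likewise, if the Step~2 target level is below $0$ then $2^{i_0}<1/\veps$, so the scan truncated at level $0$ traverses $\afin$ itself among at most $\veps^{-O(\ddim)}$ points of $M$, again giving an exact answer; and $|Q|=1$ is trivial. All other cases reduce to Steps~1--2.
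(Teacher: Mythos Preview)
Your proposal is correct and follows essentially the same approach as the paper: greedy descent through the net hierarchy, picking at each level the best candidate from the current $c$-list, with a stopping rule of the form $f(y_i)\gtrsim n\cdot 2^i$, followed by a descendant search for the $\veps$-refinement. The paper's analysis (its Lemmas~\ref{lemma:a_is_near} and~\ref{lem:simpleLB}) works only at the final level $\ifin$, showing that $a_{\ifin-1}\in L_{y,\ifin,7}$ and that $\OPT>n\cdot 2^{\ifin-1}$; your invariant $f(y_i)\le\OPT+n\cdot 2^{i+1}$ unrolls the same argument level by level, and your derivation of $2^{i_0}=\Theta(\OPT/n)$ plays the role of their Lemma~\ref{lem:simpleLB}. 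Minor cosmetic differences: the paper returns the previous-level candidate $y$ rather than $y_{i_0}$; it launches the refinement search from every point of $L_{y,\ifin,c}$ rather than strengthening Lemma~\ref{lemma:descendant_search} to start from $y_{i_0}$; and it handles the small-scale regime by extending the descent to level $-3$ (with $Y_i:=M$ for $i<0$) rather than your ball-enumeration at level $0$. None of these affects the substance.
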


The preprocessing algorithm simply builds the net hierarchy for the metric $M$
(see Section \ref{sec:DS}).
The query algorithm is described in Figure \ref{fig:1medSimple}.
For convenience, we use the shorthand $\median(y,Q)$ for $\median(\aset{y},Q)$.
By convention, for all $i<0$ we define $Y_i:=M$ (similarly to $Y_0$),
and note that the corresponding $c$-lists can be computed on the fly by
a direct filtering of the respective $c$-list at level $0$.

\begin{figure*}[h]
\hrule\medskip
\begin{tabbing}
\hspace*{.02in} \= \hspace*{.25in} \= \hspace*{.10in} \= \hspace*{.25in} \= \hspace*{.25in} \= \hspace*{.25in} \=\kill
\> 1. {\bf let} $y\leftarrow y_\topp$\\
\> 2. {\bf for each} $i$ from $i_\topp$ down to $-3$\\
\> 3. \> {\bf let}  $\hat{y}\leftarrow \argmin_{z\in L_{y,i,7}} \median(Q,z)$\\
\> 4. \> {\bf if}   $\median(Q,\hat{y}) > 3n \cdot 2^{i-1}$ {\bf then return} $y$.\\
\> 5. \> {\bf else} $y\leftarrow \hat{y}.$\\
\> 6. {\bf return} $y$.
\end{tabbing}
\vspace{-0.3in}
\caption{Simple algorithm for $1$-median query on a set $Q\subset M$}
\label{fig:1medSimple}
\medskip\hrule
\end{figure*}

\paragraph{Correctness Analysis.}
Assume for now that the algorithm returns from line 4.
(We discuss later the more special case where the algorithm reaches line 6.)
For the following, let $\ifin$ be the final value of $i$ (i.e., at line 4),
and let $y$ and $\hat{y}$ refer to their values in the algorithm
at the end of the execution.
It can be verified that the condition in line 4 must fail at least once,
by considering $y_\topp$ as a potential center $\hat y$,
and bounding the distance between every point in $Q\subset M$ to $y_\topp$
using Lemma \ref{lemma:ancestor_descendant_distance_T}.
Therefore,
\begin{equation}
\textstyle
  \label{eq:fin}  
  \sum_{q\in Q} d(q,\hat{y})>3 n\cdot 2^{\ifin-1},
  \quad\text{ and }\quad
  \sum_{q\in Q} d(q,y)\leq 3n\cdot 2^{\ifin}.
\end{equation}
Let $a\in M$ be an optimal solution to the $1$-median problem on $Q$.
Let $a_{\ifin-1}\in Y_{\ifin-1}$ be an ancestor of $a$ in $T$.
Then $d(a_{\ifin-1},a)\leq 2^{\ifin}$
by Lemma \ref{lemma:ancestor_descendant_distance_T}.

\begin{lemma}\label{lemma:a_is_near}
$d(a_{\ifin-1},y)\leq 7\cdot 2^{\ifin}$, and thus $a_{\ifin-1} \in L_{y,\ifin,7}$.
\end{lemma}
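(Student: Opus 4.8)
The goal is to bound $d(a_{\ifin-1},y)$ by $7\cdot 2^{\ifin}$, which would place $a_{\ifin-1}$ in the $7$-list $L_{y,\ifin,7}$ and thus guarantee that the argmin in line~3 of the algorithm considers it. The natural route is the triangle inequality through two intermediate points: go from $a_{\ifin-1}$ to $a$ (already bounded by $2^{\ifin}$ via Lemma~\ref{lemma:ancestor_descendant_distance_T}), then from $a$ to some query point $q\in Q$, then from $q$ to $y$. So the plan is to find a single query point $q$ that is simultaneously not-too-far from the optimal center $a$ and not-too-far from the current net point $y$.

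\textbf{Key steps.} First I would use the right-hand inequality of~(\ref{eq:fin}), namely $\sum_{q\in Q} d(q,y)\le 3n\cdot 2^{\ifin}$, to conclude by averaging that there exists $q\in Q$ with $d(q,y)\le 3\cdot 2^{\ifin}$ (in fact the average is at most $3\cdot 2^{\ifin}$, so some point achieves at most the average). Second, since $a$ is an \emph{optimal} $1$-median for $Q$, its cost is no larger than that of $y$, so $\sum_{q\in Q} d(q,a)\le \sum_{q\in Q} d(q,y)\le 3n\cdot 2^{\ifin}$; hence for that same $q$ — or more carefully, for \emph{a} point $q$ that simultaneously has both averages small — we need a pigeonhole argument. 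The cleanest version: let $q$ be the point minimizing $d(q,y)$, so $d(q,y)\le 3\cdot 2^{\ifin}$ (average bound); then bound $d(q,a)$ separately. Actually the sharper move is to pick $q$ minimizing $d(q,a)$ (the nearest query point to the optimal center): then $d(q,a)\le \frac1n\sum d(q',a)\le 3\cdot 2^{\ifin}$, and also for this very $q$ we have $d(q,y)\le \median(Q,y)$ is not directly a per-point bound, so instead I would just use that $d(q,y)\le d(q,a)+\sum\text{-type bound}$… — this is the delicate point, see below. Third, assemble: $d(a_{\ifin-1},y)\le d(a_{\ifin-1},a)+d(a,q)+d(q,y)\le 2^{\ifin}+3\cdot 2^{\ifin}+3\cdot 2^{\ifin}=7\cdot 2^{\ifin}$, and then invoke the definition of the $7$-list to get $a_{\ifin-1}\in L_{y,\ifin,7}$ (noting $a_{\ifin-1}\in Y_{\ifin-1}$, which is the right level for a $7$-list of $y\in Y_{\ifin}$).

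\textbf{Main obstacle.} The subtlety is that from $\sum_{q} d(q,y)\le 3n\cdot 2^{\ifin}$ and $\sum_q d(q,a)\le 3n\cdot 2^{\ifin}$ one gets that \emph{some} $q$ has $d(q,y)\le 3\cdot 2^{\ifin}$ and \emph{some} $q$ has $d(q,a)\le 3\cdot 2^{\ifin}$, but not necessarily the \emph{same} $q$. The fix is to average the \emph{sum}: $\sum_q\big(d(q,y)+d(q,a)\big)\le 6n\cdot 2^{\ifin}$, so there exists a single $q^\star\in Q$ with $d(q^\star,y)+d(q^\star,a)\le 6\cdot 2^{\ifin}$. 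Then $d(a,y)\le d(a,q^\star)+d(q^\star,y)\le 6\cdot 2^{\ifin}$, and finally $d(a_{\ifin-1},y)\le d(a_{\ifin-1},a)+d(a,y)\le 2^{\ifin}+6\cdot 2^{\ifin}=7\cdot 2^{\ifin}$. This handles the pigeonhole issue cleanly and is where the constant $7$ (matching the $c=7$ in the algorithm) comes from. One should also double-check the level bookkeeping: $y$ is the net point at level $\ifin$ reached just before line~4 fired (the value of $y$ in~(\ref{eq:fin})), $a_{\ifin-1}\in Y_{\ifin-1}$, and $L_{y,\ifin,7}=\{z\in Y_{\ifin-1}: d(y,z)\le 7\cdot 2^{\ifin}\}$, so the bound $d(a_{\ifin-1},y)\le 7\cdot 2^{\ifin}$ indeed certifies membership.
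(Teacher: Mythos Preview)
Your proof is correct and uses essentially the same ingredients as the paper: the triangle inequality through $a$ and a query point, the optimality of $a$ to bound $\sum_q d(q,a)$ by $\sum_q d(q,y)$, and the right-hand side of~\eqref{eq:fin}. The only cosmetic difference is that the paper avoids the pigeonhole step entirely by summing the inequality $d(a_{\ifin-1},y)\le d(a_{\ifin-1},a)+d(a,q)+d(q,y)$ over all $q\in Q$ to get $n\cdot d(a_{\ifin-1},y)\le n\cdot 2^{\ifin}+2\sum_q d(q,y)\le 7n\cdot 2^{\ifin}$, and then divides by $n$; your version makes the averaging explicit by extracting a single $q^\star$.
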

\begin{proof}
Using the triangle inequality, the optimality of $a\in M$, and then \eqref{eq:fin},
\begin{align*}
n\cdot d(a_{\ifin-1},y)
\leq \sum_{q\in Q} \left[d(a_{\ifin-1},a) + d(a,q) + d(q,y)\right]
\leq \sum_{q\in Q} d(a_{\ifin-1},a) + 2\sum_{q\in Q} d(q,y)
\leq 7\cdot n2^{\ifin}.
\qquad\mbox{}
\qedhere
\end{align*}
\end{proof}

\begin{lemma}\label{lem:simpleLB}
$\sum_{q\in Q} d(q,a) > n\cdot 2^{\ifin-1}$.
\end{lemma}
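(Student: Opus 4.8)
The plan is to combine the lower bound $\sum_{q\in Q} d(q,\hat y) > 3n\cdot 2^{\ifin-1}$ from \eqref{eq:fin} with the observation that $a_{\ifin-1}$ was one of the candidates scanned in line~3 at the final level $\ifin$, so that the minimality of $\hat y$ over the $7$-list forces $\median(Q,\hat y)\le \median(Q,a_{\ifin-1})$; then transfer the estimate from $a_{\ifin-1}$ back to the true optimum $a$, paying only the cheap additive cost $d(a,a_{\ifin-1})\le 2^{\ifin}$.

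Carrying this out, I would first invoke Lemma~\ref{lemma:a_is_near}, which gives $a_{\ifin-1}\in L_{y,\ifin,7}$. Since line~3 sets $\hat y=\argmin_{z\in L_{y,\ifin,7}}\median(Q,z)$, this yields $\sum_{q\in Q} d(q,\hat y)\le \sum_{q\in Q} d(q,a_{\ifin-1})$. Next I would apply the triangle inequality term by term, $d(q,a_{\ifin-1})\le d(q,a)+d(a,a_{\ifin-1})$, and use the bound $d(a,a_{\ifin-1})\le 2^{\ifin}$ already recorded just before Lemma~\ref{lemma:a_is_near} (an instance of Lemma~\ref{lemma:ancestor_descendant_distance_T}), obtaining $\sum_{q\in Q} d(q,a_{\ifin-1})\le \sum_{q\in Q} d(q,a)+n\cdot 2^{\ifin}$. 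Finally I would chain these with the first part of \eqref{eq:fin}:
$$3n\cdot 2^{\ifin-1} < \sum_{q\in Q} d(q,\hat y) \le \sum_{q\in Q} d(q,a) + n\cdot 2^{\ifin},$$
and rearrange, using $n\cdot 2^{\ifin}=2n\cdot 2^{\ifin-1}$, to conclude $\sum_{q\in Q} d(q,a) > n\cdot 2^{\ifin-1}$.

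I do not expect any real obstacle here: this is a one-line estimate once Lemma~\ref{lemma:a_is_near} is available, and the only slack in \eqref{eq:fin} (the factor $3$ versus $2$) is exactly what survives the passage from $a_{\ifin-1}$ to $a$. The two points to keep in mind are that the whole argument presupposes the algorithm returned at line~4, so that $\ifin$ is well-defined and \eqref{eq:fin} holds — the degenerate case of reaching line~6 being handled separately — and that the constant $7$ in the $c$-list is precisely what makes $a_{\ifin-1}$ a legitimate candidate in line~3, which Lemma~\ref{lemma:a_is_near} already guarantees.
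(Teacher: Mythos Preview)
Your proposal is correct and matches the paper's proof essentially line for line: invoke Lemma~\ref{lemma:a_is_near} to see that $a_{\ifin-1}$ was a candidate in line~3, so $\median(Q,\hat y)\le \median(Q,a_{\ifin-1})\le \median(Q,a)+n\cdot 2^{\ifin}$, then combine with \eqref{eq:fin} and rearrange.
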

\begin{proof}
By Lemma \ref{lemma:a_is_near},
when the algorithm computes $\hat{y}$ in the final iteration,
one of the options it considers is $a_{\ifin-1}$,
and so 
\begin{align*}
  \sum_{q\in Q} d(q,\hat{y})
 \leq \sum_{q\in Q} d(q,a_{\ifin-1})
 \leq \sum_{q\in Q} \left[d(q,a) + d(a,a_{\ifin-1})\right]
 \leq \sum_{q\in Q} d(q,a) + n\cdot 2^{\ifin}.
\end{align*}
Combining this with \eqref{eq:fin} and rearranging, the lemma follows.
\end{proof}
Thus, if we returned from line 4, then using \eqref{eq:fin},
the approximation factor achieved is
$
  \frac{\sum_{q\in Q} d(q,y)}{\sum_{q\in Q} d(q,a)}
  < \frac{3\cdot n\cdot 2^{\ifin}}{n\cdot 2^{\ifin-1}}
  = 6
$.
If we returned from line 6, Lemma~\ref{lemma:a_is_near} holds also for $i=-3$,
and thus at the last execution of line 3, we have $d(a,y) \le \frac{7}{2^3}$.
But since there cannot be two points with distance less than $\frac{7}{2^3}<1$,
we see that $y=a$, and the returned point is an optimal solution $a$.
We remark that a similar effect can be achieved by stopping at $i=0$,
possibly increasing the value of $c$.

\subsection{Refinement to $(1+\veps)$--approximation}

We now improve the approximation factor to $1+\veps$
for an arbitrary $\veps\in(0,\frac12]$.
We can utilize the fact that $a$ is a descendant of $a_{\ifin-1}$ in $T$, so $d(a,a_{\ifin-1}) \leq 2^\ifin$, and that $a_{\ifin-1}\in L_{y, \ifin, c}$. As such, we perform a descendant search with refinement constant ${\veps}/{2}$,
starting from each member of $L_{y, \ifin, c}$.
By Lemma~\ref{lemma:descendant_search}, we are guaranteed to traverse
a point $a_{\frac{\veps}{2}}$ such that $d(a,a_{\frac{\veps}{2}})\leq \frac{\veps}{2} 2^{\ifin}$. For each point $x$ traversed in this process, we compute $\median(Q,x)$,
and eventually report a center candidate $x$ with minimal objective value
$\median(Q,x)$.
Using \eqref{eq:fin} again, this objective value is
\begin{align*}
\median(Q,x)
  \leq \median(Q,a_{\frac{\veps}{2}})
  \leq  \sum_{q\in Q} \left[d(q,a) + d(q,a_{\frac{\veps}{2}}) \right]
  \leq \median(Q,a) + \tfrac{\veps}{2} n2^\ifin
  \leq (1+\veps) \median(Q,a).
\end{align*}

\subsection{Runtime Analysis}
The running time of the first part of the algorithm is $2^{O(\ddim)}n\log \Delta$,
as there are at most $O(\log \Delta)$ levels,
and at each level we compute the distance from every point in $Q$
to every point $z\in L_{y,i,7}$.
In the second part of the algorithm (the descendants search) we compute the cost of each of the $\veps^{-O(\ddim)}$ center candidates in $O(n)$ time.
The total runtime is $n(2^{O(\ddim)}\log \Delta + \veps^{-O(\ddim)})$,
and the space usage is just that of the hierarchy, which is $2^{O(\ddim)}m$.

\section{An Efficient Algorithm for $1$-median}
\label{sec:eff1med}

\begin{theorem}\label{thm:eff1med}
There is an algorithm that preprocesses a finite metric $M$ of size $m$
in time $2^{O(\ddim)}m\log \Delta(M)\log\log \Delta(M)$ using $2^{O(\ddim)}m$ memory words,
so that subsequent $1$-median queries on a set $Q\subseteq M$ of size $n$
can be answered within approximation factor $1+\veps$
(for any desired $0<\veps\le1/2$)
in time $O(n\log n) + 2^{O(\ddim)}\log \Delta(M) + \veps^{-O(\ddim)}$.
\end{theorem}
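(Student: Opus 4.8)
The plan is to take the simple iterative algorithm of Section~\ref{sec:simple1med} and remove its two sources of inefficiency: the $\Theta(\log\Delta)$ iterations, each of which currently recomputes $\median(Q,z)$ from scratch in $\Theta(n)$ time, and the need to re-evaluate the objective on all $n$ query points. Both are addressed by replacing the explicit query set $Q$ with a small weighted \emph{coreset} $R$ (in the sense discussed in Section~\ref{sec:techniques}), and by descending the \emph{projected tree} $T|_Q$ (Lemma~\ref{lem:TQ}, Lemma~\ref{lemma:traversal of uncompacted_projection}) rather than walking the full net hierarchy of $M$ level by level. Concretely, after the $O(n\log n)$ preprocessing of the query (sorting by DFS number, building compacted $T|_Q$, computing all $\wt(\cdot)$), I would first produce, for a suitable scale, an $O(1)$-approximation $y_0$ to the $1$-median cost $\OPT:=\median(Q,a)$; this pins down the ``critical level'' $\ifin$ up to an additive constant, exactly as in the simple algorithm, but now reached in $2^{O(\ddim)}\log\Delta$ time by following hooks into $M$ via the $c$-lists attached to nodes of $T|_Q$ instead of touching all of $Q$ at each level.

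The key new ingredient is the coreset. Having an estimate $\rho \approx \OPT/n$ of the average distance, I would pick a radius $r=\Theta(\veps\rho)$ and let $R$ be the set of net points at the level of scale $r$ that are $c$-list-ancestors of points of $Q$; each point $q\in Q$ is charged to such an ancestor at distance $\le \veps\rho/ \text{const}$ (Lemma~\ref{lemma:ancestor_descendant_distance_T} or Lemma~\ref{lemma:net_c_descendant_distance}), and $\wt(r)$ counts how many $q$'s are charged to $r$. Standard coreset reasoning gives $\bigl|\median(Q,x)-\median(R,x)\bigr|\le \sum_{q}d(q,R)\le n\cdot(\veps\rho/\text{const}) = O(\veps)\OPT$ for every candidate center $x$, so optimizing over the weighted set $R$ costs only a $(1+O(\veps))$ factor. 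The size of $R$ is $\veps^{-O(\ddim)}$ by the packing property of nets together with the fact that every point of $R$ lies within $O(\OPT/n)=O(\rho)$ of the optimum (points farther than that cannot be ancestors of a query point in a cost-$\OPT$ instance, after a short pruning argument); so $|R|$ is independent of both $n$ and $m$. Crucially, $R$ can be read off the projected tree in time $O(n)+\veps^{-O(\ddim)}$ once $\ifin$ (hence $\rho$) is known, using the weight annotations and the traversal lemma.

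With $R$ in hand I run the refinement step of Section~\ref{sec:simple1med}: a descendants search with refinement $\veps/2$ from the members of the relevant $c$-list (Lemma~\ref{lemma:descendant_search}) traverses $\veps^{-O(\ddim)}$ candidate centers in $M$, one of which, $a_{\veps/2}$, satisfies $d(a,a_{\veps/2})\le \tfrac{\veps}{2}2^{\ifin}=O(\veps)\OPT/n$; evaluating $\median(R,x)$ for each of these $\veps^{-O(\ddim)}$ candidates now costs only $|R|=\veps^{-O(\ddim)}$ time each, and reporting the best gives a $(1+\veps)$-approximate center. Adding the pieces: $O(n\log n)$ for query preprocessing, $2^{O(\ddim)}\log\Delta$ for locating $\ifin$ by descending $T|_Q$ with $c$-list hooks, and $\veps^{-O(\ddim)}$ for coreset extraction plus candidate evaluation, which is exactly the claimed bound; the $\log\log\log\Delta$ of the table and the preprocessing/space bounds are inherited verbatim from Section~\ref{sec:DS}.

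I expect the main obstacle to be the first phase — obtaining the constant-factor cost estimate, and hence the level $\ifin$, in $2^{O(\ddim)}\log\Delta$ time without ever spending $\Omega(n)$ per level. The simple algorithm's line~3 evaluates $\median(Q,z)$ exactly; here I must instead maintain, as I descend, a cheap surrogate for $\median(Q,\cdot)$ that is accurate to within a constant factor yet updatable in $2^{O(\ddim)}$ time per level. The natural surrogate is $\sum_{u}\wt(u)\,d(\text{center},u)$ summed over the current frontier of $T|_Q$ — but one has to argue that a frontier whose nodes sit at the scale currently under consideration both (i) has only $2^{O(\ddim)}$ nodes and (ii) approximates $\median(Q,\cdot)$ up to the slack we can afford at that scale, and that advancing the frontier one level down, via Lemma~\ref{lemma:traversal of uncompacted_projection}, preserves both properties. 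Making this bookkeeping precise — in particular handling the compacted-vs-uncompacted distinction and the possibility that a contracted path of $T|_Q$ spans many scales — is the delicate part; everything after $\ifin$ is fixed follows the template already laid out for the simple algorithm.
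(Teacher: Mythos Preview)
Your overall architecture matches the paper's: build $T|_Q$, descend level by level maintaining a weighted frontier as a surrogate for $\median(Q,\cdot)$, stop at a critical level $\ifin$, then refine via a descendant search evaluated against a small representative set. You have also correctly identified the hard part as keeping the frontier to $2^{O(\ddim)}$ nodes per level.

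There is, however, a genuine gap, and it bites in both phases. Your coreset size bound rests on the claim that ``every point of $R$ lies within $O(\OPT/n)=O(\rho)$ of the optimum.'' This is false: a single query point can sit at distance $\OPT$ from the optimal center (with the remaining $n-1$ points at the center), so its level-$\log(\veps\rho)$ ancestor is at distance $\approx n\rho$, not $O(\rho)$, from $a$. No ``short pruning argument'' fixes this; the set of level-$i$ ancestors of $Q$ can have size $n$, not $\veps^{-O(\ddim)}$. The same obstruction kills the frontier bound in phase one: at level $i$, the ancestors of $Q$ in $Y_i$ need not be confined to any ball of radius $O(2^i)$.

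The missing idea, which the paper supplies, is to \emph{remove} from the frontier every representative $r$ whose distance to the current center $y_i$ exceeds $c'2^i$, and to accumulate $\wt(r)\cdot d(r,y_i)$ into a running variable $\summ$. This simultaneously (i) keeps $|R_i|\le 2^{O(\ddim)}$ by packing, since all survivors lie in a ball of radius $c'2^i$ around $y_i$ inside the $2^i$-net, and (ii) preserves a constant-factor surrogate for $\median(Q,y_i)$, because once $r$ is far at scale $2^i$, every $c$-list descendant of $y_i$ is within $O(2^i)\ll d(r,y_i)$ of $y_i$, so the frozen contribution is accurate up to a $(1\pm\beta)$ factor. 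The refinement phase needs the same trick: far points (those with $d(q,y_\ifin)>\Theta(2^{\ifin}/\veps)$) are shown to contribute a $(1\pm O(\veps))$-equal amount to every candidate center and are therefore dropped; only the near points get representatives, and \emph{those} do lie in a ball of radius $O(2^{\ifin}/\veps)$, giving the $\veps^{-O(\ddim)}$ bound you wanted. Without this far/near split your argument does not go through.
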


This theorem builds on the simple algorithm from Section \ref{sec:simple1med},
refining the approach therein using two main ideas.
First, as we iterate down the levels $i$,
some query points $q\in Q$ might get further away from the current center $y_i\in Y_i$.
But then, picking any $c$-list-descendant of $y_i$ as the final center
will give approximately the same contribution from those far query points.
This speeds up the traversal down the hierarchy as query points
need not be considered once they get far enough from $y_i$.
The second idea is to cluster query points that are close to each other,
relative to the current level $i$, into one (weighted) representative point.
This (crude) clustering must be computed quickly,
and indeed it is achieved using the projection tree $T|_Q$.
Once we bound the number of weighted representatives under consideration
in each iteration, we obtain a significant speedup.

\paragraph{Algorithm Description.}

We first describe a constant factor approximation algorithm,
which is detailed in Figure~\ref{fig:1medEffecient}
using $\alpha,c'>0$ to denote sufficiently large constants.
Similarly to the simple algorithm in Section \ref{sec:simple1med},
the algorithm iterates (in lines 3--12) down the levels $i$,
while maintaining a candidate center $y_i\in Y_i$.
However, the iterations here start at the root of $T|_Q$ (instead of at $y_\topp$).
Observe that the next candidate $y_{i-1}$ is always chosen
from the $c$-list of $y_i$ (lines 9,12).

\begin{figure*}[h]
\hrule\medskip
\begin{tabbing}
\hspace*{.02in} \= \hspace*{.25in} \= \hspace*{.10in} \= \hspace*{.25in} \= \hspace*{.25in} \= \hspace*{.25in} \=\kill
\> 1 . {\bf compute} $T|_Q$\\
\> 2 . {\bf let}
  $\iroot \leftarrow \levell(\roott(T|_Q))$;
  $y_{\iroot-1}\leftarrow \roott(T|_Q)$;
  $R_{\iroot}\leftarrow \aset{\roott(T|_Q)}$;
  $\summ\leftarrow 0$ \\
\> 3 . {\bf foreach} $i$ from $\iroot-1$ down to $0$\\
\> 4 . \> \> {\bf let} $R_i\leftarrow \emptyset$\\
\> 5 . \> \> {\bf foreach} $r\in R_{i+1}$\\
\> 6 . \> \> \> \> {\bf if} $d(r,y_i)> c'\cdot 2^i$ \\
\> 7 . \> \> \> \> {\bf then let} $\summ \leftarrow \summ + \wt(r)\cdot d(r,y_i)$\\
\> 8. \> \> \> \> {\bf else  let} $R_i \leftarrow R_i \cup \{\mbox{children of $r$ in non-compacted $T|_Q$} \}$.\\
\> 9. \> \> {\bf let} $\hat{y}\leftarrow \argmin_{x\in L_{y_i,i,c}} \{\sum_{r\in R_i} d(r,x)\cdot \wt(r)\}$\\
\> 10. \> \> {\bf if} $\summ+\sum_{r\in R_i} d(r,\hat{y})\cdot \wt(r)>\alpha \cdot n \cdot 2^{i-1}$\\
\> 11. \> \> {\bf then return} $y_i$\\
\> 12. \> \> {\bf else} $y_{i-1}\leftarrow \hat{y}$\\
\> 13. {\bf return} $y_{-1}$
\end{tabbing}
\vspace{-0.25in}
\caption{Efficient Algorithm for $1$-median query on a set $Q\subset M$}
\label{fig:1medEffecient}
\medskip\hrule
\end{figure*}

During the iterations, the algorithm maintains also a set $R_i$ of representatives
to some points of $Q$, those points that are not too far, as explained next.
The level $i$ representative of a point $q\in Q$, denoted $r_i(q)$,
is the (unique) ancestor $r\in Y_i$ of $q$ in $T|_Q$.
Notice that this is the same ancestor as in the tree $T$.
The algorithm also uses, for each representative $r\in R_i$,
a weight denoted $\wt(r)$, which is the number of points in $Q$
that have $r$ as an ancestor in $T|_Q$.
This weight is calculated for each node in $T|_Q$ during
the tree's construction in line 1.
The set of representatives $R_i$ is constructed (in lines 4,8)
from children of $R_{i+1}$ in $T|_Q$,
which clearly maintains the invariant $R_i\subset Y_i$.
In this process, we skip (via the condition in line 6)
representatives $r\in R_{i+1}$ that are far enough from $y_i$,
in which case we add their weighted distance $\wt(r)\cdot d(r,y_i)$
to a variable called $\summ$.
The purpose of this variable is to accumulate all those weighted distances,
but note that each weighted distance is taken relative to $y_i$
at the iteration in which the representative $r$ fails the condition in line 6.
Denote by $\summ_i$ the value of variable $\summ$ at the end of iteration $i$.
For representatives $r\in R_{i+1}$ that are close enough to $y_i$,
we need to compute their children in the un-compacted $T|_Q$ (in line 8).
For simplicity sake, the algorithm's description assumes that the tree $T|_Q$
is available in its un-compacted version.
The necessary operations can be implemented using the data structure
for the compacted version
by Lemma~\ref{lemma:traversal of uncompacted_projection}.

\subsection{Correctness Analysis}\label{app:eff1med_correctness}
We say a point $q\in Q$ is \emph{far} at level $i$ if it has no representative
in $R_i$,
which means that during some iteration $i'>i$ its representative was skipped.
A point $q\in Q$ is \emph{near} if it is not far.
Let $F_i$ denote the points of $Q$ that are far at level $i$,
and similarly $N_i=Q\setminus F_i$ for the points that are near.
Notice that $F_{i} \supseteq F_{i+1}$ and $N_{i} \subseteq N_{i+1}$.

Let $\ifin$ be the value of $i$ at the end of the execution.
This is the ``last'' level (time-wise) considered by the algorithm,
and the analysis shall rely on the corresponding partition $Q=N_\ifin\cup F_\ifin$.
For $q\in Q$, we denote its representative in $R_i$ by $r_i(q)$.
We let $r_q$ be the ``last'' representative of $q$, formally defined as follows.
If $q\in F_\ifin$, define $i_q$ as the smallest $i$ such that $q\in N_i$.
Intuitively, this is the ``last level'' in which $q$ has a representative,
and also the (unique) value of $i$ such that
$q\in N_i\setminus N_{i-1}=F_{i-1}\setminus F_i$
(assuming by convention $N_{\ifin-1}=\emptyset$ and $F_{\ifin-1}=Q$).
Otherwise (i.e., $q\in N_\ifin$), define $i_q:=\ifin$.
In both cases, let $r_q:=r_{i_q}(q)$. Notice that $r_q\in Y_{i_q}$.

At iteration $i$, the variable called $\summ$ receives (in line 7)
a contribution for every point $q\in F_i\setminus F_{i+1}$.
Observe that this contribution is proportional to $d(r_{i+1}(q),y_i)$,
and the last representative of $q$ is at level $i_q=i+1$.
Hence, $r_q = r_{i_q}(q) = r_{i+1}(q)$, and by the condition in line 6,
\begin{equation}
  \label{eq:drq}
  d(r_q,y_{i_q-1})
  = d(r_{i+1}(q),y_{i})
  > c'2^{i}
  = c'2^{i_q-1}.
\end{equation}
Summing the aforementioned contributions over all iterations up to $i$,
we see that
\begin{equation}
  \label{eq:sumi}
\summ_i
  = \sum_{i'\ge i}\ \sum_{q\in F_{i'}\setminus F_{i'+1}} d(r_{i'+1}(q),y_{i'})
  = \sum_{q\in F_i} d(r_q,y_{i_q-1}).
\end{equation}

In addition, $r_q\in Y_{i_q}$ and is an ancestor of $q$ in $T$.Thus, by Lemma~\ref{lemma:ancestor_descendant_distance_T},
\begin{align}
  \label{eq:qinQ}
  &\forall q\in Q,\ \quad d(q,r_q)\leq 2^{i_q +1} \\
  \label{eq:qinN}
  &\forall q\in N_\ifin,\quad d(q,r_q)\leq 2^{\ifin+1}.
\end{align}
Below, $\hat y$ refers to its value at the end of the execution.

We assume from now on that the algorithm halts during some iteration
and returns the value from line 11.
Similarly to Section~\ref{sec:simple1med},
the special case where the algorithm returns from line 13 is proved
by replacing Eqn.~\eqref{eq:lastiter} and its consequences
with the fact that we reached iteration $i=0$.
Thus, at the last iteration, $i=\ifin$, the algorithm halts, and
\begin{align}
  \label{eq:lastiter}
  \summ_\ifin+\sum_{r\in R_\ifin} d(r,\hat{y}) \cdot \wt(r)
  &= \summ_\ifin+\sum_{q\in N_\ifin} d(r_q,\hat{y})
  > \alpha \cdot n \cdot 2^{\ifin-1}.
\intertext{Similarly, at the previous to last iteration
$i=\ifin+1$ and $\hat{y}$ is assigned $y_\ifin$, hence}
  \label{eq:prevlastiter}
  \summ_{\ifin+1}+\sum_{r\in R_{\ifin+1}} d(r,y_\ifin) \cdot \wt(t)
  &= \summ_{\ifin+1}+\sum_{q\in N_{\ifin+1}} d(r_{\ifin+1}(q),y_\ifin)
  \leq \alpha \cdot n \cdot 2^{\ifin}.
\end{align}
This inequality holds even in the special case where $\ifin=\iroot-1$
and there was no previous to last iteration.
Indeed, we have that $F_{\ifin +1} = \emptyset$, $\summ_{\ifin+1} =0$, $R_{i*+1} = \aset{\roott(T|_Q)}$ and $y_{\ifin} = \roott(T|_Q)$,
and therefore, $\summ_{\ifin+1} +\sum_{r\in R_{\ifin+1}} d(r,y_\ifin) = 0 \leq \alpha \cdot
n \cdot 2^{\ifin}$.

\begin{lemma}\label{lem:sumfinal}
$\summ_{\ifin}+\sum_{q\in N_{\ifin}} d(r_q,y_\ifin) \leq (\alpha+2) \cdot n \cdot 2^\ifin$.
\end{lemma}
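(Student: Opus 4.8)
The plan is to bound $\summ_\ifin+\sum_{q\in N_\ifin} d(r_q,y_\ifin)$ by comparing it with the quantity $\summ_{\ifin+1}+\sum_{q\in N_{\ifin+1}} d(r_{\ifin+1}(q),y_\ifin)$, which Eqn.~\eqref{eq:prevlastiter} tells us is at most $\alpha\cdot n\cdot 2^\ifin$. The two expressions are almost the same, but they differ in two ways: (i) $\summ_\ifin$ includes the extra contributions added during iteration $\ifin$ itself (line 7), i.e.\ a term $\wt(r)\cdot d(r,y_\ifin)$ for each $r\in R_{\ifin+1}$ skipped at that iteration; and (ii) the sum over near points switches index set from $N_{\ifin+1}$ to $N_\ifin$, and the representative of each such point $q$ changes from $r_{\ifin+1}(q)$ (its level-$(\ifin+1)$ ancestor) to $r_q$, which for $q\in N_\ifin$ equals $r_\ifin(q)$, its level-$\ifin$ ancestor in $T|_Q$.

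First I would write $\summ_\ifin = \summ_{\ifin+1} + \sum_{q\in F_\ifin\setminus F_{\ifin+1}} d(r_{\ifin+1}(q),y_\ifin)$, using \eqref{eq:sumi} (or directly the description of line 7). Then the key observation is that for each near point $q\in N_\ifin$, its level-$\ifin$ ancestor $r_\ifin(q)$ is a child in $T|_Q$ of its level-$(\ifin+1)$ ancestor $r_{\ifin+1}(q)=:r$, and that $r$ was \emph{not} skipped at iteration $\ifin$ (otherwise $q$ would be far at level $\ifin$), so $r\in R_{\ifin+1}$ passed the test in line 6, i.e.\ $d(r,y_\ifin)\le c'2^\ifin$. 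Applying the triangle inequality, $d(r_\ifin(q),y_\ifin)\le d(r_\ifin(q),r) + d(r,y_\ifin)$. For the first term, $r_\ifin(q)\in Y_\ifin$ is a child of $r\in Y_{\ifin+1}$ in $T$, so by Lemma~\ref{lemma:ancestor_descendant_distance_T} (or just the parent--child bound), $d(r_\ifin(q),r)\le 2^{\ifin+1}$. Hence for each near point the representative-switch costs us at most an additive $2^{\ifin+1}$, while dropping the bound $d(r,y_\ifin)\le c'2^\ifin$ back on $r$ exactly reconstitutes a term that appears either in $\sum_{q\in N_{\ifin+1}}d(r_{\ifin+1}(q),y_\ifin)$ or (for the now-far points) in the iteration-$\ifin$ increment of $\summ$.

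Putting this together: $\summ_\ifin+\sum_{q\in N_\ifin} d(r_q,y_\ifin) \le \summ_{\ifin+1} + \sum_{q\in F_\ifin\setminus F_{\ifin+1}} d(r_{\ifin+1}(q),y_\ifin) + \sum_{q\in N_\ifin}\bigl(d(r_{\ifin+1}(q),y_\ifin) + 2^{\ifin+1}\bigr)$. Now $(F_\ifin\setminus F_{\ifin+1})\cup N_\ifin = N_{\ifin+1}$ (every level-$(\ifin+1)$ near point either stays near at level $\ifin$ or becomes far exactly at level $\ifin$), and the representative $r_{\ifin+1}(q)$ is the same in both pieces, so the first two sums combine to $\sum_{q\in N_{\ifin+1}} d(r_{\ifin+1}(q),y_\ifin) = \sum_{r\in R_{\ifin+1}} d(r,y_\ifin)\wt(r)$. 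Thus the whole thing is at most $\summ_{\ifin+1}+\sum_{r\in R_{\ifin+1}} d(r,y_\ifin)\wt(r) + |N_\ifin|\cdot 2^{\ifin+1} \le \alpha n 2^\ifin + n\cdot 2^{\ifin+1} = (\alpha+2)n2^\ifin$, using \eqref{eq:prevlastiter} and $|N_\ifin|\le n$.

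\textbf{Main obstacle.} The delicate bookkeeping point — and the thing I would be most careful about — is handling the boundary/degenerate cases: the index-set identity $(F_\ifin\setminus F_{\ifin+1})\cup N_\ifin = N_{\ifin+1}$ together with the conventions $N_{\ifin-1}=\emptyset$, and the special case $\ifin=\iroot-1$ where there is no previous-to-last iteration (which the excerpt already addresses just before the lemma, showing \eqref{eq:prevlastiter} still holds with $\summ_{\ifin+1}=0$, $R_{\ifin+1}=\{\roott(T|_Q)\}$, $y_\ifin=\roott(T|_Q)$). Everything else is triangle-inequality plus the parent--child distance bound from Lemma~\ref{lemma:ancestor_descendant_distance_T}; the only real content is recognizing that a near-at-level-$\ifin$ point's old representative must have survived the line-6 test, which is exactly what keeps the $c'2^\ifin$ term from appearing uncontrolled.
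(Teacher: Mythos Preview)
Your proposal is correct and follows essentially the same route as the paper: decompose $\summ_\ifin$ as $\summ_{\ifin+1}$ plus the iteration-$\ifin$ increment, apply the triangle inequality $d(r_\ifin(q),y_\ifin)\le d(r_\ifin(q),r_{\ifin+1}(q))+d(r_{\ifin+1}(q),y_\ifin)$ for $q\in N_\ifin$ with the parent--child bound $d(r_\ifin(q),r_{\ifin+1}(q))\le 2^{\ifin+1}$, recombine using $(F_\ifin\setminus F_{\ifin+1})\cup N_\ifin=N_{\ifin+1}$, and invoke \eqref{eq:prevlastiter}. The only superfluous remark is the survival bound $d(r,y_\ifin)\le c'2^\ifin$ from line~6, which you mention but never actually use (nor does the paper); the argument goes through on triangle inequality alone.
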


\begin{proof}
We write the lefthand-side as
\begin{align*}
\summ_{\ifin} &+\sum_{q\in N_{\ifin}} d(r_q,y_\ifin) = \\
  &= \summ_{\ifin+1}+\sum_{q\in N_{\ifin+1}-N_\ifin}d(r_{\ifin+1}(q),y_\ifin)+\sum_{q\in N _{\ifin}}
d(r_\ifin(q),y_\ifin) \\
  &\leq \summ_{\ifin+1}+\sum_{q\in N_{\ifin+1}-N_ \ifin}d(r_{\ifin+1}(q),y_\ifin)+\sum_{q\in N_{\ifin}} \left[d(r_{\ifin}(q),r_{\ifin+1}(q))+d(r_{\ifin+1}(q),y_\ifin)\right] \\
  &\leq \summ_{\ifin+1}+\sum_{q\in N_{\ifin+1}}d(r_{\ifin+1}(q),y_\ifin) + n2^{\ifin+1},
\end{align*}
where the last inequality follows from $r_\ifin{(q)}$ being a child of $r_{\ifin+1}(q)$ in $T|_Q$.
The lemma then follows by plugging in Eqn.~\eqref{eq:prevlastiter}.
\end{proof}

For every $q\in F_\ifin$, we have by Eqn.~\eqref{eq:drq} that
$d(q,r_q)\leq 2^{i_q+1} \leq \frac{4}{c'}d(r_q,y_{i_q-1})$.
In addition, $d(y_{i_q-1},y_\ifin)\leq c2^{i_q} \leq \frac{2c}{c'}d(r_q,y_{i_q-1})$, because $y_\ifin$ is a $c$-list-descendant of $y_{i_q-1}$
and thus Lemma~\ref{lemma:net_c_descendant_distance} applies.
To simplify notation, define $\beta:= \frac{4}{c'} + \frac{2c}{c'}$
and notice it can be made an arbitrarily small positive constant by controlling $c'$.
For example, it is always possible to make $\beta = \frac 12$.
We can now show that with respect to the query points $F_\ifin$,
our estimate $\summ_\ifin$ is a good approximation
for the cost of picking $y_\ifin$ as the center.
\begin{align}
  \nonumber
  \sum_{q\in F_\ifin} d(q,y_\ifin)
  &\leq \sum_{q\in F_\ifin} \left[d(q,r_q)  + d(r_q,y_{i_q-1})  + d(y_{i_q-1},y_\ifin)\right] \\
  \label{eq:sumFifinal}
  &\leq (1+\tfrac{4}{c'} + \tfrac{2c}{c'}) \sum_{q\in F_\ifin} d(r_q,y_{i_q-1})
  = (1+\beta) \summ_\ifin.
\end{align}
In addition, we show that with respect to the query points $N_\ifin$,
the representatives give a good approximation as well.
\begin{align}
  \label{eq:sumNifinal}
  \sum_{q\in N_\ifin} d(q,y_\ifin)
  \leq \sum_{q\in N_\ifin} \left[d(q,r_q)+ d(r_q,y_\ifin)\right]
  \leq n2^{\ifin+1} + \sum_{q\in N_\ifin} d(r_q,y_\ifin).
\end{align}

Let $a\in M$ be an optimal solution to the $1$-median problem $Q$,
and let $a_{\ifin-1}\in Y_{\ifin-1}$ be an ancestor of $a$ in $T$.
Thus, $d(a_{\ifin-1},a)\leq 2^{\ifin}$.
We next prove that $y_\ifin$ is near $a_{\ifin-1}$,
and thus also near $a$ itself.

\begin{lemma}\label{lem:aclosey}
$d(a_{\ifin-1},y_\ifin)\leq c2^\ifin$ and therefore $a_{\ifin-1}\in L_{y_\ifin,\ifin,c}$.
\end{lemma}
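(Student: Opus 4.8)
\textbf{Proof plan for Lemma~\ref{lem:aclosey}.}
The plan is to mimic the argument of Lemma~\ref{lemma:a_is_near} from the simple algorithm, but replace the raw query distances by the representative-based estimates developed in \eqref{eq:sumFifinal}, \eqref{eq:sumNifinal}, and Lemma~\ref{lem:sumfinal}, and to use the lower bound \eqref{eq:lastiter} coming from the fact that the algorithm halted at level $\ifin$. First I would bound $n\cdot d(a_{\ifin-1},y_\ifin)$ by a telescoping triangle inequality through each query point $q\in Q$: write $d(a_{\ifin-1},y_\ifin)\le d(a_{\ifin-1},a)+d(a,q)+d(q,y_\ifin)$, sum over $q\in Q$, and use $d(a_{\ifin-1},a)\le 2^\ifin$ together with the optimality of $a$ (so $\sum_q d(a,q)\le \sum_q d(y_\ifin,q)$, or more simply $\le \median(Q,a)$). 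This gives $n\cdot d(a_{\ifin-1},y_\ifin)\le n2^\ifin + 2\sum_{q\in Q} d(q,y_\ifin)$, so the whole task reduces to upper-bounding $\sum_{q\in Q} d(q,y_\ifin)=\sum_{q\in F_\ifin}d(q,y_\ifin)+\sum_{q\in N_\ifin}d(q,y_\ifin)$.

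Next I would feed in the estimates already proved. By \eqref{eq:sumFifinal}, $\sum_{q\in F_\ifin} d(q,y_\ifin)\le (1+\beta)\summ_\ifin$, and by \eqref{eq:sumNifinal}, $\sum_{q\in N_\ifin} d(q,y_\ifin)\le n2^{\ifin+1}+\sum_{q\in N_\ifin} d(r_q,y_\ifin)$. Adding these and applying Lemma~\ref{lem:sumfinal} (which bounds $\summ_\ifin + \sum_{q\in N_\ifin} d(r_q,y_\ifin)$ by $(\alpha+2)n2^\ifin$, and in particular bounds each summand by the same quantity since $\beta<1$), I get $\sum_{q\in Q} d(q,y_\ifin)\le (1+\beta)(\alpha+2)n2^\ifin + 2n2^\ifin = O(\alpha)\cdot n2^\ifin$. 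Plugging back, $d(a_{\ifin-1},y_\ifin)\le 2^\ifin + 2\cdot O(\alpha)\cdot 2^\ifin = O(\alpha)2^\ifin$. Choosing the constant $c$ in the $c$-lists large enough relative to $\alpha$ (and $\beta$ small enough, e.g.\ $\beta=\tfrac12$), this is at most $c2^\ifin$, which exactly says $a_{\ifin-1}\in L_{y_\ifin,\ifin,c}$ since $a_{\ifin-1}\in Y_{\ifin-1}$ and $y_\ifin\in Y_\ifin$.

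The one subtlety — the main obstacle — is that Lemma~\ref{lem:sumfinal} gives an \emph{upper} bound on $\summ_\ifin + \sum_{q\in N_\ifin}d(r_q,y_\ifin)$ coming from the previous-to-last iteration \eqref{eq:prevlastiter}, not the halting condition \eqref{eq:lastiter} at level $\ifin$ itself; I need to be careful that it is indeed the \emph{previous} iteration's non-halting that controls the total mass, and that the special case $\ifin=\iroot-1$ (no previous iteration) is covered — but this is precisely what the remarks following \eqref{eq:prevlastiter} establish, so I can cite them directly. The rest is bookkeeping: tracking that the accumulated constant $(1+\beta)(\alpha+2)+2$ is still a fixed constant, and recalling that $c$ is allowed to be chosen after $\alpha$ (the paper writes ``for some constant $c\ge1$ that will be determined later''), so the final inequality $d(a_{\ifin-1},y_\ifin)\le c2^\ifin$ holds by choice of $c$. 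Once the distance bound holds, membership in $L_{y_\ifin,\ifin,c}$ is immediate from the definition $L_{y,i,c}=\{z\in Y_{i-1}: d(y,z)\le c2^i\}$ with $i=\ifin$.
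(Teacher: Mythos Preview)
Your proposal is correct and follows essentially the same route as the paper's own proof: triangle inequality through each $q\in Q$, optimality of $a$ to replace $\sum_q d(a,q)$ by $\sum_q d(q,y_\ifin)$, split into $F_\ifin$ and $N_\ifin$, apply \eqref{eq:sumFifinal} and \eqref{eq:sumNifinal}, and then invoke Lemma~\ref{lem:sumfinal} (which rests on \eqref{eq:prevlastiter}) before choosing $c$ large enough. One small remark: the halting inequality \eqref{eq:lastiter} that you mention in your opening sentence is not actually needed here (you correctly drop it later); only the previous iteration's \emph{non}-halting, via Lemma~\ref{lem:sumfinal}, is used.
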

\begin{proof}
We start with the lefthand-side multiplied by $n$
\begin{align*}
n\cdot d(a_{\ifin-1},y_\ifin)
&= \sum_{q\in Q} d(a_{\ifin-1},y_\ifin) \\
&\leq \sum_{q\in Q} [d(a_{\ifin-1},a) + d(a,q) +d(q,y_\ifin)] & \mbox{by triangle inequality}\\
&\leq n2^{\ifin} + 2\sum_{q\in Q} d(q,y_\ifin) & \mbox{by optimality of $a$}\\
& = n2^{\ifin} +2\Big[\sum_{q\in F_\ifin} d(q,y_\ifin)+\sum_{q\in N_\ifin} d(q,y_\ifin)\Big]\\
&\leq n2^{\ifin} +2\big[ (1+\beta) \summ_\ifin+   n2^{\ifin+1} + \sum_{q\in N_\ifin} d(r_q,y_\ifin) \big] & \mbox{by Eqns.~\eqref{eq:sumFifinal},\eqref{eq:sumNifinal}}\\
&\leq  5n2^{\ifin} +     2(1+\beta)(\alpha+2) \cdot n \cdot 2^\ifin & \mbox{by Lemma \ref{lem:sumfinal}}\\
&\leq cn2^\ifin.
\end{align*}
For the last inequality we need to pick a large enough constant $c>0$.
(Recall that $\beta$ can be made to be $\frac 12$ increasing $c'$ as needed.)
Dividing all by $n$ completes the proof.
\end{proof}

We now want to prove the guarantee of our approximation.
To this end we need an upper bound on the cost of the algorithm's solution,
which we establish by analyzing the stopping condition iteration.

\begin{lemma}\label{lem:UByhat}
$\sum_{q\in Q} d(q,y_\ifin) \leq [2+(1+\beta)(\alpha +2)] n2^\ifin.$
\end{lemma}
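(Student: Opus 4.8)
The plan is to bound $\sum_{q\in Q} d(q,y_\ifin)$ by splitting $Q$ into the far points $F_\ifin$ and the near points $N_\ifin$, exactly as was done in preparation for Lemma~\ref{lem:aclosey}. For the far points, Eqn.~\eqref{eq:sumFifinal} already gives $\sum_{q\in F_\ifin} d(q,y_\ifin)\le (1+\beta)\summ_\ifin$. For the near points, Eqn.~\eqref{eq:sumNifinal} gives $\sum_{q\in N_\ifin} d(q,y_\ifin)\le n2^{\ifin+1}+\sum_{q\in N_\ifin} d(r_q,y_\ifin)$. Adding these two contributions yields
\[
  \sum_{q\in Q} d(q,y_\ifin)
  \le (1+\beta)\summ_\ifin + n2^{\ifin+1} + \sum_{q\in N_\ifin} d(r_q,y_\ifin).
\]

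The next step is to control the term $\summ_\ifin + \sum_{q\in N_\ifin} d(r_q,y_\ifin)$, which is precisely what Lemma~\ref{lem:sumfinal} bounds by $(\alpha+2)n2^\ifin$. Since $1+\beta\ge 1$, I can write $(1+\beta)\summ_\ifin + \sum_{q\in N_\ifin} d(r_q,y_\ifin) \le (1+\beta)\big(\summ_\ifin + \sum_{q\in N_\ifin} d(r_q,y_\ifin)\big) \le (1+\beta)(\alpha+2)n2^\ifin$. Combining with the leftover $n2^{\ifin+1} = 2n2^\ifin$ term gives $\sum_{q\in Q} d(q,y_\ifin) \le \big[2 + (1+\beta)(\alpha+2)\big] n2^\ifin$, as claimed.

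I would also note that this uses the stopping condition only indirectly: the quantities $\summ_\ifin$ and $\sum_{q\in N_\ifin} d(r_q,y_\ifin)$ are controlled via Lemma~\ref{lem:sumfinal}, whose proof in turn invokes Eqn.~\eqref{eq:prevlastiter} (the ``previous to last iteration did not halt'' inequality), including the special boundary case $\ifin=\iroot-1$ that was already verified. So no separate case analysis is needed here; the lemma is a clean consequence of Eqns.~\eqref{eq:sumFifinal}, \eqref{eq:sumNifinal}, and Lemma~\ref{lem:sumfinal}.

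The only mild subtlety — and the place I would be careful — is making sure the $\summ_\ifin$ appearing in Eqn.~\eqref{eq:sumFifinal} and in Lemma~\ref{lem:sumfinal} is the same quantity (the value of $\summ$ at the end of iteration $\ifin$), and that the near/far split at level $\ifin$ used in both equations is consistent; given the careful setup of $r_q$, $i_q$, $F_\ifin$, $N_\ifin$ earlier in the section, this is immediate. There is no real obstacle; the lemma is essentially just bookkeeping that repackages the two displayed bounds plus Lemma~\ref{lem:sumfinal}.
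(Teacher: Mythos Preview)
Your proof is correct and follows essentially the same approach as the paper: split $Q$ into $F_\ifin$ and $N_\ifin$, apply Eqns.~\eqref{eq:sumFifinal} and~\eqref{eq:sumNifinal} respectively, then use $1+\beta\ge 1$ to group the $\summ_\ifin$ and $\sum_{q\in N_\ifin} d(r_q,y_\ifin)$ terms together and invoke Lemma~\ref{lem:sumfinal}. The paper's proof is identical in substance, with only cosmetic differences in presentation.
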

\begin{proof}
First, each $q$ is close to $r_q$ and thus
\begin{align*}\sum_{q\in N_\ifin} d(q,y_\ifin)
\leq \sum_{q\in N_\ifin} \left[d(r_q,q) + d(r_q,y_\ifin)\right]
\leq n2^{\ifin+1} + \sum_{q\in N_\ifin} d(r_q,y_\ifin).
\end{align*}
Thus,
\begin{align*}
\sum_{q\in Q} d(q,y_\ifin)
&= \sum_{q\in N_\ifin} d(q,y_\ifin) + \sum_{q\in F_\ifin} d(q,y_\ifin)\\
&\leq  n2^{\ifin+1} + \sum_{q\in N_\ifin} d(r_q,y_\ifin) + (1+\beta)\summ_\ifin \\
&\leq n2^{\ifin+1} + (1+\beta)(\summ_\ifin + \sum_{q\in N_\ifin} d(r_q,y_\ifin)) \\
&\leq   [2+ (1+\beta)(\alpha +2)]\ n2^\ifin.
\end{align*}
\end{proof}

\begin{lemma}\label{lem:sumiLB}
$\summ_\ifin \leq \frac{1}{1-\beta} \sum_{q\in F_\ifin} d(q,a)$.
\end{lemma}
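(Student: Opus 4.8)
The goal is to lower-bound the optimal cost $\sum_{q\in Q}d(q,a)$ in terms of $\summ_\ifin$, restricted to the far points $F_\ifin$. Recall from Eqn.~\eqref{eq:sumi} that $\summ_\ifin = \sum_{q\in F_\ifin} d(r_q,y_{i_q-1})$, so it suffices to show that for each far point $q$ individually, $d(r_q,y_{i_q-1})$ is not much larger than $d(q,a)$; summing over $q\in F_\ifin$ then gives the claimed bound with the factor $\frac{1}{1-\beta}$.

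The plan is to fix $q\in F_\ifin$ and use the triangle inequality along the path $r_q \to y_{i_q-1} \to a$, i.e.\ $d(r_q,y_{i_q-1}) \le d(r_q,a) + d(a,y_{i_q-1})$, and then further split $d(r_q,a) \le d(r_q,q) + d(q,a)$. The term $d(r_q,q)$ is controlled by Eqn.~\eqref{eq:qinQ}, giving $d(q,r_q)\le 2^{i_q+1}$, which by Eqn.~\eqref{eq:drq} is at most $\frac{4}{c'}d(r_q,y_{i_q-1})$. For the term $d(a,y_{i_q-1})$: note $y_\ifin$ is a $c$-list-descendant of $y_{i_q-1}$, so Lemma~\ref{lemma:net_c_descendant_distance} gives $d(y_{i_q-1},y_\ifin)\le c2^{i_q}$; combined with $d(a,a_{\ifin-1})\le 2^\ifin$ and Lemma~\ref{lem:aclosey} ($d(a_{\ifin-1},y_\ifin)\le c2^\ifin$), and using $\ifin \le i_q - 1$ (since $q$ became far at level $i_q-1 \ge \ifin$, so $i_q - 1 \ge \ifin$), we get $d(a,y_{i_q-1})$ bounded by a constant times $2^{i_q}$, hence by a small constant (controlled by $c'$) times $d(r_q,y_{i_q-1})$ via Eqn.~\eqref{eq:drq} again. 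Collecting all the "small constant'' terms into $\beta = \frac{4}{c'}+\frac{2c}{c'}$ (the same $\beta$ defined earlier, possibly after absorbing the extra constants from the $d(a,y_{i_q-1})$ estimate into the choice of $c'$), we obtain $d(r_q,y_{i_q-1}) \le d(q,a) + \beta\, d(r_q,y_{i_q-1})$, i.e.\ $(1-\beta)d(r_q,y_{i_q-1}) \le d(q,a)$.

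Summing this over all $q\in F_\ifin$ and applying Eqn.~\eqref{eq:sumi} yields $(1-\beta)\summ_\ifin \le \sum_{q\in F_\ifin} d(q,a)$, which rearranges to the statement. The main technical point to get right is the bookkeeping of which level each inequality lives at — in particular checking that $\ifin \le i_q-1$ so that the geometric factors $2^\ifin, 2^{i_q}$ can all be dominated by $2^{i_q-1}$ up to absolute constants, and then that the resulting aggregate constant can indeed be folded into the definition of $\beta$ by taking $c'$ large enough (so that $\beta$ stays below $1$, e.g.\ $\beta = \frac12$). No genuinely new idea is needed beyond the triangle-inequality chaining already used in Eqns.~\eqref{eq:sumFifinal}--\eqref{eq:sumNifinal} and Lemma~\ref{lem:aclosey}; the only mild subtlety is that here we chain towards the optimum $a$ rather than towards $y_\ifin$.
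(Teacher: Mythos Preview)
Your plan is correct and follows essentially the same argument as the paper: for each $q\in F_\ifin$, use the reverse triangle inequality $d(q,a)\ge d(r_q,y_{i_q-1})-d(q,r_q)-d(a,y_{i_q-1})$, bound the two subtracted terms by small multiples of $d(r_q,y_{i_q-1})$ via Eqn.~\eqref{eq:drq}, and sum using Eqn.~\eqref{eq:sumi}.

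The one place you diverge slightly is in bounding $d(a,y_{i_q-1})$. You route through $y_\ifin$, getting $d(a,y_{i_q-1})\le d(a,a_{\ifin-1})+d(a_{\ifin-1},y_\ifin)+d(y_\ifin,y_{i_q-1})\le 2^\ifin+c2^\ifin+c2^{i_q}$, which after using $\ifin\le i_q-1$ gives a constant strictly larger than $\tfrac{2c}{c'}$. The paper instead observes that $a_{\ifin-1}$ is itself a $c$-list-descendant of $y_{i_q-1}$ (via $y_\ifin$, by Lemma~\ref{lem:aclosey}), so Lemma~\ref{lemma:net_c_descendant_distance} applied directly yields $d(a_{\ifin-1},y_{i_q-1})\le c2^{i_q}-c2^{\ifin}$; adding $d(a,a_{\ifin-1})\le 2^\ifin$ gives exactly $d(a,y_{i_q-1})\le c2^{i_q}\le \tfrac{2c}{c'}d(r_q,y_{i_q-1})$. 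This is what makes the constant come out to the already-defined $\beta=\tfrac{4}{c'}+\tfrac{2c}{c'}$ on the nose, rather than requiring your caveat about ``absorbing extra constants into the choice of $c'$.'' Your version proves the lemma with a slightly larger $\beta'$ in place of $\beta$, which is fine for the theorem but not literally the statement as written.
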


\begin{proof}
First notice that

\begin{align*}
d(a,y_{i_q-1})
&\leq d(a,a_{\ifin-1}) + d(a_{\ifin-1},y_{i_q-1}) \\
&\leq 2^{\ifin} + c2^{i_q} - c2^{i_*}\\
& \leq c2^{i_q}\\
& \leq \frac{2c}{c'}d(r_q,y_{i_q-1}),
\end{align*}
where the bound on $d(a_{\ifin-1},y_{i_q-1})$ follows from Lemma~\ref{lemma:net_c_descendant_distance}. Therefore,
\begin{align*}
\sum_{q\in F_\ifin} d(q,a)
\geq \sum_{q\in F_\ifin} [d(r_q,y_{i_q-1}) - d(q,r_q) - d(a,y_{i_q-1}) ]
\geq  (1-\beta)\summ_\ifin.
\end{align*}
\end{proof}

We are now ready to provide a lower bound on the optimal solution.
Recall that $\hat y$ refers to its value at the end of the algorithm.
\begin{lemma}\label{lemma:bound_opt}
$\sum_{q\in Q} d(q,a) > (\alpha/{2} - 3)(1-\beta) n2^\ifin$
(assuming the algorithm returns from line 11).
\end{lemma}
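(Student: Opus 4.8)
The goal is to lower-bound $\OPT = \sum_{q\in Q} d(q,a)$ in terms of $n2^\ifin$, mirroring Lemma~\ref{lem:simpleLB} from the simple algorithm but now accounting for the split $Q = N_\ifin \cup F_\ifin$ and the accumulated estimate $\summ_\ifin$. The plan is to run the same argument as in Section~\ref{sec:simple1med}: since $a_{\ifin-1}\in L_{y_\ifin,\ifin,c}$ by Lemma~\ref{lem:aclosey}, the algorithm considered $a_{\ifin-1}$ as a candidate $\hat y$ in the final iteration, so the stopping condition~\eqref{eq:lastiter} gives
$$
  \alpha n 2^{\ifin-1} < \summ_\ifin + \sum_{q\in N_\ifin} d(r_q,\hat y) \le \summ_\ifin + \sum_{q\in N_\ifin} d(r_q, a_{\ifin-1}).
$$
Then I would bound the right-hand side from above by (a constant times) $\OPT$ plus lower-order terms, and rearrange.

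First I would handle the $N_\ifin$ sum: for each $q\in N_\ifin$, write $d(r_q,a_{\ifin-1}) \le d(r_q,q) + d(q,a) + d(a,a_{\ifin-1}) \le 2^{\ifin+1} + d(q,a) + 2^{\ifin}$ using~\eqref{eq:qinN} and $d(a,a_{\ifin-1})\le 2^\ifin$, so $\sum_{q\in N_\ifin} d(r_q,a_{\ifin-1}) \le 3n2^\ifin + \sum_{q\in N_\ifin} d(q,a)$. Next I would handle $\summ_\ifin$ via Lemma~\ref{lem:sumiLB}, which already states $\summ_\ifin \le \frac{1}{1-\beta}\sum_{q\in F_\ifin} d(q,a)$. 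Combining,
$$
  \alpha n 2^{\ifin-1} < \tfrac{1}{1-\beta}\sum_{q\in F_\ifin} d(q,a) + 3n2^\ifin + \sum_{q\in N_\ifin} d(q,a) \le \tfrac{1}{1-\beta}\OPT + 3n2^\ifin,
$$
since $1-\beta \le 1$ so both pieces are absorbed into $\frac{1}{1-\beta}\OPT$. Rearranging yields $\frac{1}{1-\beta}\OPT > (\alpha/2 - 3)n2^\ifin$, i.e.\ $\OPT > (\alpha/2 - 3)(1-\beta)n2^\ifin$, which is exactly the claimed bound.

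I expect the only mildly delicate point to be the bookkeeping in the special case $\ifin = \iroot - 1$ (no previous-to-last iteration) and the case where the algorithm returns from line~13 rather than line~11; the lemma statement explicitly restricts to the line-11 case, and the parenthetical remarks after~\eqref{eq:prevlastiter} and before~\eqref{eq:lastiter} indicate the degenerate cases are handled analogously, so I would simply invoke~\eqref{eq:lastiter} as given. The genuinely load-bearing ingredients — that $a_{\ifin-1}$ is available as a candidate (Lemma~\ref{lem:aclosey}) and the $\summ_\ifin$ lower bound (Lemma~\ref{lem:sumiLB}) — are already proved, so this lemma is essentially an assembly step with no real obstacle; the main thing to be careful about is tracking which constants ($\alpha$, $\beta$, the factor $3$) come from where and ensuring the $+2^\ifin$ terms from $d(a,a_{\ifin-1})$ and the $+2^{\ifin+1}$ terms from~\eqref{eq:qinN} are collected correctly into the additive $3n2^\ifin$.
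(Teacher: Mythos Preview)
Your proposal is correct and follows essentially the same approach as the paper's proof: start from~\eqref{eq:lastiter}, use Lemma~\ref{lem:aclosey} to replace $\hat y$ by $a_{\ifin-1}$, expand $d(r_q,a_{\ifin-1})$ via the triangle inequality and bound the pieces using~\eqref{eq:qinN} and $d(a,a_{\ifin-1})\le 2^{\ifin}$, then apply Lemma~\ref{lem:sumiLB} to $\summ_\ifin$ and absorb both partial sums into $\tfrac{1}{1-\beta}\sum_{q\in Q}d(q,a)$ before rearranging. The paper's proof is line-for-line the same chain of inequalities.
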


\begin{proof}
\begin{align*}
\alpha \cdot n\cdot 2^{\ifin-1}
&< \summ_\ifin + \sum_{q\in N_\ifin}d(r_q,\hat{y}) & \mbox{\eqref{eq:lastiter}}\\
& \leq \summ_\ifin + \sum_{q\in N_\ifin}d(r_q,a_{\ifin-1}) & \mbox{by Lemma~\ref{lem:aclosey} and choice of $\hat y$}\\
&\leq \summ_\ifin + \sum_{q\in N_\ifin}[d(r_q,q) + d(q,a) + d(a,a_{\ifin-1})] \\
&\leq \summ_\ifin + n2^{\ifin+1} +\sum_{q\in N_\ifin}d(q,a) + n2^{\ifin}\\
&\leq   3 n2^\ifin +   \tfrac{1}{1-\beta}  \sum_{q\in F_\ifin} d(q,a) + \sum_{q\in N_\ifin}d(q,a) & \mbox{by Lemma~\ref{lem:sumiLB}}\\
& \leq 3 n2^\ifin + \tfrac{1}{1-\beta} \sum_{q\in Q} d(q,a).
\end{align*}
\end{proof}

We conclude that the algorithm achieves approximation factor
\begin{align*}
\frac{\sum_{q\in Q} d(q,y_\ifin)}{\sum_{q\in Q} d(q,a) }
\leq \frac{[2+(1+\beta)(\alpha +2)]n2^\ifin}{(1-\beta)({\alpha}/{2}-
3)n2^\ifin}
= \frac{[2+(1+\beta)(\alpha +2)]}{(1-\beta)({\alpha}/{2}- 3)}.
\end{align*}

\subsection{Refinement to $(1+\veps)$--approximation}

Our goal now is to improve the approximation factor to $1+\veps$ for arbitrary $\veps>0$. We can utilize the fact that $a$ is a descendant of $a_{\ifin-1}$ in $T$, so $d(a,a_{\ifin-1}) \leq 2^\ifin$, and that $a_{\ifin-1}\in L_{y_\ifin, \ifin, c}$. As such, we can perform a descendant search, as in Lemma~\ref{lemma:descendant_search}, starting from each member of $L_{y_\ifin, \ifin, c}$, with refinement constant $\veps'= \Theta(\veps)$.
By Lemma~\ref{lemma:descendant_search} we are guaranteed to traverse a point $a_{\veps'}$ such that $d(a,a_{\veps'})\leq \veps' 2^{\ifin}$. However, we wish to avoid the high runtime of computing the cost of each center candidate by summing the distances from all of $Q$ to that point. Instead, we once again speed up the process by removing far points, and using weighted representatives for the rest.

\paragraph{Speeding up the descendants search.}
Define the set of far points $F = \{q\in Q: d(q,y_\ifin) > \frac{3c}{2\veps'}2^\ifin\}$ for some $\veps'>0$ to be determined later.
The set of near points is $N:=Q\setminus F$. The points in $F$ are ignored in this phase of the algorithm.
For the points in $N$ we wish to find good representatives so that the number of representatives is few, and the additive
distortion caused by replacing the query points in $N$ with their representative is very small. To this end, consider the set of
representatives obtained as follows. Each point $q\in N$ is mapped to its ancestor in the compacted $T|_Q$ which is in $Y_k$ for the largest $k\leq {\ifin-\log ({1}/{\veps''})}$ for $\veps''>0$ to be determined later. Call this set of representatives $R_{\veps''}$, and give each $r\in R_{\veps''}$ a weight
$\wt(r)$ which is the number of points in $N$ that were mapped to $r$. Notice that the process of this mapping and weighting can be done
efficiently by scanning the compacted $T|_Q$ in $O(n)$ time.
Now, for each center candidate $x$ obtained by a descendants search from each of the points in  $L_{y_\ifin,\ifin,c}$ by using Lemma~\ref{lemma:descendant_search} with refinement constant $\veps''$, we compute $\sum_{r\in R_{\veps''}}d(r,x)\cdot \wt(r)$, and take the candidate which minimizes this cost.

We want to argue that the candidate returned is a $1+\veps$ approximation from the optimum. Denote this candidate by $x$. Notice that one of the candidates must be a point $a_{\veps''}$ which is an ancestor of $a$ in $T$ and is in $Y_k$ for some $k\leq {\ifin-\log (1/\veps'')}$. Therefore, $\sum_{r\in R_{\veps''}}d(r,x)\cdot \wt(r) \leq \sum_{r\in R_{\veps''}}d(r,a_{\veps''})\cdot \wt(r)$.

\begin{lemma}\label{lem:1med_eps}
$\sum_{q\in Q} d(q,x) \leq (1+\veps)\sum_{q\in Q} d(q,a)$.
\end{lemma}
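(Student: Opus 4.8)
The plan is to bound the cost of the returned candidate $x$ by splitting $Q$ into the near set $N$ and the far set $F$, and controlling each part separately against the optimum. First I would record the global scale: by Lemma~\ref{lem:UByhat} and Lemma~\ref{lemma:bound_opt} we have $\sum_{q\in Q}d(q,y_\ifin)=\Theta(n2^\ifin)$ and $\sum_{q\in Q}d(q,a)=\Omega(n2^\ifin)$, so $2^\ifin$ is commensurate (up to absolute constants) with $\frac1n\mathrm{OPT}$, where $\mathrm{OPT}=\sum_{q\in Q}d(q,a)$. This is the key fact that lets all the $n2^\ifin$ error terms below be absorbed into $\veps\cdot\mathrm{OPT}$ by choosing $\veps',\veps''$ to be small enough multiples of $\veps$.

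Next I would handle the far points $F=\{q: d(q,y_\ifin)>\frac{3c}{2\veps'}2^\ifin\}$. For such $q$, the candidate $x$ obtained from a descendant search is within $\veps'2^\ifin + c2^{\ifin+1}=O(2^\ifin)$ of $y_\ifin$ (using that $x$ is a $c$-list-descendant of some point in $L_{y_\ifin,\ifin,c}$, hence Lemma~\ref{lemma:net_c_descendant_distance}, plus the refinement guarantee), and similarly $a$ is within $O(2^\ifin)$ of $y_\ifin$ through $a_{\ifin-1}$; since $d(q,y_\ifin)$ is much larger than $2^\ifin$ (by the factor $1/\veps'$), the triangle inequality gives $d(q,x)\le(1+O(\veps'))d(q,a)$ for every $q\in F$, so $\sum_{q\in F}d(q,x)\le(1+O(\veps'))\sum_{q\in F}d(q,a)$. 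For the near points $N$, I would use the representative set $R_{\veps''}$: each $q\in N$ satisfies $d(q,r_{\veps''}(q))\le 2^{\ifin-\log(1/\veps'')+1}=2\veps''2^\ifin$ by Lemma~\ref{lemma:ancestor_descendant_distance_T}, so replacing $q$ by its representative distorts any center's cost by at most $2\veps''n2^\ifin$ additively. Hence, writing the minimization inequality $\sum_{r\in R_{\veps''}}d(r,x)\wt(r)\le\sum_{r\in R_{\veps''}}d(r,a_{\veps''})\wt(r)$ and peeling off the representative errors on both sides, together with $d(a,a_{\veps''})\le\veps''2^\ifin$, I get $\sum_{q\in N}d(q,x)\le\sum_{q\in N}d(q,a)+O(\veps'')n2^\ifin$.

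Combining the two parts, $\sum_{q\in Q}d(q,x)\le(1+O(\veps'))\sum_{q\in Q}d(q,a)+O(\veps'')n2^\ifin$, and using $n2^\ifin=O(\mathrm{OPT})$ from the first step, this is $(1+O(\veps')+O(\veps''))\mathrm{OPT}$. Setting $\veps'$ and $\veps''$ to be suitably small constant multiples of $\veps$ yields the claimed $(1+\veps)$ bound. The main obstacle I anticipate is the bookkeeping in the near-point argument: one must be careful that the representative-induced additive error is applied consistently to both $x$ and $a_{\veps''}$ in the minimization inequality, and that the far-point threshold $\frac{3c}{2\veps'}2^\ifin$ is exactly calibrated so that the "large $d(q,y_\ifin)$ vs.\ small $2^\ifin$" comparison turns an additive $O(2^\ifin)$ slack into a multiplicative $(1+O(\veps'))$ factor — this is where the constant $\frac{3c}{2}$ in the definition of $F$ is doing its work, and getting the chain of triangle inequalities to close with that constant is the delicate step.
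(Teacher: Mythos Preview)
Your proposal is correct and follows essentially the same approach as the paper: split $Q$ into $F$ and $N$, handle $F$ by comparing $d(q,x)$ and $d(q,a)$ through $y_\ifin$ to get a multiplicative $(1+O(\veps'))$ factor, handle $N$ via the representative minimization inequality to get an additive $O(\veps'')n2^\ifin$ error, and then absorb that additive term using Lemma~\ref{lemma:bound_opt}. Two minor remarks: Lemma~\ref{lem:UByhat} is not actually needed (only the lower bound on $\mathrm{OPT}$ from Lemma~\ref{lemma:bound_opt} is used), and your explicit ``peel off the representative error on both sides'' for the near points is in fact more careful than the paper's write-up, which jumps directly from the minimization over $R_{\veps''}$ to $\sum_{q\in N}d(q,x)\le\sum_{q\in N}d(q,a_{\veps''})$ without displaying the intermediate $2\veps''n2^\ifin$ slack.
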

\begin{proof}
Denote by $x_{\ifin-1}$ the $c$-list-ancestor of $x$ in $L_{y_\ifin, \ifin, c} \subseteq Y_{\ifin-1}$. First, for every $q\in F$,
\begin{align*}
d(q,x)
&\leq d(q,y_\ifin) +d(y_\ifin,x_{\ifin-1})+ d(x_{\ifin-1},x)  \\
&\leq d(q,y_\ifin) + c2^{\ifin-1}  + c2^\ifin & \mbox{by $x_{\ifin-1}\in L_{y_\ifin, \ifin, c}$ and Lemma \ref{lemma:net_c_descendant_distance}}\\
&\leq d(q,y_\ifin) + \veps' d(q,y_\ifin) & \mbox{since $q\in F$}\\
&= (1+\veps')d(q,y_\ifin),
\end{align*}
and similarly,
\begin{align*}
d(q,y_\ifin)
&\leq d(q,a) + d(a,a_{\ifin-1}) +d(a_{\ifin-1},y_\ifin) \\
&\leq d(q,a) + 2^\ifin +c2^\ifin & \mbox{by Lemma \ref{lem:aclosey}}\\
&\leq d(q,a) + \veps' d(q,y_\ifin).  & \mbox{since $q\in F$}
\end{align*}
Therefore, $d(q,y_\ifin) \leq \frac{d(q,a)}{1-\veps'}$.
Combining this with our earlier inequality, we get
\begin{align}
  \label{eq:farpts}
  \sum_{q\in F} d(q,x) \leq \frac{1+\veps'}{1-\veps'} \sum_{q\in F} d(q,a).
\end{align}
For the near points, we have
\begin{align}
\label{eq:nearpts}
\sum_{q\in N} d(q,x)
\leq \sum_{q\in N} d(q,a_{\veps''})
\leq \sum_{q\in N} \big[d(q,a) + d(a,a_{\veps''})\big]
\leq \sum_{q\in N} \big[d(q,a) + \veps'' 2^{\ifin+1}\big].
\end{align}
Altogether, the cost of the reported center candidate $x$ is
\begin{align*}
\sum_{q\in Q} d(q,x)
&\leq \frac{1+\veps'}{1-\veps'}\sum_{q\in F}d(q,a) + \sum_{q\in N} [d(q,a) + \veps'' 2^{\ifin+1}] & \mbox{by Eqns.~\eqref{eq:farpts},\eqref{eq:nearpts}}\\
&\leq \frac{1+\veps'}{1-\veps'}\sum_{q\in Q}d(q,a) +2\veps'' n 2^{\ifin} \\
&\leq \frac{1+\veps'}{1-\veps'}\sum_{q\in Q}d(q,a) + \frac{2\veps''}{(\alpha/2 - 3)(1-\beta)}\sum_{q\in Q}d(q,a) & \mbox{by Lemma~\ref{lemma:bound_opt}}\\
& =\Big(\frac{1+\veps'}{1-\veps'} + \frac{2\veps''}{(\alpha/2 - 3)(1-\beta)}\Big)\sum_{q\in Q}d(q,a).
\end{align*}
Setting $\veps'':= \frac{(\alpha/2 - 3)(1-\beta) \veps}{2}$ and
$\veps':= \frac{\veps}{\veps+2}$, we get that $\sum_{q\in Q} d(q,x) \leq (1+\veps) \sum_{q\in Q} d(q,a).$
\end{proof}

\subsection{Runtime Analysis}
In the first part, the compacted version of $T|_Q$ is constructed (in line 1)
in time $O(n\log n)$ using Lemma~\ref{lem:TQ}.
At each iteration $i$ we locate $\hat y$ (in line 9),
which becomes $y_\ifin$.
The runtime of this step is proportional to the number of candidates
in the $c$-list multiplied by the size $R_i$.
The number of candidates is $|L_{y,i,c}| \leq 2^{O(\ddim)}$.
The size of $R_{i}$ is at most the number of points in $Y_{i}$ which are at most $c'2^i$ away from $y_i$.
We conclude that $|R_i| \leq 2^{\ddim\log({c'2^{i+1}}/{2^{i}})} \le 2^{O(\ddim)}.$

Computing $R_i$ from $R_{i+1}$ takes $2^{O(\ddim)}$ time per member of $R_{i+1}$, for a total of $2^{O(\ddim)}$ per iteration $i$. Thus, the total time spent on finding $y_\ifin$ is $O(n\log n + 2^{O(\ddim)}\log \Delta).$
For the descendants search used in the refinement to $(1+\veps)$--approximation,
we can bound the number of representatives as follows.
\begin{lemma}\label{lem:1med_rep}
$|R_{\veps''}| \leq \veps^{-O(\ddim)}.$
\end{lemma}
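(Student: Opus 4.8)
The plan is to bound $|R_{\veps''}|$ by showing that all representatives lie in a single net $Y_k$ and inside a ball of bounded radius around $y_\ifin$, so that the standard packing property for doubling metrics applies. Recall that $R_{\veps''}$ consists of ancestors (in the compacted $T|_Q$) of the near points $N = \{q \in Q : d(q,y_\ifin) \le \frac{3c}{2\veps'}2^\ifin\}$, where each such ancestor is taken at level $k$ for the largest $k \le \ifin - \log(1/\veps'')$. Strictly speaking the ancestors of different near points might sit at slightly different levels (since the compacted tree skips levels), but for the packing argument it suffices to pin down a uniform lower bound on their level, namely $k \ge \ifin - \log(1/\veps'') - O(1)$; alternatively one observes that each representative $r$ is an ancestor in $T$ of some $q\in N$ with $r \in Y_k$ for $k$ in the narrow range $[\ifin - \log(1/\veps''), \ifin]$, and handles this by a union over $O(\log(1/\veps''))$ levels, which only costs an extra factor absorbed into $\veps^{-O(\ddim)}$.

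First I would fix a representative $r \in R_{\veps''}$, say $r = r_k(q)$ for some near point $q \in N$ with $r \in Y_k$. Using Lemma~\ref{lemma:ancestor_descendant_distance_T} (ancestor–descendant distance in $T$), $d(q,r) \le 2^{k+1} \le 2^{\ifin+1}$. Hence by the triangle inequality
\[
  d(r, y_\ifin) \le d(r,q) + d(q,y_\ifin) \le 2^{\ifin+1} + \tfrac{3c}{2\veps'}2^\ifin = O(\veps'^{-1})\cdot 2^\ifin,
\]
using $q \in N$ for the second summand. So every representative lies in the ball $B(y_\ifin, O(\veps'^{-1})2^\ifin)$.

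Next I would invoke the packing property. All representatives at a fixed level $k$ belong to $Y_k$, which is a net obtained by iterated $2^i$-nets, so any two of them are at distance at least $2^k \ge \Omega(\veps'' \cdot 2^\ifin)$ apart. A ball of radius $O(\veps'^{-1})2^\ifin$ in a metric of doubling dimension $\ddim$ contains at most
\[
  2^{\ddim \cdot \log\!\big(O(\veps'^{-1}2^\ifin)/\Omega(\veps''2^\ifin)\big)} = (\veps'\veps'')^{-O(\ddim)}
\]
points that are pairwise $\Omega(\veps''2^\ifin)$-separated. Summing over the $O(\log(1/\veps''))$ possible levels and plugging in the earlier choices $\veps' = \Theta(\veps)$ and $\veps'' = \Theta(\veps)$ (with the constants $\alpha,\beta$ fixed), both $\veps'$ and $\veps''$ are $\Theta(\veps)$, so $|R_{\veps''}| \le \veps^{-O(\ddim)}$, as claimed.

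The main obstacle I anticipate is the bookkeeping around the compacted tree: the ``ancestor in the compacted $T|_Q$ at the largest level $k \le \ifin - \log(1/\veps'')$'' need not be a single net $Y_k$, so one must argue either that the relevant representatives all live within an $O(1)$-width band of levels (so a union bound over levels is harmless), or reduce to the un-compacted tree via Lemma~\ref{lemma:traversal of uncompacted_projection}. Once that is cleared, the distance bound and the packing count are routine and follow the same pattern as the size bounds on $R_i$ and on $c$-lists used earlier in the paper.
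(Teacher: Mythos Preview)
Your packing argument has a real gap. The representatives in $R_{\veps''}$ are chosen as ancestors in the \emph{compacted} $T|_Q$ at the largest level $k \le \ifin - \log(1/\veps'')$. Because compaction can collapse arbitrarily long non-branching paths, there is no lower bound of the form $k \ge \ifin - \log(1/\veps'') - O(1)$: a near point $q$ whose parent in the compacted tree sits above level $\ifin - \log(1/\veps'')$ has $q$ itself (at level $0$) as its representative. So your claimed separation $2^k \ge \Omega(\veps'' 2^{\ifin})$ fails, and summing over the actual range of levels $0 \le k \le \ifin - \log(1/\veps'')$ makes the packing count depend on $2^{\ifin}$ (hence on $\Delta$), not just on $\veps$. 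Your alternative ``$k$ in the narrow range $[\ifin - \log(1/\veps''), \ifin]$'' has the inequality backwards.

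The paper fixes exactly this obstacle by a charging argument rather than direct packing on the representatives. Each $r \in R_{\veps''}$ is charged to its ancestor in the \emph{un-compacted} $T|_Q$ at the single level $\ifin - \log(1/\veps'')$. The point is that this map is injective: if two distinct representatives $r_1,r_2$ shared the same level-$(\ifin - \log(1/\veps''))$ ancestor, their LCA in $T|_Q$ would be a branching node (hence present in the compacted tree) at some level $\le \ifin - \log(1/\veps'')$ strictly above one of them, contradicting the ``largest such level'' rule used to pick representatives. Once everything is charged into the single net $Y_{\ifin - \log(1/\veps'')}$ inside the ball of radius $\tfrac{3c}{2\veps'}2^{\ifin}$ around $y_{\ifin}$, the standard packing bound gives $\veps^{-O(\ddim)}$. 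Your ball-radius computation is fine; what is missing is this injective lift to a fixed net level.
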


\begin{proof}
If all of the representatives are in $Y_{\ifin-\log (1/\veps'')}$ then the size of $R_{\veps''}$ is at most the number of points in $Y_{\ifin-\log (1/\veps'')}$ which are at most $\frac{3c}{2\veps'}2^\ifin$ away from $y_\ifin$. The number of such points is bounded above by
\begin{align*}
2^{O(\ddim\log({\frac{3c}{2\veps'}2^\ifin}/{2^{\ifin-\log (1/\veps''}}))}
 = 2^{O(\ddim\log\frac{3c}{2\veps'\veps''})}
 \le \veps^{-O(\ddim)}.
\end{align*}
However, the representatives do not all have to be in $Y_{\ifin-\log (1/\veps'')}$. To overcome this, we charge each representative in $R_{\veps''}$ to a different point in $Y_{\ifin-\log (1/\veps'')}$. This mapping is done by assigning to each point in $R_{\veps''}$ its ancestor in the un-compacted $T|_Q$ which is in $Y_{\ifin-\log (1/\veps'')}$.
Notice that no two points in $R_{\veps''}$ can be assigned to the same point in $Y_{\ifin-\log (1/\veps'')}$, as otherwise there would be another node in the compacted $T|_Q$ which is an ancestor of those two points and in $Y_k$ for $k\leq{\ifin-\log (1/\veps'')}$, which contradicts the method in which the representatives were picked.
\end{proof}

The total cost of the descendants search is the number of representatives multiplied by the number of center candidates. From Lemma~\ref{lemma:descendant_search} we know that the number of candidates is at most $\veps''^{-O(\ddim)} \le \veps^{-O(\ddim)}$, and therefore, the runtime of this refinement stage is bounded by $\veps^{-O(\ddim)}$.

Overall, the runtime of computing a $(1+\veps)$--approximation for the 1-median is $O(n\log n) + 2^{O(\ddim)}\log \Delta(M) + \veps^{-O(\ddim)}$,
and this completes the proof of Theorem \ref{thm:eff1med}.

\section{Algorithm for $1$-Center}\label{app:1center}

It is helpful to see the solution for the $1$-center problem prior to seeing the solution for the $p$-center problem, as many of the ideas are similar, and the exposition with only one center is simpler. Therefore, we first prove the following.

\begin{theorem}
There is an algorithm that preprocesses a finite metric $M$
in time $2^{O(\ddim)}m\log \Delta \log\log \Delta(M)$ using $2^{O(\ddim)}m$ memory words,
where $m=|M|$, $\ddim=\ddim(M)$ and $\Delta=\Delta(M)$,
so that subsequent $1$-center queries on a set $Q\subseteq M$,
can be answered with approximation factor $1+\veps$,
for any desired $0<\veps\le1/2$,
in time $O(n\log n+ \log \log \log \Delta + \veps^{-O(\ddim)})$ ,
where $n=|Q|$.
\end{theorem}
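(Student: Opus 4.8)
The plan is to mirror the structure of the two 1-median algorithms, replacing the ``sum of distances'' objective with the ``max distance'' objective, and then exploit the fact that the $1$-center objective depends only on the single farthest query point. First I would run the same top-down traversal of the net hierarchy as in Section~\ref{sec:simple1med}/\ref{sec:eff1med}: maintain a candidate center $y_i\in Y_i$, and at each level pick $\hat y\in L_{y_i,i,c}$ minimizing $\max_{q\in Q} d(q,\hat y)$, descending as long as this value is at most (roughly) $2^{i-1}$ and otherwise returning $y_i$. The same argument as in Lemma~\ref{lemma:a_is_near}/Lemma~\ref{lem:aclosey} shows that at the stopping level $\ifin$, an ancestor $a_{\ifin-1}\in Y_{\ifin-1}$ of an optimal center $a$ lies in the $c$-list $L_{y_\ifin,\ifin,c}$ (using the triangle inequality $d(a_{\ifin-1},y_\ifin)\le d(a_{\ifin-1},a)+\max_q d(a,q)+\max_q d(q,y_\ifin)$, all three terms being $O(2^{\ifin})$), and that $\cntr(Q,y_\ifin)=O(2^{\ifin})$ while $\cntr(Q,a)=\Omega(2^{\ifin})$, giving a constant-factor approximation. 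Then I would refine to $(1+\veps)$ via a descendants search (Lemma~\ref{lemma:descendant_search}) from each point of $L_{y_\ifin,\ifin,c}$ with refinement constant $\Theta(\veps)$, producing $\veps^{-O(\ddim)}$ candidates, one of which is within $\veps 2^{\ifin}$ of $a$.

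The speedup to the stated query time is where the 1-center structure helps most. Instead of touching all $n$ query points at every one of the $\Theta(\log\Delta)$ levels, I would first compute $T|_Q$ in $O(n\log n)$ time (Lemma~\ref{lem:TQ}), then determine the stopping level $\ifin$ essentially ``for free'': since $\cntr$ is governed by a single farthest point, the correct level is determined by the diameter of $Q$, which can be read off by examining a constant number of top nodes of $T|_Q$ together with weighted level-ancestor queries, costing only $O(\log\log\log\Delta)$ per query (this is exactly why the theorem statement carries a $\log\log\log\Delta$ term and why we preprocessed the weighted level-ancestor structure on $T$ in Section~\ref{sec:DS}). Having located $\ifin$ and the corresponding $c$-list $L_{y_\ifin,\ifin,c}$, the final $(1+\veps)$ refinement needs to evaluate $\cntr(Q,x)=\max_{q\in Q}d(q,x)$ for each of the $\veps^{-O(\ddim)}$ candidates $x$; but again, since only the farthest point matters, I would precompute a small net of $Q$ at scale $\veps 2^{\ifin}$ (of size $\veps^{-O(\ddim)}$, extracted from $T|_Q$) and evaluate $\max$ over this net rather than over all of $Q$, incurring only an additive $\veps 2^{\ifin}\le \veps\,\cntr(Q,a)$ error. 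This makes the refinement cost $\veps^{-O(\ddim)}$, independent of $n$, and the total $O(n\log n + \log\log\log\Delta + \veps^{-O(\ddim)})$.

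The main obstacle I anticipate is making the ``determine $\ifin$ in $O(\log\log\log\Delta)$ time'' step fully rigorous: I need to argue that a crude ($O(1)$-factor) estimate of $\diam(Q)$, which pins down $\ifin$ up to an additive constant number of levels, can be obtained without a full traversal — using the DFS/LCA structure of $T|_Q$ to find two far-apart query points and a weighted level-ancestor query to name the level — and then that only $O(1)$ further levels of the plain traversal (each costing $2^{O(\ddim)}$, \emph{not} $O(n)$, because at these levels $R_i$ stays bounded as in the 1-median runtime analysis) suffice to reach the true stopping condition. The second delicate point is verifying that replacing $Q$ by a scale-$\veps 2^{\ifin}$ net in the final $\max$ evaluation preserves $(1+\veps)$-approximation; this is straightforward from the triangle inequality and the lower bound $\cntr(Q,a)=\Omega(2^{\ifin})$, but one must be careful that the net is of the near points only and that far points (which cannot be near-optimal anyway) are correctly discarded, exactly paralleling the far/near split in Lemma~\ref{lem:1med_eps}.
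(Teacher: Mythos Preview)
Your refinement stage---descendants search from the $c$-list at the stopping level, evaluated against a set of $\veps^{-O(\ddim)}$ representatives of $Q$ extracted from $T|_Q$---is essentially what the paper does. The difference, and the place where you are working much harder than necessary, is the constant-factor stage.

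You propose to mimic the 1-median top-down traversal and then worry (correctly) about how to locate the stopping level $\ifin$ without actually performing $\Theta(\log\Delta)$ iterations. But for 1-center there is no need for any traversal at all: \emph{any} point $y\in Q$ is already a $2$-approximation, since $\max_{q}d(y,q)\le \max_q\{d(y,a)+d(a,q)\}\le 2\OPT$. The paper simply picks an arbitrary $y\in Q$, computes $\ALG_0=\max_{q\in Q}d(y,q)$ in $O(n)$ time, sets $i$ so that $2^{i-1}<\ALG_0\le 2^i$, and issues a single weighted level-ancestor query to find $y_i\in Y_i$. This is the entire source of the $\log\log\log\Delta$ term. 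It completely dissolves the ``main obstacle'' you flag: you do not need to read off $\diam(Q)$ from the tree structure (which, incidentally, is delicate because the level of $\roott(T|_Q)$ only upper-bounds pairwise distances and need not lower-bound them).

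One more simplification you can drop: the far/near split from Lemma~\ref{lem:1med_eps} is unnecessary here. Since every $q\in Q$ satisfies $d(q,y)\le \ALG_0\le 2^i$, all of $Q$ lies within $O(2^i)$ of $y_i$, so there are no ``far'' points to discard; the representative set $R_{\veps'}$ covers all of $Q$ and has size $\veps^{-O(\ddim)}$ by the usual packing bound.
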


The preprocessing algorithm simply builds the net hierarchy for the metric $M$, and prepares it for weighted level ancestor queries
(see Section \ref{sec:DS}). For the query, we first recall a trivial algorithm that provides a $2$--approximation
for the $1$-center problem on a query set $Q\subset M$,
and then refine it to provide a $(1+\veps)$--approximation.

Let $a\in M$ be an optimal center,
and denote its value by $\OPT:=\max_{q\in Q} d(a,q)$.
Notice that every point $y\in Q$ gives a $2$--approximation,
because its objective value is
\[
 \ALG_0
 :=\max_{q\in Q} d(y,q)
 \leq \max_{q\in Q} \aset{d(y,a) + d(a,y)}
 \leq 2\cdot \OPT.
\]
We thus pick any point $y\in Q$ as our first approximation,
and proceed to the refinement stage.

\paragraph{Refinement to $(1+\veps)$--approximation.}
Let $i$ be an integer such that  $2^{i-1}< \ALG_0 \leq 2^i$, and notice that $OPT\leq ALG_0\leq 2^i$. We begin by locating the ancestor $y_i\in Y_i$ of $y$ in $T$. This can be done using a weighted level ancestor query~\cite{FM96,KL07}. We next show that $a$ is fairly close to $y_i$.

\begin{lemma}
Let $a_{i-1}\in Y_{i-1}$ be an ancestor of $a$ in $T$. Then $a_{i-1} \in L_{y_i,i,6}$.
\end{lemma}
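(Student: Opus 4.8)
The goal is to show that an ancestor $a_{i-1}\in Y_{i-1}$ of the optimal center $a$ lies in the $6$-list of $y_i$, i.e.\ that $d(y_i,a_{i-1})\le 6\cdot 2^{i-1}=3\cdot 2^i$. The natural route is a triangle-inequality chain through the query point $y$ and the optimal center $a$: bound $d(y_i,y)$, $d(y,a)$, and $d(a,a_{i-1})$ separately, then sum. First I would handle $d(y,a)$: since $y\in Q$, by definition of $\OPT$ we have $d(y,a)\le\OPT\le\ALG_0\le 2^i$. Next, $d(a,a_{i-1})\le 2^i$ by Lemma~\ref{lemma:ancestor_descendant_distance_T}, since $a_{i-1}$ is an ancestor of $a$ in $T$ with $a\in Y_j$ for some $j\le i-1$. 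Finally, $d(y,y_i)\le 2^{i+1}$, again by Lemma~\ref{lemma:ancestor_descendant_distance_T} applied to the ancestor $y_i\in Y_i$ of $y$ in $T$ (note $y\in Y_0\subseteq M$ and $i\ge 1$ since $\ALG_0\ge$ the minimum interpoint distance $=1$, so $2^i>1$).

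Summing these, $d(y_i,a_{i-1})\le d(y_i,y)+d(y,a)+d(a,a_{i-1})\le 2^{i+1}+2^i+2^i=2^{i+2}=4\cdot 2^i$. This gives the $6$-list membership only if the constant were $8$, not $6$; so the crude bound $d(y,y_i)\le 2^{i+1}$ is too lossy and I would instead use the sharper form of Lemma~\ref{lemma:ancestor_descendant_distance_T}, $d(y,y_i)\le 2^{i+1}-2^{0+1}=2^{i+1}-2$, together with $d(a,a_{i-1})\le 2^i-2^{j+1}$ where $Y_j\ni a$, $j\ge 0$, hence $d(a,a_{i-1})\le 2^i-2$. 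Combined with $d(y,a)\le\OPT$: if one additionally observes $\OPT\le \ALG_0\le 2^i$ is the only bound available, the total is $2^{i+1}-2+2^i+2^i-2 = 2^{i+2}-4$, which still slightly exceeds $3\cdot 2^i$. The resolution, which is the point I expect to be the main obstacle, is to choose $i$ more carefully or to track the $2$-approximation guarantee more tightly: since $\ALG_0\le 2\OPT$, we have $\OPT\ge \ALG_0/2 > 2^{i-1}/2 = 2^{i-2}$, but more usefully $d(y,a)\le\OPT$ and we can also bound $d(y,y_i)$ against $\OPT$ only indirectly. I would therefore re-examine whether the intended argument bounds $d(q,y_i)$ for the \emph{specific} $q$ realizing the objective, or whether a factor-$6$ (rather than factor-$8$) list is exactly calibrated so that $2^{i+1}+2^i=3\cdot 2^i$ suffices — which happens precisely if $d(y,a)+d(a,a_{i-1})\le 2^i$, i.e.\ if one uses that $a_{i-1}$ can be taken at a level where the ancestor-distance is small, or replaces $d(y,a)\le\OPT\le 2^i$ by the combined bound $d(y,a_{i-1})\le d(y,a)+d(a,a_{i-1})$ and the observation that $d(y,a)\le\OPT$ while $d(a,a_{i-1})$ telescopes.

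Concretely, the clean way to close the gap is: $d(y,a_{i-1})\le d(y,a)+d(a,a_{i-1})\le \OPT+2^i\le 2^i+2^i=2^{i+1}$ using $\OPT\le 2^i$; then $d(y_i,a_{i-1})\le d(y_i,y)+d(y,a_{i-1})\le 2^{i+1}+2^{i+1}=2^{i+2}=4\cdot 2^i$, requiring an $8$-list. If instead the paper uses $2^{i-1}<\ALG_0$ to get $\OPT>2^{i-2}$ — no help — I suspect the actual statement relies on $d(a,a_{i-1})\le 2^i-2^{j+1}$ and the strict form, plus possibly shrinking the search to start at level $i-1$ rather than $i$, so that $d(y_i,y)\le 2^i$ and the sum becomes $2^i+2^i+2^i=3\cdot 2^i$, exactly the $6$-list radius. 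I would write the proof using whichever telescoping/level-choice combination makes the three summands total $3\cdot 2^i$; the arithmetic is routine once the right level offsets are pinned down, and the only genuinely delicate point is verifying that $y_i$ is indeed an ancestor of $y$ at a level for which this works (which is guaranteed by the weighted level ancestor query at level $i$, or $i-1$, as appropriate).
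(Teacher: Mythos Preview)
Your triangle-inequality chain is correct and in fact slightly cleaner than the paper's, but you misread the target: by the definition $L_{y,i,c}:=\{z\in Y_{i-1}: d(y,z)\le c\cdot 2^i\}$, membership $a_{i-1}\in L_{y_i,i,6}$ means $d(y_i,a_{i-1})\le 6\cdot 2^i$, \emph{not} $6\cdot 2^{i-1}=3\cdot 2^i$. So the ``gap'' you spend most of the proposal trying to close does not exist. Your first computation
\[
d(y_i,a_{i-1})\le d(y_i,y)+d(y,a)+d(a,a_{i-1})\le 2^{i+1}+\OPT+2^i\le 2^{i+1}+2^i+2^i=4\cdot 2^i
\]
already proves the lemma, with room to spare.

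For comparison, the paper routes the chain through an arbitrary $q\in Q$,
\[
d(a_{i-1},y_i)\le d(a_{i-1},a)+d(a,q)+d(q,y)+d(y,y_i)\le 2^i+\OPT+2\OPT+2^{i+1}\le 6\cdot 2^i,
\]
using $d(a,q)\le\OPT$ and $d(q,y)\le\ALG_0\le 2\OPT$. Your route is shorter because you exploit $y\in Q$ directly to get $d(y,a)\le\OPT$ in one step, whereas the paper's detour costs an extra $2\OPT$. Both are fine; yours gives the tighter constant.
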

\begin{proof}
For every point $q\in Q$,
\[
  d(a_{i-1}, y_i)\leq d(a_{i-1},a) +d(a,q) + d(q,y)+d(y,y_i)
  \leq 2^{i} + \OPT+ 2\cdot\OPT+ 2^{i+1}
  \leq 6 \cdot 2^i.
\]
\end{proof}

This lemma implies that the optimal center $a$ is a descendant in $T$ of some point in $L_{y_i,i,6}$. Executing a descendants search from all the points in $L_{y_i,i,6}$ by using Lemma~\ref{lemma:descendant_search} with refinement constant $\veps$ will guarantee that we traverse a point $a_\veps$ such that $d(a,a_\veps)\leq \veps 2^i$.
Denote the set of the points seen in such a descendants search by $D$. Unfortunately, this process computes (separately) the cost of each candidate traversed by taking the maximum distances from all of $Q$ to that candidate,
which would take time $\veps^{-O(\ddim)}n$.
We can speed up this process by using (a few) representatives of $Q$,
as is explained next.

\paragraph{Speeding up the descendants search.}
We wish to find a bounded-size set of representatives for the points in $Q$,
such that the distortion caused by considering them (instead of $Q$) is small.
To this end, consider the set of representatives obtained as follows. Each point $q\in Q$ is mapped to its ancestor in the compacted $T|_Q$ which is in $Y_k$ for the largest $k\leq {i-\log({1}/\veps')}$, for some refinement constant $\veps'=\Theta(\veps)$ to be determined later. Call this set of representatives $R_{\veps'}$. Notice that $R_{\veps'}$ is a subset of the compacted $T|_Q$ and thus the process of this mapping can be done efficiently by scanning the compacted $T|_Q$ in linear time. Now, for each center candidate $x\in D$ we compute $\max_{r\in R_{\veps'}}d(r,x)$, and return the candidate $\hat x$ that minimizes this cost.

The next lemma shows that this algorithm achieves $(1+\veps)$--approximation.
\begin{lemma}
$\cntr(Q,\aset{\hat x}) = \max_{q\in Q}d(\hat x,q) \leq (1+\veps)\OPT$.
\end{lemma}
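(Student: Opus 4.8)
The plan is to bound $\cntr(Q,\{\hat x\})$ by comparing $\hat x$ against the specific candidate $a_{\veps'}\in D$, the ancestor of $a$ in $T$ at the appropriate level, which satisfies $d(a,a_{\veps'})\le \veps' 2^i$ by Lemma~\ref{lemma:descendant_search}. First I would relate the true cost $\max_{q\in Q} d(\hat x,q)$ to the representative-based cost $\max_{r\in R_{\veps'}} d(r,\hat x)$ that the algorithm actually minimizes. For each $q\in Q$ with representative $r\in R_{\veps'}$, the node $r$ is an ancestor of $q$ in $T|_Q$ (hence in $T$) lying in some $Y_k$ with $k\le i-\log(1/\veps')$, so by Lemma~\ref{lemma:ancestor_descendant_distance_T} we have $d(q,r)< 2^{k+1}\le 2^{i-\log(1/\veps')+1} = 2\veps' 2^i$. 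Thus the triangle inequality gives $d(\hat x,q)\le d(\hat x,r) + 2\veps' 2^i$ and conversely $d(r,x)\le d(x,q) + 2\veps' 2^i$ for any candidate $x$; so the representative cost and the true cost differ by at most an additive $2\veps' 2^i$ in each direction.

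Next I would chain the inequalities: the true cost of $\hat x$ is at most its representative cost plus $2\veps' 2^i$; since $\hat x$ minimizes the representative cost and $a_{\veps'}\in D$ is one of the candidates considered, this is at most the representative cost of $a_{\veps'}$ plus $2\veps' 2^i$, which in turn is at most the true cost of $a_{\veps'}$ plus another $2\veps' 2^i$. Finally, the true cost of $a_{\veps'}$ is $\max_{q} d(a_{\veps'},q)\le \max_q d(a,q) + d(a,a_{\veps'})\le \OPT + \veps' 2^i$. Assembling, $\cntr(Q,\{\hat x\})\le \OPT + \veps'2^i + 4\veps'2^i = \OPT + 5\veps' 2^i$ (the exact constant is immaterial).

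To convert the additive $2^i$ term into a multiplicative $(1+\veps)$ factor I would use $2^i < 2\,\ALG_0\le 4\cdot\OPT$ from the choice of $i$ in the refinement stage. Hence $\cntr(Q,\{\hat x\})\le \OPT + 20\veps'\OPT = (1+20\veps')\OPT$, and choosing the hidden constant in $\veps'=\Theta(\veps)$ — concretely $\veps' = \veps/20$ — yields $\cntr(Q,\{\hat x\})\le (1+\veps)\OPT$ as required. I would also double-check that $\veps'\le 1/2$ so that Lemma~\ref{lemma:descendant_search} applies, which holds since $\veps\le 1/2$.

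The main obstacle is bookkeeping the several additive error terms so they all come out as constant multiples of $\veps' 2^i$: the distortion from replacing $q$ by its representative enters \emph{twice} (once when passing from the true cost to the algorithm's cost for $\hat x$, once when passing back for $a_{\veps'}$), and the descendants-search error $d(a,a_{\veps'})\le\veps'2^i$ enters once; one must be careful that $R_{\veps'}$ is built from the \emph{compacted} $T|_Q$, so a representative may sit strictly above level $i-\log(1/\veps')$, but it still lies at a level $k\le i-\log(1/\veps')$, which is exactly what the distance bound needs. No step is genuinely hard; the care is entirely in keeping the constants and the level indices consistent, and in remembering to invoke $2^i=\Theta(\OPT)$ at the end to finish with a purely multiplicative guarantee.
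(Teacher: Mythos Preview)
Your proposal is correct and follows essentially the same route as the paper: relate the true cost to the representative-based cost via the ancestor bound $d(q,r)<2\veps'2^{i}$, use minimality of $\hat x$ to pass to the candidate $a_{\veps'}$, bound back to $\OPT$, and finish with $2^{i}<4\,\OPT$. The only cosmetic difference is that the paper collapses your two steps ``representative cost of $a_{\veps'}\to$ true cost of $a_{\veps'}\to\OPT$'' into a single triangle-inequality chain $d(a_{\veps'},r^{*})\le d(a_{\veps'},a)+d(a,q^{*})+d(q^{*},r^{*})$, yielding slightly different constants; your phrasing ``strictly above level $i-\log(1/\veps')$'' is a bit off (the representative is at level $k\le i-\log(1/\veps')$, i.e.\ possibly \emph{deeper}), but the inequality you actually use is the right one.
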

\begin{proof}
Every $q\in Q$ has a representative in $R_{\veps'}$,
for which we can apply Lemma~\ref{lemma:ancestor_descendant_distance_T}
and the triangle inequality, and thus
\[
  \max_{q\in Q}d(\hat x,q)
  < \max_{r\in R_{\veps'}} d(\hat x,r) + \veps' 2^{i+1}.
\]
Recall that one of the center candidates is some
$a_{\veps'}\in Y_{i-\log(1/\veps')}$ that is an ancestor of $a$ in $T$.
Therefore, the returned center $\hat x$ satisfies
\[
  \max_{r\in R_{\veps'}}d(\hat x,r)
  \leq \max_{r\in R_{\veps'}}d(a_{\veps'},r).
\]
Let $r^*\in R_{\veps'}$ be a maximizer for the righthand side,
and let $q^*\in Q$ be such that $r^*$ is a representative of $q^*$.
Using the triangle inequality and Lemma~\ref{lemma:ancestor_descendant_distance_T} again,
\[
  d(a_{\veps'},r^{*})
  \leq d(a,a_{\veps'}) + d(a,q^{*}) + d(q^{*},r^{*})
  \leq \OPT + 2\cdot \veps' 2^{i+1}.
\]
Recalling from earlier that $2^i < 2\ALG_0 \le 4\OPT$,
we finally combine the inequalities above and conclude that
$ \max_{q\in Q}d(\hat x,q)
  \leq \OPT + 3\cdot \veps 2^{i+1}
  \leq (1+24\veps)\OPT$.
To complete the proof, set $\veps'$ to be a power of $2$ in the range
$[\tfrac{\veps}{48},\tfrac{\veps}{24}]$.
\end{proof}

\paragraph{Runtime.}
The running time of the above query process is as follows. Locating $y_i$ using a weighted level ancestor query takes $O(\log \log \log \Delta)$ time as there are only $\log \Delta$ possible nets. After constructing $T|_Q$ in $O(n\log n)$ time, the mapping of every $q\in Q$ to its representative takes, altogether, $O(n)$ time.
The descendants search for each of the $O(2^{\ddim})$ points in $L_{y_i,i,6}$ takes time $\veps'^{-O(\ddim)} \leq \veps^{-O(\ddim)}$ time, which also bounds the number of candidates.
The number of representatives for $Q$ is bounded by the following lemma.

\begin{lemma}
$|R_{\veps'}| \leq \veps^{-O(\ddim)}.$
\end{lemma}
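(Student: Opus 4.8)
The plan is to bound $|R_{\veps'}|$ by the same packing argument used in Lemma~\ref{lem:1med_rep}, adapted to the $1$-center setting. Recall that each $q\in Q$ is mapped to its ancestor in the \emph{compacted} $T|_Q$ lying in $Y_k$ for the largest $k\le i-\log(1/\veps')$. So a representative $r\in R_{\veps'}$ lies in some net $Y_k$ with $k\le i-\log(1/\veps')$, but not necessarily in $Y_{i-\log(1/\veps')}$ itself, since the compacted tree may skip that exact level. Hence a direct packing bound inside $Y_{i-\log(1/\veps')}$ does not immediately apply, and this is the step that needs care.

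First I would handle the idealized case: if every representative happened to lie exactly in $Y_{i-\log(1/\veps')}$, then since every $q\in Q$ satisfies $d(q,y)\le \ALG_0\le 2^i$ and $y\in Q$, every representative $r$ is within $d(r,q)\le \veps' 2^{i+1}$ of some $q\in Q$ (by Lemma~\ref{lemma:ancestor_descendant_distance_T}), so $d(r,y)\le 2^i+\veps' 2^{i+1}\le 2^{i+1}$. Thus all representatives lie in a ball of radius $2^{i+1}$ around $y$, and they are points of the $2^{i-\log(1/\veps')}$-net $Y_{i-\log(1/\veps')}$ (using the net packing property: pairwise distances at least $2^{i-\log(1/\veps')}=\veps' 2^i$). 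A standard doubling-dimension packing bound then gives at most $2^{O(\ddim\cdot\log(2^{i+1}/(\veps' 2^i)))}=2^{O(\ddim\log(1/\veps'))}=\veps'^{-O(\ddim)}\le\veps^{-O(\ddim)}$ such points.

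Next I would remove the idealization exactly as in Lemma~\ref{lem:1med_rep}: charge each $r\in R_{\veps'}$ to its ancestor in the \emph{un-compacted} $T|_Q$ that lies in $Y_{i-\log(1/\veps')}$. The key observation is that this charging is injective --- no two distinct representatives can be charged to the same point $z\in Y_{i-\log(1/\veps')}$, because if $z$ were a common ancestor of two representatives $r_1,r_2$, then $z$ (or the branching node just below the point where $r_1,r_2$ diverge, which lies in some $Y_{k'}$ with $k'\le i-\log(1/\veps')$) would be a node of the \emph{compacted} $T|_Q$ that is an ancestor of the $q$'s mapped to $r_1$ and $r_2$ and sits at level $\le i-\log(1/\veps')$; this contradicts the choice of $r_1,r_2$ as the \emph{highest} such ancestors. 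Since the charged points all lie in $Y_{i-\log(1/\veps')}$ and (by the same distance bound as above) within a ball of radius $2^{i+1}$ around $y$, their number is $\veps'^{-O(\ddim)}\le\veps^{-O(\ddim)}$, and therefore so is $|R_{\veps'}|$.

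The main obstacle is the injectivity argument for the charging map: one must be careful that the ancestor in $Y_{i-\log(1/\veps')}$ used for charging is well-defined (it exists because every leaf sits at level $0\le i-\log(1/\veps')$, as $\veps'\le 1/2$ forces $\log(1/\veps')\ge 1$ and $i\ge 1$ here since $\ALG_0>2^{i-1}\ge 1$), and that a collision would genuinely contradict the maximality in the definition of $R_{\veps'}$. Everything else is a routine doubling-dimension packing estimate, using that $Q$ lies in a ball of radius $\ALG_0\le 2^i$ around the chosen point $y\in Q$.
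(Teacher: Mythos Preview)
Your proposal is correct and follows essentially the same approach as the paper: first handle the idealized case where all representatives lie in $Y_{i-\log(1/\veps')}$ via a doubling-dimension packing bound in a ball of radius $O(2^i)$, then reduce the general case to this by charging each representative to its level-$(i-\log(1/\veps'))$ ancestor in the un-compacted $T|_Q$ and arguing injectivity from the maximality in the definition of $R_{\veps'}$. The only cosmetic differences are that the paper centers its ball at $y_i$ with radius $2^i$ while you center at $y$ with radius $2^{i+1}$, and you spell out the injectivity and well-definedness a bit more explicitly; neither changes the argument. (One tiny slip: your justification that $i\ge 1$ from ``$\ALG_0>2^{i-1}\ge 1$'' is backwards, but this boundary issue is harmless for the asymptotic bound.)
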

\begin{proof}
If all of the representatives are in $Y_{i-\log \frac{1}{\veps'}}$ then the size of $R_{\veps'}$ is at most the number of points in $Y_{i-\log \frac{1}{\veps'}}$ which are at most $2^i$ away from $y_i$. The number of such points is bounded above by
\begin{align*}
2^{\ddim\cdot\log({2^i}/{2^{i-\log(1/\veps')}})}
 = (1/\veps')^{O(\ddim)}
 = \veps^{-O(\ddim)}.
\end{align*}
However, the representatives do not all have to be in $Y_{\ifin-\log (1/\veps')}$. To overcome this, we charge each representative in $R_{\veps'}$ to a different point in $Y_{\ifin-\log (1/\veps')}$. This mapping is done by assigning to each point in $R_{\veps'}$ its ancestor in the un-compacted $T|_Q$ which is in $Y_{\ifin-\log (1/\veps')}$.
Notice that no two points in $R_{\veps'}$ can be assigned to the same point in $Y_{\ifin-\log (1/\veps')}$, as otherwise there would be another node in the compacted $T|_Q$ which is an ancestor of those two points and in $Y_k$ for $k\leq{\ifin-\log (1/\veps')}$, which contradicts the method in which the representatives were picked.
\end{proof}

It follows that the time it takes to evaluate the cost of all center candidates
is (altogether) $\veps^{-O(\ddim)}$,
and thus the algorithm's total runtime of is
$O(n\log n + \log \log \log \Delta + \veps^{-O(\ddim)})$.

\section{Algorithm for $p$-center}\label{sec:p_center}
\begin{theorem}\label{thm:p_center}
There is an algorithm that preprocesses a finite metric $M$
in time $2^{O(\ddim)}m\log \Delta\log\log \Delta$ using $2^{O(\ddim)}m$ memory words,
where $m=|M|$, $\ddim=\ddim(M)$ and $\Delta=\Delta(M)$,
so that subsequent $p$-center queries on a set $Q\subseteq M$,
can be answered with approximation factor $1+\veps$,
for any desired $0<\veps\le1/2$,
in time $O(n \log n  + p\log \log \log \Delta + p^{p+1} \veps^{-O(p\cdot \ddim)})$ ,
where $n=|Q|$.
\end{theorem}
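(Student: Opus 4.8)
The plan is to follow the blueprint established for $1$-center, but now tracking $p$ centers simultaneously, and to organize the search so that the combinatorial blow-up is confined to a small ``core'' region. First I would reduce to the decision version: guess (up to a factor $2$) the optimal radius $\OPT$ by a constant-factor $p$-center approximation on $Q$ (e.g., a greedy furthest-point algorithm run on $T|_Q$), obtaining some $\ALG_0$ with $\OPT \le \ALG_0 \le 2\,\OPT$, and set $i$ with $2^{i-1} < \ALG_0 \le 2^i$. As in the $1$-center case, the key structural fact is that for each optimal center $a^{(j)}$, its level-$(i-1)$ ancestor $a^{(j)}_{i-1}$ in $T$ lies in the $c$-list (for a suitable constant $c$, here around $6$) of the level-$i$ ancestor of \emph{some} query point it serves; more precisely, since the $p$ optimal balls of radius $\OPT$ cover $Q$, the union of the $c$-lists $L_{y_i,i,c}$ over level-$i$ ancestors $y_i$ of a net (or of the at most $p\cdot\veps^{-O(\ddim)}$-size reduced query set) contains an ancestor of each $a^{(j)}$. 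Running a descendants search (Lemma~\ref{lemma:descendant_search}) with refinement $\veps' = \Theta(\veps)$ from each such $c$-list point yields a candidate set $D$ of size $\veps^{-O(\ddim)}$ that contains, for each $j$, a point $a^{(j)}_{\veps'}$ with $d(a^{(j)}, a^{(j)}_{\veps'}) \le \veps' 2^i$.

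Next I would address the crucial efficiency problem: enumerating all $p$-subsets of $D$ would cost $|D|^p = \veps^{-O(p\ddim)}$, which is fine in the $\veps$-dependence but we must also keep the number of query ``representatives'' small. So, mimicking the $1$-center speedup, I map every $q \in Q$ to its ancestor in the compacted $T|_Q$ at the largest level $k \le i - \log(1/\veps'')$ for $\veps'' = \Theta(\veps)$; call the resulting weighted set $R_{\veps''}$. The charging argument used in Lemma for $|R_{\veps'}|$ in the $1$-center section generalizes verbatim: each representative is charged to a distinct point of $Y_{i-\log(1/\veps'')}$ lying within $O(2^i)$ of the relevant anchor, and since the anchors here are the $p$ centers (each optimal ball has radius $\le \OPT \le 2^i$, so all served query points are within $O(2^i)$ of $a^{(j)}$), we get $|R_{\veps''}| \le p\cdot\veps^{-O(\ddim)}$ — the extra factor $p$ because we now cover $Q$ by $p$ balls rather than one. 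With $|D| = \veps^{-O(\ddim)}$ and $|R_{\veps''}| = p\veps^{-O(\ddim)}$, for every $p$-subset $S \subseteq D$ we evaluate $\max_{r \in R_{\veps''}} d(r, S)$ and return the best $S$; the number of $p$-subsets is $\binom{|D|}{p} \le \veps^{-O(p\ddim)}$, and each evaluation costs $p\cdot|R_{\veps''}| = p^2\veps^{-O(\ddim)}$, giving total $p^{O(1)}\veps^{-O(p\ddim)}$. (The stated bound has $p^{p+1}$; that extra $p^{\Theta(p)}$ factor most likely comes from a more careful accounting where the $c$-lists and descendants searches must be done around each of the (guessed) $p$ centers or their representatives separately, or from allowing candidate multiplicities — I would reconcile the exact exponent during write-up, but the shape $p^{O(p)}\veps^{-O(p\ddim)}$ is clearly what falls out.)

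The correctness argument then runs parallel to the $1$-center lemma. Fix the optimal center set $C^* = \{a^{(1)},\dots,a^{(p)}\}$ and let $C_{\veps'} = \{a^{(1)}_{\veps'},\dots,a^{(p)}_{\veps'}\} \subseteq D$ be the corresponding candidate $p$-tuple. For any $q \in Q$, served by $a^{(j)}$, and its representative $r = r(q) \in R_{\veps''}$, we have $d(r, C_{\veps'}) \le d(r, a^{(j)}_{\veps'}) \le d(r,q) + d(q, a^{(j)}) + d(a^{(j)}, a^{(j)}_{\veps'}) \le \veps'' 2^{i+1} + \OPT + \veps' 2^i$ by Lemma~\ref{lemma:ancestor_descendant_distance_T} and the triangle inequality. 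Since our returned $\hat S$ minimizes $\max_r d(r,\hat S)$ over $p$-subsets, $\max_r d(r,\hat S) \le \max_r d(r, C_{\veps'})$, and conversely for the true query point $q$ we pay at most $\max_r d(r,\hat S) + d(q, r(q)) \le \max_r d(r,\hat S) + \veps'' 2^{i+1}$. Chaining these and recalling $2^i \le 4\,\OPT$, every $q$ is within $\OPT + O(\veps' + \veps'')\OPT$ of $\hat S$; choosing $\veps', \veps''$ to be appropriate powers of $2$ in $[\Theta(\veps),\Theta(\veps)]$ gives the $(1+\veps)$ guarantee.

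For the runtime I would tally: the constant-factor warm-up on $T|_Q$ and the construction of the compacted $T|_Q$ cost $O(n\log n)$ (Lemma~\ref{lem:TQ}); locating the level-$i$ ancestors of the $O(p)$ relevant query points via weighted level-ancestor queries costs $O(p\log\log\log\Delta)$; the descendants searches and the enumeration over $p$-subsets cost $p^{O(p)}\veps^{-O(p\ddim)}$ as above; summing yields $O(n\log n + p\log\log\log\Delta + p^{p+1}\veps^{-O(p\ddim)})$. The main obstacle I anticipate is not any single estimate but the bookkeeping around the ``reduced query set'': to initialize the search we need, cheaply, a set of $O(p\cdot\veps^{-O(\ddim)})$ net points whose $c$-lists collectively capture ancestors of all $p$ optimal centers, and we must argue this is computable in $O(n\log n)$-ish time from $T|_Q$ without ever touching all of $M$ — this is exactly where the ``hooks into $M$'' provided by the projected tree are essential, and getting the levels/constants to line up so that one descendants search per hook suffices (rather than per (hook, center) pair, which would inflate the polynomial-in-$p$ factor further) is the delicate part.
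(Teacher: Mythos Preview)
Your proposal is essentially the paper's proof: Gonzalez's furthest-point heuristic on $Q$ gives a $2$-approximate center set $B\subset Q$ of size $p$; lift each $b\in B$ to its level-$i$ ancestor $b_i$; the descendants search from $\bigcup_{b\in B} L_{b_i,i,6}$ yields the candidate set $D$; representatives $R_{\veps'}$ of size $p\,\veps^{-O(\ddim)}$ are built exactly as you describe; and one exhaustively enumerates $p$-subsets of $D$.

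Two clarifications on the points you flagged as uncertain. First, the $p^{p+1}$ factor is not mysterious: because the descendants search is launched from the $c$-lists of \emph{all $p$} anchors $b_i$, one has $|D|=p\,\veps^{-O(\ddim)}$ (not $\veps^{-O(\ddim)}$ as you wrote), so $\binom{|D|}{p}\le p^{p}\veps^{-O(p\,\ddim)}$, and the per-subset evaluation cost $O(p\cdot|R_{\veps'}|)=O(p^2\veps^{-O(\ddim)})$ is absorbed to give $p^{p+1}\veps^{-O(p\,\ddim)}$. Second, the ``main obstacle'' you anticipate at the end does not arise: Gonzalez's algorithm runs directly on $Q$ in time $O(pn)$ (absorbed into $O(n\log n)+p^{p+1}$) and outputs $B\subset Q$, so the $p$ hooks into $M$ are simply the level-$i$ ancestors of the points of $B$, found by $p$ weighted level-ancestor queries---no reduced query set or per-(hook,center) search is needed.
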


The preprocessing algorithm simply builds the net hierarchy for the metric $M$, and prepares it for weighted level ancestor queries
(see Section \ref{sec:DS}). For the query, we first use the algorithm of Gonzalez from~\cite{Gonzalez85} on $Q$, which obtains a $2$-approximation for the $p$-center in $O(p\cdot n)$ time. In other words, the algorithm locates a set $B\subset Q$ of size $p$ such that if we denote its objective value as $\ALG_0 :=\max_{q\in Q} d(q,B)$, and if $A\subset M$ is an optimal $p$-center set with value $\OPT:=\max_{q\in Q} d(q,A)$, then $\ALG_0 \leq 2\cdot \OPT$.

\subsection{Refinement to $(1+\veps)$--approximation.}
Let $i$ be an integer such that  $2^{i-1}<\ALG_0 \leq 2^i$. For each $b\in B$ locate the ancestor $b_i\in Y_i$ of $b$ in $T$. This can be done using a weighted level ancestor query~\cite{FM96,KL07}. For a center $a\in \OPT$, let $a_{i-1}\in Y_{i-1}$ be an ancestor of $a$ in $T$.

\begin{lemma}
For every $a\in \OPT$ there exists a point $b\in B$ such that $a_{i-1} \in L_{b_i,i,6}$.
\end{lemma}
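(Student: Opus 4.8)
The plan is to mimic the structure of the $1$-center lemma, where a single point $y\in Q$ served as an anchor near the optimum; here each optimal center $a\in\OPT$ must be anchored to the Gonzalez center $b\in B$ that is closest to the set of query points it serves. First I would fix $a\in\OPT$, let $Q_a\subseteq Q$ be the set of query points assigned to $a$ in the optimal solution (so $d(q,a)\le\OPT$ for all $q\in Q_a$), and pick an arbitrary $q\in Q_a$. Since $B$ is a $2$-approximate $p$-center set for $Q$, there is some $b\in B$ with $d(q,b)\le\ALG_0\le 2\cdot\OPT\le 2^{i+1}$. This $b$ is the anchor I will use for $a$.

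Next I would chain the triangle inequality through the four hops $a_{i-1}\to a\to q\to b\to b_i$, exactly as in the $1$-center proof: $d(a_{i-1},a)\le 2^i$ by Lemma~\ref{lemma:ancestor_descendant_distance_T}; $d(a,q)\le\OPT\le 2^i$; $d(q,b)\le\ALG_0\le 2^{i+1}$; and $d(b,b_i)\le 2^{i+1}$, again by Lemma~\ref{lemma:ancestor_descendant_distance_T} since $b_i\in Y_i$ is the ancestor of $b$ in $T$ and $i>$ the bottom level. Summing these gives $d(a_{i-1},b_i)\le 2^i+2^i+2^{i+1}+2^{i+1}=6\cdot 2^i$, which is precisely the statement that $a_{i-1}\in L_{b_i,i,6}$ by the definition of the $6$-list. (One should double-check that $\OPT\le 2^i$: since $2^{i-1}<\ALG_0\le 2\cdot\OPT$ we indeed get $2^{i-1}<2\OPT$, i.e. $\OPT>2^{i-2}$, and from $\ALG_0\le 2^i$ and $\OPT\le\ALG_0$ we get $\OPT\le 2^i$ — so the bound $d(a,q)\le\OPT\le 2^i$ is fine.)

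The only genuine subtlety — and the step I would flag as the crux — is making sure the quantifier structure is right: the lemma asserts that \emph{for every} $a\in\OPT$ there \emph{exists} $b\in B$ with $a_{i-1}\in L_{b_i,i,6}$, and the existence of a suitable $b$ relies on $Q_a$ being nonempty, which holds as long as every optimal center actually serves at least one query point (if not, that center is irrelevant and can be ignored, or one reassigns arbitrarily). Beyond that, the argument is a routine four-term triangle-inequality estimate identical in spirit to the $1$-center case; no induction or net-packing machinery is needed here. I would present it as a three-line display of chained inequalities followed by the conclusion that the bound $6\cdot 2^i$ places $a_{i-1}$ in the $6$-list $L_{b_i,i,6}$.

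\begin{proof}[Proof sketch]
Fix $a\in\OPT$ and let $q\in Q$ be a query point served by $a$ in the optimal solution, so $d(q,a)\le\OPT$. Since $B$ is a $2$-approximate $p$-center set for $Q$, there is $b\in B$ with $d(q,b)\le\ALG_0\le 2\OPT\le 2^{i+1}$. Using Lemma~\ref{lemma:ancestor_descendant_distance_T} for the tree-ancestor distances $d(a_{i-1},a)\le 2^i$ and $d(b,b_i)\le 2^{i+1}$, together with $d(q,a)\le\OPT\le 2^i$, the triangle inequality yields
\[
  d(a_{i-1},b_i)\le d(a_{i-1},a)+d(a,q)+d(q,b)+d(b,b_i)\le 2^i+2^i+2^{i+1}+2^{i+1}=6\cdot 2^i,
\]
so $a_{i-1}\in L_{b_i,i,6}$.
\end{proof}
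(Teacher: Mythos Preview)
Your proof is correct and essentially identical to the paper's own argument: both fix $a\in\OPT$, pick a query point $q$ served by $a$, take the Gonzalez center $b\in B$ nearest to $q$, and chain the triangle inequality through $a_{i-1}\to a\to q\to b\to b_i$ to obtain the bound $6\cdot 2^i$. The only cosmetic difference is that the paper writes the middle terms as $\OPT+2\OPT$ while you convert to powers of $2$ one step earlier; your explicit handling of the nonemptiness of $Q_a$ is a small clarification the paper leaves implicit.
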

For every point $q\in Q$ which is assigned to $a$ in OPT, let $b$ be the center of the cluster of $q$ in $B$. Then $d(a_{i-1}, b_i)\leq d(a_i,a) +d(a,q) + d(q,b)+d(b,b_i)\leq 2^{i} + \OPT+ 2\cdot \OPT+ 2^{i+1} \leq 6 \cdot 2^i$. \qed

This implies that a center $a\in A$ is a descendant in $T$ of some point in $\bigcup_{b\in B} L_{b_i,i,6}$. Performing a descendants search from each of the points in $L_{b_i,i,6}$ by using Lemma~\ref{lemma:descendant_search} for some refinement constant $\veps'=\theta(\veps)$ to be determined later, will guarantee that for each $a\in\OPT$ we traverse a point $\hat{a}$ such that $d(a,\hat{a})\leq \veps' 2^i$. Denote the union of the points seen in such a descendants search by $D$. Unfortunately, this process computes (separately) the cost of each subset of size $p$ of candidates traversed by taking the maximum distances from all of $Q$ to that subset, which would take time $\veps^{-O(\ddim)}np$.
We can speed up this process by using (a few) representatives of $Q$.

\paragraph{Speeding up the descendants search.}

We wish to find a bounded-size set of representatives for the points in $Q$,
such that the distortion caused by considering them (instead of $Q$) is small.
To this end, consider the set of representatives obtained as follows. Each point $q\in Q$ is mapped to its ancestor in the compacted $T|_Q$ which is in $Y_k$ for the largest $k\leq {i-\log({1}/\veps')}$, for some refinement constant $\veps'=\Theta(\veps)$ to be determined later. Call this set of representatives $R_{\veps'}$. Notice that $R_{\veps'}$ is a subset of the compacted $T|_Q$ and thus the process of this mapping can be done efficiently by scanning the compacted $T|_Q$ in linear time. Now, for each set of $p$ center candidates $X\subset D$ we compute $\max_{r\in R_{\veps'}}d(r,X)$, and take the set of candidates $\hat{X}$ that minimizes this cost.

The next lemma shows that this algorithm achieves $(1+\veps)$--approximation.
\begin{lemma}
$\cntr(Q,\aset{\hat X}) = \max_{q\in Q}d(\hat X,q) \leq (1+\veps)\OPT$.
\end{lemma}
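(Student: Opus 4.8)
The plan is to follow the template already established for the $1$-center case, adapting each step from a single center to a set of $p$ centers. Throughout, write $A\subseteq M$ for an optimal $p$-center set and, for each $a\in A$, let $a_{i-1}\in Y_{i-1}$ be an ancestor of $a$ in $T$, and $\hat a\in D$ the point traversed by the descendants search with $d(a,\hat a)\le\veps' 2^i$. Let $\hat A:=\{\hat a: a\in A\}\subseteq D$, a set of at most $p$ candidates. The first step is to bound the cost incurred when each query point is served by $\hat A$: for every $q\in Q$, if $a\in A$ is the center serving $q$ in $\OPT$, then $d(q,\hat A)\le d(q,\hat a)\le d(q,a)+d(a,\hat a)\le \OPT + \veps' 2^i$. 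Since $2^i<2\ALG_0\le 4\OPT$, this already gives $\max_{q\in Q}d(q,\hat A)\le(1+4\veps')\OPT$, establishing that $\hat A$ (if we could evaluate it) is good.

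The second step handles the fact that the algorithm does not evaluate $X\subseteq D$ against all of $Q$ but only against the representative set $R_{\veps'}$. Here I would argue exactly as in the $1$-center proof. For the chosen set $\hat X$ and any $q\in Q$ with representative $r\in R_{\veps'}$, Lemma~\ref{lemma:ancestor_descendant_distance_T} and the triangle inequality give $d(\hat X,q)\le d(\hat X,r)+d(r,q)< \max_{r\in R_{\veps'}}d(\hat X,r)+\veps' 2^{i+1}$. By the choice of $\hat X$ as the minimizer of $\max_{r\in R_{\veps'}}d(r,X)$ over $p$-subsets $X\subseteq D$, and since $\hat A$ is one such candidate, $\max_{r\in R_{\veps'}}d(\hat X,r)\le\max_{r\in R_{\veps'}}d(\hat A,r)$. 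Now take $r^*\in R_{\veps'}$ attaining the latter maximum, and let $q^*\in Q$ have $r^*$ as its representative and $a^*\in A$ as its $\OPT$-center; then
\[
  d(\hat A,r^*)\le d(\widehat{a^*},r^*)\le d(a^*,\widehat{a^*})+d(a^*,q^*)+d(q^*,r^*)\le \veps' 2^i+\OPT+\veps' 2^{i+1}.
\]
Combining the three displayed bounds with $2^i<4\OPT$ yields $\max_{q\in Q}d(\hat X,q)\le \OPT + O(\veps')\cdot 2^{i}\le(1+O(\veps'))\OPT$, and choosing $\veps'$ to be an appropriate power of $2$ of order $\Theta(\veps)$ (so that Lemma~\ref{lemma:descendant_search} applies and the constant works out to $1+\veps$) finishes the argument.

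The step I expect to require the most care is the bookkeeping around \emph{which} descendants search produces $\hat a$ and ensuring $\hat A$ is genuinely among the $p$-subsets $X\subseteq D$ the algorithm enumerates. Concretely: each $a\in A$ is a descendant of $a_{i-1}$, and $a_{i-1}\in L_{b_i,i,6}$ for \emph{some} $b\in B$ by the preceding lemma, so $\hat a$ is traversed by the descendants search launched from that particular $b$; hence $\hat a\in D$ and $\hat A\subseteq D$ with $|\hat A|\le p$, so $\hat A$ is indeed one of the enumerated $p$-subsets. One must also be slightly careful that distinct optimal centers may collapse to the same $\hat a$, so $|\hat A|\le p$ rather than $=p$; this is harmless since the algorithm ranges over all subsets of size \emph{at most} $p$ (or one pads $\hat A$ arbitrarily up to size $p$ without increasing its cost). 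The remaining ingredients — the size bound $|D|\le p\cdot\veps^{-O(\ddim)}$ from Lemma~\ref{lemma:descendant_search}, the bound $|R_{\veps'}|\le\veps^{-O(\ddim)}$ proved exactly as in the $1$-center case, and hence the $p^{p+1}\veps^{-O(p\cdot\ddim)}$ enumeration cost — are routine adaptations and belong to the runtime analysis rather than this correctness lemma.
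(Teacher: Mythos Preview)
Your proof is correct and follows essentially the same route as the paper: pass from $Q$ to representatives $R_{\veps'}$ via Lemma~\ref{lemma:ancestor_descendant_distance_T}, use the minimality of $\hat X$ to compare against the candidate set $\hat A$ (the paper calls it $A_{\veps'}$, defined as the level-$(i-\log(1/\veps'))$ ancestors of the optimal centers, which are exactly the points your descendants search reaches), bound $d(\hat A,r^*)$ by $\OPT$ plus two $\veps' 2^{i+1}$-type terms, and finish with $2^i<4\OPT$. Your additional first paragraph (the direct bound on $\hat A$ against $Q$) and your explicit discussion of the $|\hat A|\le p$ padding issue are nice extra sanity checks that the paper omits, but the core argument is identical.
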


\begin{proof}
Every $q\in Q$ has a representative in $R_{\veps'}$,
for which we can apply Lemma~\ref{lemma:ancestor_descendant_distance_T}
and the triangle inequality, and thus
\[
  \max_{q\in Q}d(\hat X,q)
  < \max_{r\in R_{\veps'}} d(\hat X,r) + \veps' 2^{i+1}.
\]
Recall that one of the sets of center candidates is some
$A_{\veps}\subset Y_{i-\log(1/\veps')}$ that is the set of ancestors of every $a\in A$ in $T$, where $A$ is an optimal solution.
Therefore, the returned center set $\hat X$ satisfies
\[
  \max_{r\in R_{\veps'}}d(\hat X,r)
  \leq \max_{r\in R_{\veps'}}d(A_{\veps'},r).
\]
Let $r^*\in R_{\veps'}$ be a maximizer for the righthand side,
and let $q^*\in Q$ be such that $r^*$ is a representative of $q^*$. Let $a\in A$ be the center in $A$ which is closest to $q^*$, and let $a_{\veps'}$ be the ancestor of $a$ in $A_{\veps'}$.
Using the triangle inequality and Lemma~\ref{lemma:ancestor_descendant_distance_T} again,
\[
  d(A_{\veps'},r^{*})
  \leq d(a_{\veps'},a) + d(a,q) + d(q,r^{*})
  \leq \OPT + 2\cdot \veps' 2^{i+1}.
\]
Recalling from earlier that $2^i < 2\ALG_0 \le 4\OPT$,
we finally combine the inequalities above and conclude that
$ \max_{q\in Q}d(\hat X,q)
  \leq \OPT + 3\cdot \veps 2^{i+1}
  \leq (1+24\veps)\OPT$.
To complete the proof, set $\veps'$ to be a power of $2$ in the range
$[\tfrac{\veps}{48},\tfrac{\veps}{24}]$.
\end{proof}

\subsection{Runtime}
The running time of the above process is as follows. Locating $b_i$ for all $b\in B$ using a weighted level ancestor queries takes $O(p\log \log \log \Delta)$ as there are only $\log \Delta$ possible nets. After constructing $T|_Q$ in $O(n\log n)$ time, the mapping of each $q\in Q$ to its representative takes another $O(n)$ time. The descendants search from all of the $O(2^{\ddim})$ points in $L_{b_i,i,6}$ takes $O(p\veps'^{-O(\ddim)}) = O(p\veps^{-O(\ddim)})$, which also bounds the number of candidates. The number of representatives can be bounded by the following lemma.

\begin{lemma}
$|R_{\veps'}| \leq p\veps^{-O(\ddim)}.$
\end{lemma}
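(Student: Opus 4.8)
The plan is to mimic the structure of the corresponding lemma for $1$-center (the one stating $|R_{\veps'}|\leq \veps^{-O(\ddim)}$), but this time accounting for the fact that the ``near'' region is a union of $p$ balls rather than a single ball. Recall that each $q\in Q$ lies within distance $\ALG_0 \leq 2^i$ of some center $b\in B$, and $|B| = p$. So every $q\in Q$ is within distance $2^i$ of one of the $p$ points $b_i\in Y_i$ (its $c$-list-ancestors were just located via weighted level ancestor queries), and hence every representative $r\in R_{\veps'}$ — being an ancestor in the compacted $T|_Q$ of some $q\in Q$ and sitting in a net $Y_k$ with $k\leq i-\log(1/\veps')$ — is within distance $2^i + 2^{k+1} \leq 3\cdot 2^i$ (say) of one of the $p$ points $b_i$, using Lemma~\ref{lemma:ancestor_descendant_distance_T} to bound $d(q,r)$.

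First I would handle the clean case: suppose every representative actually lies in the single net $Y_{i-\log(1/\veps')}$. Then $R_{\veps'}$ is contained in the union, over the $p$ points $b_i$, of the balls of radius $O(2^i)$ restricted to that net. By the standard net-packing bound for doubling metrics, each such ball contains at most $2^{\ddim\log(O(2^i)/2^{i-\log(1/\veps')})} = (O(1/\veps'))^{O(\ddim)} = \veps^{-O(\ddim)}$ net points. Multiplying by $p$ gives $|R_{\veps'}| \leq p\veps^{-O(\ddim)}$.

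Next I would remove the assumption that all representatives lie in a fixed net level, exactly as in the $1$-center proof: charge each $r\in R_{\veps'}$ to its (unique) ancestor in the \emph{un-compacted} $T|_Q$ that lies in $Y_{i-\log(1/\veps')}$. No two representatives can be charged to the same such node, since otherwise that common ancestor (or a branching ancestor of it) would itself be a node of the compacted $T|_Q$ lying in some $Y_k$ with $k\leq i-\log(1/\veps')$ that is a common ancestor of two points mapped to distinct representatives — contradicting the rule that each $q$ is mapped to its \emph{highest} compacted ancestor at level $\leq i-\log(1/\veps')$. So the charging is injective into the set counted in the clean case, and the bound $p\veps^{-O(\ddim)}$ follows. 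The only mildly delicate point — and the one I would be most careful about — is getting the geometry of the ``near'' region right: confirming that every representative really does sit within $O(2^i)$ of one of the $p$ anchor points $b_i$ (as opposed to only the original query points doing so), so that the $p$-fold packing argument applies to $R_{\veps'}$ itself; this is where I would invoke Lemma~\ref{lemma:ancestor_descendant_distance_T} together with $\ALG_0\leq 2^i$, and it is routine once set up correctly.
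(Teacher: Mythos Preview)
Your proposal is correct and follows essentially the same approach as the paper: first bound the clean single-level case by packing the net $Y_{i-\log(1/\veps')}$ inside the union of $p$ balls of radius $O(2^i)$ around the Gonzalez centers, then reduce the general case to this one via the same injective charging of each representative to its ancestor in the un-compacted $T|_Q$ at level $i-\log(1/\veps')$. Your write-up is in fact slightly more careful than the paper's in justifying why each representative lies near some $b\in B$ (invoking Lemma~\ref{lemma:ancestor_descendant_distance_T} explicitly), but the argument is the same.
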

\begin{proof}
If all of the representatives are in $Y_{\ifin-\log (1/\veps')}$ then the size of $R_{\veps'}$ is at most the number of points in $Y_{i-\log ({1}/{\veps'})}$ which are at most $2^i$ away from each of the $p$ points in $B$. The number of such points is bounded above by
\begin{align*}
p\veps'^{-O(\ddim)}
 \leq p\veps^{-O(\ddim)}.
\end{align*}
However, the representatives do not all have to be in $Y_{\ifin-\log (1/\veps')}$. To overcome this, we charge each representative in $R_{\veps'}$ to a different point in $Y_{\ifin-\log (1/\veps')}$. This mapping is done by assigning to each point in $R_{\veps'}$ its ancestor in the un-compacted $T|_Q$ which is in $Y_{\ifin-\log (1/\veps')}$.
Notice that no two points in $R_{\veps'}$ can be assigned to the same point in $Y_{\ifin-\log (1/\veps')}$, as otherwise there would be another node in the compacted $T|_Q$ which is an ancestor of those two points and in $Y_k$ for $k\leq{\ifin-\log (1/\veps')}$, which contradicts the method in which the representatives were picked.
\end{proof}

Thus, the time it takes to test each of the ${p\veps^{-O(\ddim)} \choose p}$ candidates is at most $O(p\veps^{-O(\ddim)})$, and the total runtime of the algorithm is $O(n\log n + p\log \log \log \Delta + p^{p+1}\veps^{-p\cdot \ddim})$. Notice that the runtime of the algorithm of Gonzalez is $O(np)$, which is always bounded from above by $O(n\log n+p^{p-1})$, and can thus be absorbed by the other terms.

\section{Algorithm for $p$-median}
\label{app:pmedian}

\begin{theorem}\label{thm:p_median}
There is an algorithm that preprocesses a finite metric $M$
in time $2^{O(\ddim)}m\log \Delta\log\log \Delta$ using $2^{O(\ddim)}m$ memory words,
where $m=|M|$, $\ddim=\ddim(M)$, and $\Delta=\Delta(M)$,
so that subsequent $p$-median queries on a set $Q\subseteq M$ of size $n$,
can be answered within approximation factor $1+\veps$,
for any desired $\veps\in(0,\tfrac12]$,
in time $O(n\log n) + \veps^{-O(\ddim)} (p\cdot \log n)^{O(1)} \cdot \log \log \log \Delta + {\veps ^{-O(p\cdot \ddim)}} (p\cdot \log n)^{O(p)}$.
\end{theorem}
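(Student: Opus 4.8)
The plan is to follow the same three-part template used for $1$-median (Theorem~\ref{thm:eff1med}) and $p$-center (Theorem~\ref{thm:p_center}): a constant-factor ``descent'' phase down the net hierarchy, then a refinement phase that replaces $Q$ by a small weighted representative set, and finally an exhaustive search over $p$-tuples of centroid candidates. First I would run the descent phase. Starting from the root of $T|_Q$, maintain a current candidate set of $p$ centers, one per level, and iterate down the levels $i$. As in the $1$-median algorithm (Figure~\ref{fig:1medEffecient}), each query point is represented by an ancestor in $R_i\subset Y_i$, and once a representative gets farther than $c'2^i$ from \emph{all} current centers it is ``frozen'': its weighted distance to the nearest current center is added to a running variable $\summ$, exactly paralleling lines 6--8. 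For choosing the next level-$(i-1)$ centers, for each of the $p$ current centers $y_i^{(j)}$ we scan $L_{y_i^{(j)},i,c}$; since the optimal solution could split points among the $p$ clusters in $\binom{|L|}{p}=2^{O(p\cdot\ddim)}$ ways relative to these lists, the honest way to preserve a constant approximation is to branch over all choices of which list element to descend to for each of the $p$ centers together with an assignment of the (bounded number of) representatives to centers. I would carry a set of $\veps^{-O(p\cdot\ddim)}$ candidate center-tuples down the hierarchy, pruning those whose accumulated cost exceeds $\alpha n 2^{i-1}$; this is where the $(p\log n)^{O(p)}$ factor enters, since the number of surviving branches over $O(\log n)$ effective levels multiplies out.

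Second, the stopping-condition analysis: when no candidate tuple has cost below $\alpha n2^{i-1}$, set $\ifin:=i$, and argue as in Lemmas~\ref{lem:sumfinal}--\ref{lemma:bound_opt} that the best surviving tuple $Y_\ifin=(y_\ifin^{(1)},\dots,y_\ifin^{(p)})$ both upper-bounds $\median(Q,Y_\ifin)\le O(\alpha)\,n2^\ifin$ and that $\OPT=\median(Q,A)>\Omega(\alpha)\,n2^\ifin$. The key new ingredient over the $1$-median proof is a matching lemma: for an optimal $p$-median solution $A=\{a^{(1)},\dots,a^{(p)}\}$, the level-$(\ifin-1)$ ancestors $a_{\ifin-1}^{(j)}$ of its centers lie in $\bigcup_j L_{y_\ifin^{(j')},\ifin,c}$ for a suitable matching $j\mapsto j'$; this follows exactly as the corresponding lemma in Section~\ref{sec:p_center} by routing through a shared query point and using optimality of $A$, except one must be careful that the matching is consistent with how clusters were split during the descent. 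Provided the branching in the descent phase really did enumerate all relevant assignments, such a tuple of ancestors was among the candidates considered at level $\ifin$, giving the lower bound on $\OPT$.

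Third, the refinement to $1+\veps$. Define far points $F=\{q: d(q,Y_\ifin)>\tfrac{3c}{2\veps'}2^\ifin\}$ and drop them (their cost is dominated, exactly as in Lemma~\ref{lem:1med_eps}); for near points $N=Q\setminus F$, map each $q$ to its ancestor in the compacted $T|_Q$ at the largest level $k\le \ifin-\log(1/\veps'')$, giving a weighted representative set $R_{\veps''}$ whose size is $p\,\veps^{-O(\ddim)}$ by the same charging argument as in the $p$-center lemma (each representative injects into a distinct point of $Y_{\ifin-\log(1/\veps'')}$ near one of the $p$ centers). Then perform a descendant search (Lemma~\ref{lemma:descendant_search}) with refinement $\veps''$ from each point of each $L_{y_\ifin^{(j)},\ifin,c}$, obtaining a centroid set $D$ of size $p\,\veps^{-O(\ddim)}$ that contains, for every $a^{(j)}\in A$, a point $a_{\veps''}^{(j)}$ with $d(a^{(j)},a_{\veps''}^{(j)})\le\veps''2^{\ifin+1}$. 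Enumerate all $\binom{|D|}{p}=\veps^{-O(p\cdot\ddim)}$ size-$p$ subsets $X\subset D$, evaluate $\sum_{r\in R_{\veps''}}\wt(r)\,d(r,X)$ for each in time $|R_{\veps''}|\cdot p$, and return the minimizer; the approximation guarantee follows by setting $\veps',\veps''=\Theta(\veps)$ and chaining the far-point and near-point bounds with the lower bound on $\OPT$, just as in Lemma~\ref{lem:1med_eps}.

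For the runtime: $T|_Q$ costs $O(n\log n)$ (Lemma~\ref{lem:TQ}); the descent has $O(\log\Delta)$ levels but only $O(\log n)$ of them are ``active'' after compaction of $T|_Q$, at each of which we process $\veps^{-O(p\cdot\ddim)}(p\log n)^{O(p)}$ candidate tuples using weighted level ancestor queries costing $O(\log\log\log\Delta)$ each, yielding the $\veps^{-O(\ddim)}(p\log n)^{O(1)}\log\log\log\Delta$ term; the final enumeration costs $\veps^{-O(p\cdot\ddim)}(p\log n)^{O(p)}$. The main obstacle I anticipate is controlling the branching in the descent phase so that it remains $\veps^{-O(p\cdot\ddim)}(p\log n)^{O(p)}$ rather than exploding: naively, each of $O(\log n)$ levels multiplies the number of live branches by $2^{O(p\cdot\ddim)}$, which is already $2^{O(p\cdot\ddim\log n)}=n^{O(p\cdot\ddim)}$ — too large. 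The fix must be that the number of \emph{distinct} center-tuples reachable with cost below the threshold is bounded independently of the number of levels (because they all lie within a bounded ball around a single optimal tuple, by a packing argument), so one prunes aggressively and never lets the frontier grow past $\veps^{-O(p\cdot\ddim)}$; the $(p\log n)^{O(p)}$ then comes only from the assignment bookkeeping at a single level (choosing, for each of up to $p\log n$ representatives, which of $p$ centers it attaches to), not from a product over levels. Getting this packing/pruning bound exactly right, and making the ``consistent matching'' between the algorithm's split and $A$'s clustering precise, is the delicate part of the argument.
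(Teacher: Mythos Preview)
Your approach diverges substantially from the paper's and carries a gap you yourself flag but do not close. The paper does \emph{not} extend the $1$-median descent to $p$ simultaneous centers; instead it imports the Har-Peled--Mazumdar framework wholesale: (i) obtain a \emph{bicriteria} constant-factor approximation with $\hat p=p\cdot\log^{O(1)}n$ centers, (ii) build a weighted coreset $S$ of size $\veps^{-O(\ddim)}(p\log n)^{O(1)}$ by snapping $Q$ to an ``exponential grid'' around each of the $\hat p$ centers, (iii) build a centroid set $D$ of the same order by exponential grids around the points of $S$, and (iv) brute-force all $\binom{|D|}{p}$ center tuples evaluated against $S$. The only new technical ingredient is the exponential-grid construction in a doubling metric, done via weighted level-ancestor queries plus the descendant search of Lemma~\ref{lemma:descendant_search} at $O(\log n)$ scales; that is precisely where the $\log\log\log\Delta$ factor (one level-ancestor query per coreset point) and the $(p\log n)^{O(p)}$ term (the final enumeration) arise. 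No multi-center descent down the hierarchy ever occurs.

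Your descent-with-branching scheme, by contrast, has the explosion you identify: each of $\Theta(\log n)$ active levels multiplies the frontier of live $p$-tuples by $2^{O(p\cdot\ddim)}$, yielding $n^{O(p\cdot\ddim)}$. Your proposed cure---that all surviving tuples lie in a bounded ball around the optimal tuple, so a packing argument caps the frontier at $\veps^{-O(p\cdot\ddim)}$---does not work as stated. At level $i$ the surviving center-tuples live in $(Y_{i})^p$, and packing in $Y_i$ only limits how many distinct net points can appear in a ball of radius $O(2^i)$; it does not prevent the same net points from recombining into exponentially many tuples, nor does it control the bookkeeping of which representatives were frozen along which branch (different branches may have different $\summ$ values and different $R_i$'s, so they cannot simply be merged). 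The ``consistent matching'' lemma you need---that some surviving branch agrees with the optimal clustering---is exactly what forces you to keep all branches alive, and you give no mechanism to collapse them. The paper sidesteps all of this by never descending with $p$ centers: the bicriteria step and coreset reduce everything to a single-shot enumeration on a polylogarithmic-size instance.
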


To a large extent, we follow an algorithm of Har-Peled and Mazumdar~\cite{HM04}
for approximating $p$-median clustering in Euclidean space.
Their algorithm runs in time roughly
$O(n + \operatorname{exp}(\veps^{-d}) (p\cdot \log n)^{O(1)})$,
where $d$ is the dimension in Euclidean space
(in a scenario without preprocessing).
In order to give a flavor of our preprocessing model,
we focus on the case of small $p$ and employ an abridged version of their
algorithm, with runtime that grows exponentially with $p$.
We note that following their techniques more closely may possibly
reduce the runtime, like eliminating the exponential dependence on $p$.

\begin{proof}[Proof (Sketch)]
At a high level,
the algorithm of Har-Peled and Mazumdar~\cite{HM04} works as follows.
First, construct a set $A$ of $\hat p:=p \cdot \log ^{O(1)} n$ centers
that provides a constant factor approximation of the $p$-median
(formally, it is a bicriteria approximation, since $\hat p>p$).
Next, construct a core-set $S$ by building an exponential grid
(as defined below) around each of the centers in $A$,
and mapping each point in $Q$ to its (approximate) closest grid point,
using near neighbor search.
The size of the core-set is roughly $|S|\le \veps^{-d} |A|\log n$
(but of course these points have weights that add up to $n$).
This means that every solution to the $p$-median problem on $S$ is
a good approximation for the $p$-median problem on $Q$.
Finally, construct another set of exponential grids around each of the points
in $S$ to obtain a centroid set $D$,
i.e., set of potential centers in the ambient (Euclidean) space,
of size roughly $|D| \le \veps^{-d}|S|^{O(1)} \le \veps^{-2d} (p\cdot \log n)^{O(1)}$.
Finally, use a variant of the dynamic programming algorithm of
Kolliopoulos and Rao~\cite{KR07} to quickly compute a near-optimal
$p$-median of $S$ among the potential centers $D$.

This algorithm of Har-Peled and Mazumdar~\cite{HM04} carries over to our scenario (possibly using some different black-box data structures, for example to solve nearest neighbor search), except for the following two main ingredients.
The first is the construction of the exponential grid, which strongly relies on being in Euclidean space, and the ability to define points in ambient space, which we do not enjoy in doubling dimension metrics.
The second is the dynamic programming solution of Kolliopoulos and Rao~\cite{KR07}, which also exploits the Euclidean space structure.
We solve the exponential grid using $T$, as shown below, and skip the use of dynamic programming by performing a brute-force search over all size $p$ subsets (of the centroid set).
It is plausible that our runtime can be improved by adapting the solution of
Kolliopoulos and Rao~\cite{KR07} to work in our case as well,
and we leave this for future work.

\paragraph{Exponential grid.}
The exponential grid of Har-Peled and Mazumdar~\cite{HM04} around a point $r$ with length parameter $R>0$ roughly works as follows.
They build $O(\log n)$ axis-parallel squares,
where the $j^{th}$ square has side length of $R2^j$ and
is partitioned into sub-squares (i.e., a grid) of side length $O(\veps R2^j/d)$,
the idea being that areas closer to the point $r$ have smaller cell size, while areas further away have larger cell size. This construction does not carry over to doubling dimension metrics as we cannot define grid points in ambient space. However, we make use of $T$ to provide a set with similar properties.

We provide a sketch of the idea in order to ease presentation, but point out that some of our constants can to be refined.
Given $r$, we use a weighted level ancestor query~\cite{FM96,KL07} to locate its ancestor $r_{\log R}\in Y_{\log R}$ in $T$. A descendants search, using Lemma~\ref{lemma:descendant_search} starting from $r_{\log R}$ with refinement constant $\epsilon$ will give us a good resolution for points that are roughly at most distance $R$ away from $r$. Let $r_{j+\log R}\in Y_{j+\log R}$ be an ancestor of $r$ in $T$, for $0\leq j \leq O(\log n)$. We perform a descendants search using Lemma~\ref{lemma:descendant_search} starting from each such $r_{j+\log R}$, with refinement constant $\veps$. Notice that for a specific $j$, for points within distance $2^{j+\log R} = 2^j R$ from $r$,
the descendant search starting from $r_{j+\log R}$ reaches a set of points which are in $Y_{j+\log R - \log (1/\veps)}$ which is similar to the resolution obtained from the exponential grid in Euclidean space.

The union of all of the points seen during all of the descendants searches on all $O(\log n)$ levels provides a set of size $\veps^{-O(\ddim)}\log n$,
which gives us (i.e., in doubling dimension metrics)
the same properties as the exponential grid does in a Euclidean space.
Thus we obtain a core-set $S$ and centroid set $D$ both of size at most
$\veps^{-O(\ddim)}(p\cdot \log n)^{O(1)}$.
Finally, perform an exhaustive search through all subsets of size $p$ of the centroid set $D$ and compute the cost of each such set, which takes total time
$\binom{|D|}{p}\cdot O(|S|p) \leq \veps ^{-O(p\cdot \ddim)}(p\cdot \log n)^{O(p)}$.
Notice that a weighted level ancestor query is performed for each of the points in the core-set $S$, which increases the runtime by
$\veps^{-O(\ddim)}(p\cdot \log n)^{O(1)} \cdot \log \log \log \Delta$.
\end{proof}

\
{\small
\bibliographystyle{alphainit}
\bibliography{robi}
}

\end{document}